\theoremstyle{definition}
\newtheorem{definition}{Definition}
\newtheorem{theorem}{Theorem}
\newtheorem{lemma}{Lemma}
\newtheorem{example}{Example}
\newcommand{\cx}{{\cal X}}
\newcommand{\cy}{{\cal Y}}
\newcommand{\cs}{{\cal S}}
\newcommand{\iid}{\textrm{i.i.d.\ }}
\newcommand{\T}{{\cal T}_{\epsilon}^{(n)}}
\newcommand{\tS}{{V}}
\newcommand{\E}[1]{{\mathbb E}\Big[{#1}\Big]}
\newcommand{\D}[1]{{\mathbb D}\big({#1}\big)}
\newcommand{\blambda}{{\bm \lambda}}
\newcommand{\bmu}{{\bm \mu}}
\newcommand{\balpha}{{\bm \alpha}}
\newcommand{\bq}{{\bf q}}
\newcommand{\bQ}{{\bf Q}}
\newcommand{\bx}{{\bf x}}
\newcommand{\bp}{{\bf p}}
\begin{document}

\title{\mbox{}\\ Channel Coding and Source Coding With Increased Partial Side Information}
\author{Avihay Shirazi, Uria Basher and Haim Permuter}
\let\thefootnote\relax\footnotetext{Avihay Shirazi, Uria Basher and Haim Permuter are with the Department of Electrical and Computer Engineering at the Ben Gurion University of the Negev, Beer Sheva, Israel. Emails: avihays@bgu.ac.il, basher@bgu.ac.il, haimp@bgu.ac.il
The material in this paper was presented in part at the Allerton Conference on Communication,
Control, and Computing, September  2010}
%
\pagestyle{plain}
\setcounter{page}{1}
\pagenumbering{arabic}
\maketitle
\begin{abstract}
 Let $(S_{1,i},S_{2,i})\sim {\rm i.i.d}\  p(s_1,s_2)$, $i=1,2,\dots$ be a memoryless, correlated partial side information sequence. In this work we study channel coding and source coding problems where the partial side information $(S_1, S_2)$ is available at the encoder and the decoder, respectively, and, additionally, either the encoder's or the decoder's side information is increased by a limited-rate description of the other's partial side information. We derive six special cases of channel coding and source coding problems and we characterize the capacity and the rate-distortion functions for the different cases. We present a duality between the channel capacity and the rate-distortion cases we study.
In order to find numerical solutions for our channel capacity and rate-distortion problems, we use the Blahut-Arimoto algorithm and convex optimization tools. As a byproduct of our work, we found
a tight lower bound on the Wyner-Ziv solution by formulating its Lagrange dual as a geometric program.
Previous results in the literature provide a geometric programming formulation that is only a lower bound, but not necessarily tight.
Finally, we provide several examples corresponding to the channel capacity and the rate-distortion cases we presented.
\end{abstract}
\begin{center} \small
  {\bf Index Terms}
\end{center}
{\small Blahut-Arimoto algorithm, channel capacity, channel coding, convex optimization, duality, Gelfand-Pinsker channel coding, geometric programming, partial side information, rate-distortion, source coding, Wyner-Ziv source coding.}
\section{Introduction}
In this paper we investigate point-to-point channel models and rate-distortion problem models where both users have different and correlated partial side information and where, in addition, a rate-limited description of one of the user's side information is delivered to the other user. We then show the duality between the channel models and the rate-distortion models we investigate. In the process of investigating the rate-distortion problems, we found a tight lower bound on the rate-distortion of the Wyner-Ziv\cite{1055508} problem. We show here that it is possible to write the Lagrange dual of the Wyner-Ziv rate-distortion function as a geometric program. Then, we show that the optimal solution of this geometric program is the correct solution of the Wyner-Ziv problem.

For the convenience of the reader, we refer to the state information as the side information, to the partial side information that is available to the encoder as the encoder's side information (ESI) and to the partial side information that is available to the decoder as the decoder's side information (DSI). To the rate-limited description of the other user's side information we refer as the increase in the side information. For example, if the decoder is informed with its DSI and, in addition, with a rate-limited description of the ESI, then we would say that the decoder is informed with {\it increased} DSI.

To make the motivation for this paper clear, let us look at a simple example, as depicted in Figure~\ref{fig:Partial_SI}. Two remote users, User $1$ - the encoder and User $2$ - the decoder, want to communicate between them over a channel that is being interrupted by two interrupters, Interrupter $1$ and Interrupter $2$. We allow the interruptions $S_1$ and $S_2$ generated by the interrupters to be correlated, i.e., $(S_1,S_2) \sim p(s_1,s_2)$. Assume that Interrupter $1$ is located in close proximity to User $1$ and can fully describe its future interruption, $S_1$, to User $1$ and that Interrupter $2$ is located in close proximity to User $2$ and can also fully describe its future interruption, $S_2$, to user $2$. In addition, assume that Interrupter $1$ can increase the side information of User $2$ with rate-limited information about its interruption. In these circumstances, we  pose the question; what is the capacity of the channel between User $1$ and User $2$? We extensively discuss the answer to this question in the forthcoming sections.

\begin{figure}[t]
  \centering
  \small
  \psfrag{e}{Encoder}
  \psfrag{d}{Decoder}
  \psfrag{i}{Interrupter $1$}
  \psfrag{j}{Interrupter $2$}
  \psfrag{r}{Rate}
  \psfrag{c}{Channel}
  \psfrag{p}{$p(y|x,s_1,s_2)$}
  \psfrag{w}{$W$}
  \psfrag{u}{$\hat W$}
  \psfrag{s}{$\begin{array}{c}\mbox{ESI} \\ S_1 \end{array}$}
  \psfrag{z}{$\begin{array}{c}\mbox{DSI} \\ S_2 \end{array}$}
  \includegraphics[scale = 0.9]{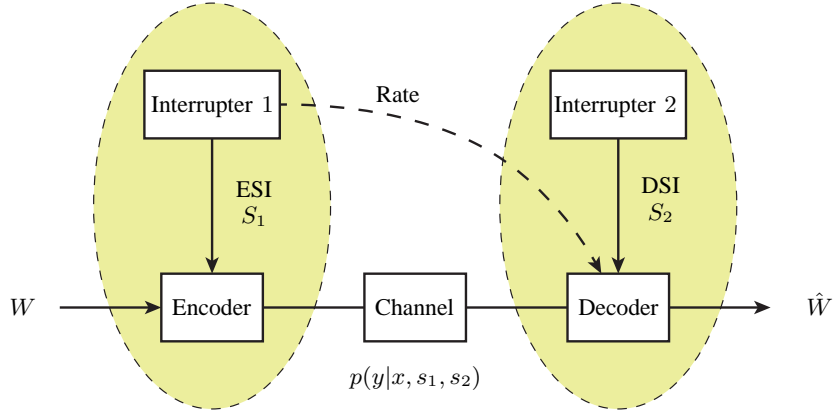}
  \caption{{\it Increased partial side information example}. The encoder wants to send a message to the decoder over an interrupted channel in the presence of side information. The encoder is provided with the ESI and the decoder is provided with increased DSI. i.e., the decoder is informed with a rate-limited description of the ESI in addition to the DSI.}
  \label{fig:Partial_SI}
\end{figure}

\subsection{Channel capacity in the presence of state information}
\label{sec:chann-capac-pres}
The three problems of channel capacity in the presence of state information that we adress in this paper are presented in Figure~\ref{fig:Capacity}. We make the assumption that the encoder is informed with partial state information, the ESI ($S_1$), and the decoder is informed with different, but correlated, partial state information, which is the DSI ($S_2$). The channel capacity problem cases are:
\begin{itemize}
\item Case 1: The decoder is provided with increased DSI; i.e., in addition to the DSI, the decoder is also informed with a rate-limited description of the ESI.
\item Case 2: The encoder is informed with increased ESI.
\item Case 2$_C$: Similar to Case 2, with the exception that the ESI is known to the encoder in a causal manner. Notice that the rate-limited description of the DSI is still known to the encoder noncausally.
\end{itemize}

We will subsequently provide the capacity of Case 1 and Case 2$_C$ and caracterize the lower and the upper bounds on Case 2, which differ only by a Markon relation.
The results for the first case under discussion, Case 1, can be concluded from Steinberg's problem~\cite{4608994}. In \cite{4608994}, Steinberg introduced and solved the case in which the encoder is fully informed with the ESI and the decoder is informed with a rate-limited description of the ESI. Therefore, the innovation in Case 1 is that the decoder is also informed with the DSI. The solution for this problem can be derived by considering the DSI to be a part of the channel's output in Steinberg's solution. In the proof of the converse in his paper, Steinberg uses a new technique that involves using the Csisz\'ar sum twice in order to get to a single-letter bound on the rate. We shall use this technique to present a duality in the converse of the Gelfand-Pinsker~\cite{citeulike:437050} and the Wyner-Ziv~\cite{1055508} problems, which, by themselves, constitute the basis for most of the results in this paper. In~\cite{1055508}, Wyner and Ziv present the rate-distortion function for data compression problems with side information at the decoder. We make use of their coding scheme in the achievability proof of the lower bound of Case 2 for describing the ESI with a limited rate at the decoder.  In~\cite{citeulike:437050}, Gelfand and Pinsker present the capacity for a channel with noncausal CSI at the encoder. We use their coding scheme in the achievability proof of Case1 and the lower bound of Case 2 for transmitting information over a channel where the ESI is the state information at the encoder. Therefore, we combine in our problems the Gelfand-Pinsker and the Wyner-Ziv problems. Another related paper is~\cite{1662378}, in which Shannon presented the capacity of a channel with causal CSI at the transmitter. We make use of Shannon's result in the achievability proof of Case 2$_C$ for communicating over a channel with causal ESI at the encoder. We also use Shannon's strategies~\cite{1662378}, for developing an iterative algorithm to calculate the capacity of the cases we present in this paper.

Some related papers that can be found in the literature are mentioned herein. Heegard and El Gamal~\cite{DBLP:journals/tit/HeegardG83} presented a model of a state-dependent channel, where the transmitter is informed with the CSI at a rate limited to $R_e$ and the receiver is informed with the CSI at a rate limited to $R_d$. This result relates to Case 1, Case 2 and Case 2$_C$ since we consider the rate-limited description of the ESI or the DSI as  side information known at both the encoder and the decoder. Cover and Chiang~\cite{Cover:2006:DCC:2263239.2266915} extended the Gelfand-Pinsker problem and the Wyner-Ziv problem to the case where both the encoder and the decoder are provided with different, but correlated, partial side information. They also showed a duality between the two cases, which is a topic that will be discussed later in this paper. Rozenzweig, Steinberg and Shamai~\cite{1424319}  and Cemal and Steinberg~\cite{4385766} studied channels with partial state information at the transmitter.
A detailed subject review on channel coding with state information was given by Keshet, Steinberg and Merhav in~\cite{1454716}.

In addition to these three cases, we also present a more general case, where the encoder is informed with increased ESI and the decoder is informed with increased DSI; i.e., there is a rate-limited description of the ESI at the decoder and there is a rate-limited description of the DSI at the encoder. We provide an achievability scheme that bounds the capacity for this case from below, however, this bound does not coincide with the capacity and, therefore, this problem remains open.
\subsection{Rate-distortion with side information}
\label{sec:rate-distortion-with}
In this paper we adress three problems of rate-distortion with side information, as presented in Figure~\ref{fig:Source_Coding}. In common with the channel capacity problems, we assume that the encoder is informed with the ESI ($S_1$) and the decoder is informed with the DSI ($S_2$), where the source, $X$, the ESI and the DSI are correlated. The rate-distortion problem cases we investigate in this paper are:
 \begin{itemize}
\item Case 1: The decoder is provided with increased DSI.
\item Case 1$_C$: Similar to Case 1, with the exception that the ESI is known to the encoder in a causal manner. The rate-limited description of the ESI is still known to the decoder noncausally.
\item Case 2: The encoder is informed with increased ESI.
\end{itemize}
Case 2 is a special case of Kaspi's~\cite{DBLP:journals/tit/Kaspi85} two-way source coding for $K=1$. In~\cite{DBLP:journals/tit/Kaspi85}, Kaspi introduced a model of multistage communication between two users, where each user may transmit up to $K$ messages to the other user, dependent on the source and the previous received messages. For Case 2, we can consider sending the rate-limited description of the DSI as the first transmission and then, sending a function of the source, the ESI and the rate-limited description of the DSI as the second transmission. This fits into Kaspi's problem for $K=1$ and thus Kaspi's theorem also applies to Case 2. Kaspi's problem was later extended by Permuter, Steinberg and Weissman~\cite{5466529} to the case where a common rate-limited side information message is being conveyed to both users. Another strongly related paper is Wyner and Ziv's paper~\cite{1055508}. In the achievability of Case 1 we use the Wyner-Ziv coding scheme twice; once for describing the ESI at the decoder where the DSI is the side information and once for the main source and the ESI where the DSI is the side information. The rate-limited description of the ESI is the side information provided to both the encoder and the decoder. In~\cite{Cover:2006:DCC:2263239.2266915} there is an extension to the Wyner-Ziv problem to the case where both the encoder and the decoder are provided with correlated partial side information.  Weissman and El Gamal~\cite[Section 2]{DBLP:journals/tit/WeissmanG06} and Weissman and Merhav~\cite{DBLP:journals/tit/WeissmanM05} presented source coding with causal side information at the decoder, which relates to Case 1$_C$.

%
As with the channel capacity, we present a bound on the general case of rate-distortion with two-sided increased partial side information. In  this problem setup the encoder is informed with a rate-limited description of the DSI in addition to the ESI and the decoder is informed with a rate-limited description of the ESI in addition to the DSI. We present an achievability scheme that bounds the optimal rate from above, however, this bound does not coincide with the optimal rate and, therefore, this problem remains open.

\subsection{Duality}
\label{sec:duality}
Within the scope of this work we point out a duality relation between the channel capacity and the rate-distortion cases we discuss. The operational duality between channel coding and source coding was first mentioned by Shannon \cite{Shannon59}.  In~\cite{DBLP:journals/tit/PradhanCR03}, Pradhan, Chou and Ramchandran studied the functional duality between some cases of channel coding and source coding, including the duality between the Gelfand-Pinsker problem and the Wyner-Ziv problem. This duality was also described by Cover and Chiang in~\cite{Cover:2006:DCC:2263239.2266915}, where they provided a transformation that makes  duality between channel coding and source coding with two-sided state information apparent.
Zamir, Shamai and Erez~\cite{Zamir:2006:NLC:2263239.2266893} and Su, Eggers and Girod~\cite{911306} utilized the duality between channel coding and source coding with side information to develop coding schemes for the dual problems.

In our paper we show that the channel capacity cases and the rate-distortion cases we discuss are operational duals in a way that strongly relates to the Wyner-Ziv and Gelfand-Pinsker duality. We also provide a transformation scheme that shows this duality in a clear way. Moreover, we show a duality relation between Kaspi's problem and Steinberg's~\cite{4608994} problem by showing a duality relation between Case 2 source coding and Case 1 channel coding. Also, we show duality in the converse parts of the Gelfand-Pinsker and the Wyner-Ziv problems. We show that both converse parts can be proven in a perfectly dual way by using the Csisz\'ar sum twice.

\subsection{Computational algorithms}
\label{sec:geom-progr-1}
Calculating channel capacity and rate-distortion problems, in general, and the Gelfand-Pinsker and the Wyner-Ziv problems, in particular, is not straightforward.
Blahut~\cite{Blahut72computationof} and Arimoto~\cite{Arimoto1972a} suggested an iterative algorithm (to be referred to as the B-A algorithm) for numerically computing the channel capacity and the rate-distortion problems. Willems~\cite{Willems1983} and Dupuis, Yu and Willems~\cite{1365218} presented iterative algorithms based on the B-A algorithm for computing the Gelfand-Pinsker and the Wyner-Ziv functions. We use principles from Willems' algorithms to develop an algorithm to numerically calculate the capacity for the cases we presented. More B-A based iterative algorithms for computing channel capacity and rate-distortion with side information can be found in~\cite{DBLP:journals/tit/ChengSX05} and in~\cite{5205855}. A B-A based algorithm for maximizing the directed-information can be found in~\cite{DBLP:journals/tit/NaissP13}.

Another approach for solving the Wyner-Ziv rate-distortion problem is the geometric programming approach. This approach was presented by Chiang and Boyd in their paper~\cite{Chiang04geometricprogramming}, in which they described methods, based on convex optimization and geometric programming, to calculate the channel capacity of the Gelfand-Pinsker channel and to calculate a lower bound on the rate-distortion of the Wyner-Ziv problem. Chiang and Boyd considered the Lagrange-dual of the Wyner-Ziv problem and they formulated a geometric program that constitutes a lower bound on the rate-distortion. However, their lower bound is not tight because they implicitly used the assumption that the derivative of the Lagrangian is zero for each value of the side information individually, while the original expression is only restricted to zero when averaging over the side information. During our present work, we found a tight lower bound on the rate-distortion of the Wyner-Ziv problem. The tight bound is obtained by considering a primal variable in the dual problem. A similar trick has been used recently by Naiss and Permuter~\cite{DBLP:journals/tit/NaissP13a} for transforming the rate-distortion with feed-forward problem into a geometric program.

\subsection{Organization of the paper and main contributions}
\label{sec:organ-paper-main}

To summarize, the main contributions of this paper are 1) we give single-letter characterizations of the capacity and the rate-distortion functions of new channel and source coding problems with increased partial side information, 2) we show a duality relationship between the channel capacity cases and the rate-distortion cases that we discuss, 3) we provide a tight lower bound on the Wyner-Ziv solution using convex optimization and geometric programming tools, 4) we provide a B-A based algorithm to solve the channel capacity problems we describe, 5) we show a duality between the Gelfand-Pinsker capacity converse and the Wyner-Ziv rate-distortion converse.

The reminder of this paper is organized as follows. In Section~\ref{sec:Setting} we introduce some notations for this paper and provide the settings of three channel coding and three source coding cases with increased partial side information. In Section~\ref{sec:main-results} we present the main results for coding with increased partial side information; we provide the capacity and the rate-distortion for the cases we introduced in Section~\ref{sec:Setting} and we point out the duality between the cases we examined. Section~\ref{sec:geom-progr} contains the main results for the geometric programming; we formulate a geometric program that is a tight lower bound on the Wyner-Ziv solution. Section~\ref{sec:examples} contains illuminating examples for the cases discussed in the paper. In Section~\ref{sec:algorithm} we describe the B-A based algorithm we used in order to solve the capacity examples. We conclude the paper in Section~\ref{sec:open_problems} and we highlight two open problems; channel capacity and rate-distortion with two-sided rate-limited partial side information. Appendix~\ref{sec:dual-conv-gelf} contains the duality derivation for the converse proofs of the Gelfand-Pinsker and the Wyner-Ziv problems and Appendices~\ref{sec:proof-CC} through~\ref{sec:proofs-algorithm} contain the proofs for our theorems and lemmas.

%
\begin{figure*}[t]
  \begin{minipage}[b]{0.48\linewidth}
    \centering
    \psfrag{N}{\bf{Noncausal:}}
    \psfrag{U}{\bf{Causal:}}
    \psfrag{P}{\bf Open Problem}
    \footnotesize
    \psfrag{l}{
      \begin{tabular}[c]{c}
        {Case 1}\\
        {$C_1(R')$}
      \end{tabular} }
    \psfrag{m}{
      \begin{tabular}[c]{c}
        {Case 2}\\
        {$C_2(R')$}
      \end{tabular} }
    \psfrag{p}{
      \begin{tabular}[c]{c}
        {Case 12}\\
        {$C_{12}(R'_1,R'_2)$}
        \end{tabular} }
    \psfrag{o}{
      \begin{tabular}[c]{c}
        {Case 2$_C$}\\
        {$C_{2,C}(R')$}
      \end{tabular} }
    \scriptsize
    \psfrag{A}{\ \ Encoder}
    \psfrag{B}{\ \ Channel}
    \psfrag{D}{\ \ Decoder}
    \psfrag{W}{$W$}
    \psfrag{w}{$\hat{W}$}
    \psfrag{X}{$X^n$}
    \psfrag{x}{$X_i$}
    \psfrag{Y}{$Y^n$}
    \psfrag{y}{$Y_i$}
    \psfrag{T}{$S_1^n$}
    \psfrag{t}{$S_{1,i}$}
    \psfrag{R}{$S_2^n$}
    \psfrag{r}{$S_{2,i}$}
    \psfrag{J}{$R'$}
    \psfrag{K}{$R'$}
    \psfrag{j}{$R'_1$}
    \psfrag{k}{$R'_2$}
    \includegraphics[scale = 0.4]{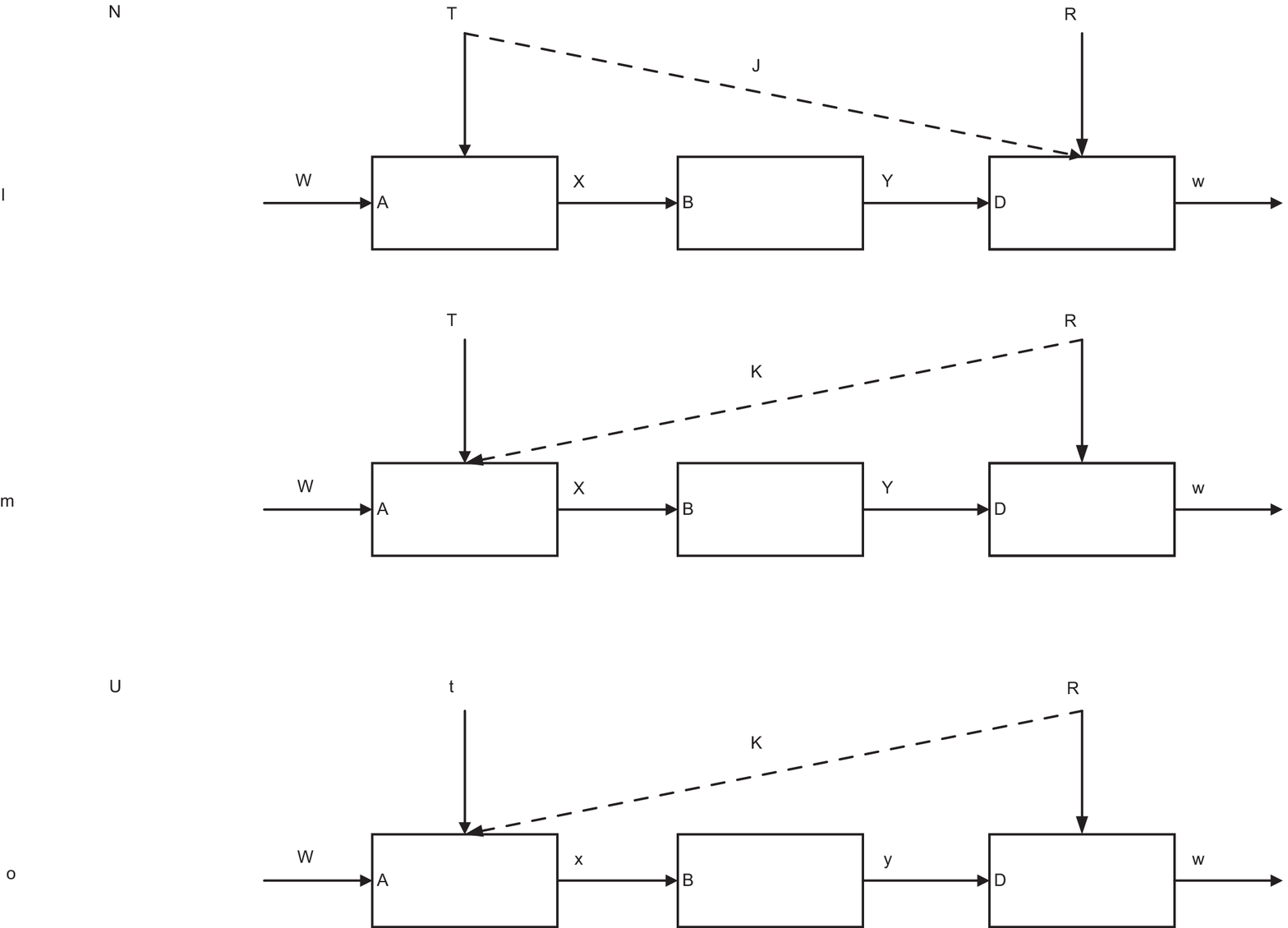}
    \caption{Channel coding with state information. Case 1: Rate-limited ESI at  the decoder. Case 2: Rate-limited DSI at the encoder. Case 2$_C$: Causal ESI and rate-limited DSI at the encoder.}
      \label{fig:Capacity}
  \end{minipage}
  \hspace{0.5cm}
  \begin{minipage}[b]{0.47\linewidth}
    \centering
    \psfrag{N}{\bf{Noncausal:}}
    \psfrag{U}{\bf{Causal:}}
    \psfrag{P}{\bf Open Problem}
    \footnotesize
    \psfrag{l}{
      \begin{tabular}[c]{c}
        {\ \ Case 2}\\
        {$R_2(R',D)$}
      \end{tabular} }
    \psfrag{m}{
      \begin{tabular}[c]{c}
        {\ \ Case 1}\\
        {$R_1(R',D)$}
      \end{tabular} }
    \psfrag{o}{
      \begin{tabular}[c]{c}
        {Case 1$_C$}\\
        {$R_{1,C} (R',D)$}
      \end{tabular} }
    \psfrag{p}{
      \begin{tabular}[c]{c}
        {Case 12}\\
        {$R_{12}(R'_1, R'_2)$}
      \end{tabular} }
    \scriptsize
    \psfrag{A}{\ \ Encoder}
    \psfrag{B}{\ \ Decoder}
    \psfrag{X}{$X^n$}
    \psfrag{x}{$\hat{X}^n$}
    \psfrag{s}{$\hat{X}_i$}
    \psfrag{E}{$S_1^n$}
    \psfrag{e}{$S_{1,i}$}
    \psfrag{R}{$S_2^n$}
    \psfrag{r}{$S_{2,i}$}
    \psfrag{J}{$R'$}
    \psfrag{K}{$R'$}
    \psfrag{j}{$R'_1$}
    \psfrag{k}{$R'_2$}
    \includegraphics[scale = 0.4]{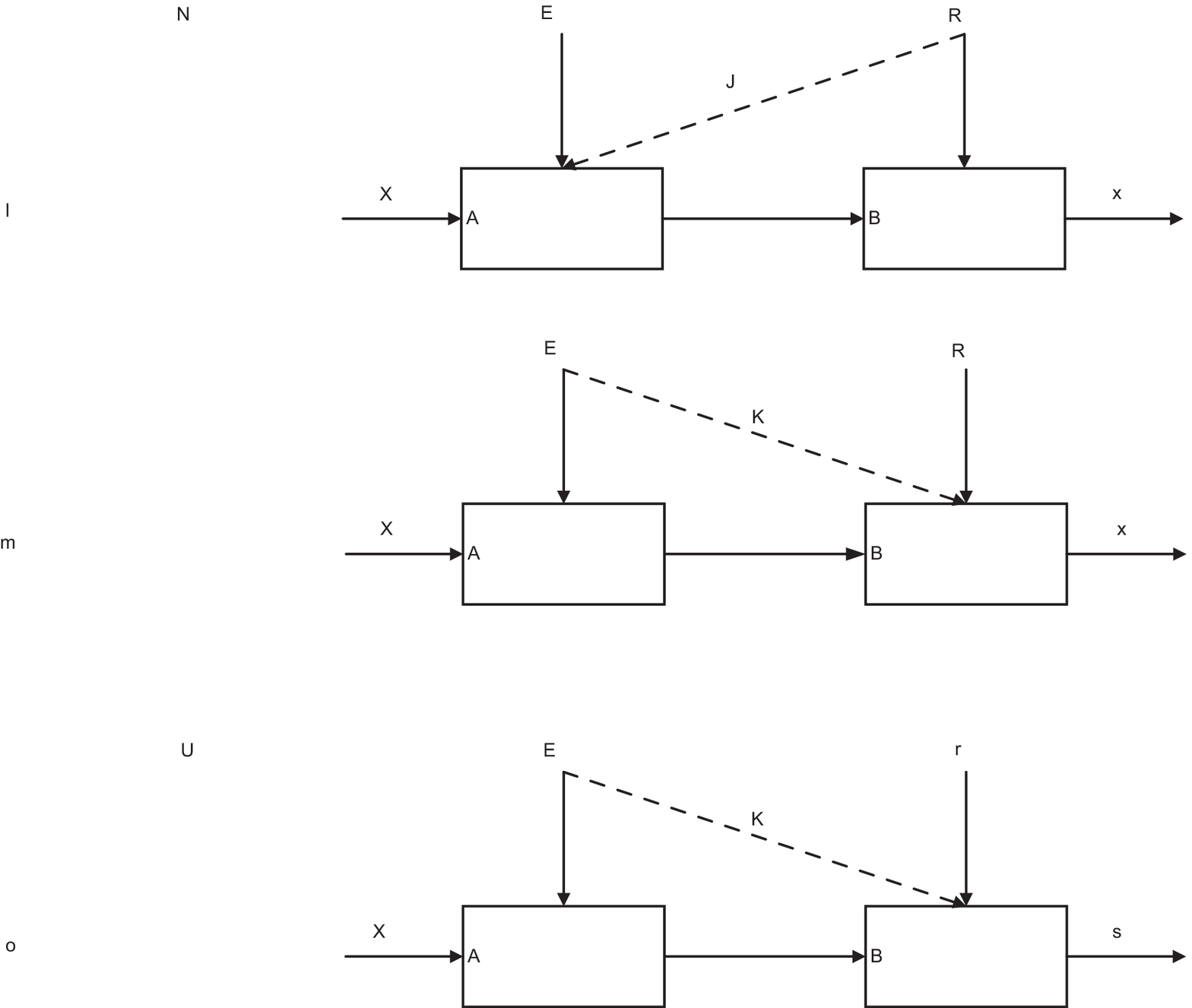}
    \caption{Source coding with side information.  Case 2: Rate-limited DSI at the encoder. Case 1: Rate-limited ESI at the decoder. Case 1$_C$: Causal DSI and rate-limited ESI at the decoder. The cases are presented in this order to allow each source coding case to be paralel to the dual channel coding case.
    \label{fig:Source_Coding}}
  \end{minipage}
\end{figure*}
%

\section{Problem Setting and Definitions} \label{sec:Setting}
In this section we describe and formally define three cases of channel coding problems and three cases of source coding problems. All six cases are presented in Figures~\ref{fig:Capacity} and~\ref{fig:Source_Coding}.

{\flushleft {\bf Notations.}} We use subscripts and superscripts to denote vectors in the following ways: $x^j = (x_1,\dots, x_j)$ and $x_i^j = (x_i,\dots, x_j)$ for $i \leq j$. Moreover, we use the lower case $x$ to denote sample value, the upper case $X$ to denote a random variable, the calligraphic letter $\cal X$ to denote the alphabet of $X$, $|\cal X|$ to denote the cardinality of the alphabet of $X$ and $p(x)$ to denote the probability $\Pr \{X=x\}$. We use the notation $\T(X)$ to denote the strongly typical set of the random variable $X$, as defined in \cite[Chapter 11]{cove_thom_91}.
\subsection{Definitions and problem formulation - channel coding with state information}
\label{sec:Def:CC}
\begin{definition}  
  A {\it discrete channel} is defined by the set $\{\cx, \cs_1, \cs_2, p(s_1,s_2), p(y|x,s_{1},s_2), \cy\}$. The channel's input sequence, $\{X_i \in \cx, i=1,2,\dots\}$, the ESI  sequence, $\{S_{1,i} \in \cs_1, i = 1,2,\dots\}$, the DSI sequence, $\{ S_{2,i} \in \cs_2, i=1,2,\dots\}$, and the channel's output  sequence, $\{Y_i \in \cy, i=1,2,\dots\}$, are discrete random variables drawn from the finite alphabets $\cx, \cs_1, \cs_2, \cy$, respectively. Denote the message and the message space as $W \in  \{1,2,\dots, 2^{nR}\}$ and let $\hat W$ be the reconstruction of the message $W$. The random variables $(S_{1,i},  S_{2,i})$ are i.i.d. $\sim p(s_1, s_2)$ and the channel is memoryless, i.e., at time $i$, the output, $Y_i$, has a conditional distribution of
  \begin{eqnarray}
    p(y_i|x^i, s_1^i, s_2^i, y^{i-1}) = p(y_i|x_i, s_{1,i}, s_{2,i}).
  \end{eqnarray}
\end{definition}

In the remainder of the paper, unless specifically mentioned otherwise, we refer to the ESI and the DSI as if they are known to the encoder and the decoder, respectively, in a noncausal manner. Also, as noted before, we use the term {\it increased} side information to indicate that the user's side information also includes a rate-limited description of the other user's partial side information. For example, when the decoder is informed with the DSI and with a rate-limited description of the ESI we would say that the decoder is informed with {\it increased} DSI.

{\flushleft {\bf Problem Formulation. }} For the channel $p(y|x,s_1,s_2)$, consider the following channel coding problem cases:
\begin{itemize}
\item{\bf Case 1}: The encoder is informed with ESI and the decoder is informed with increased DSI.
\item {\bf Case 2}: The encoder is informed with increased ESI and  the decoder is informed with DSI.
\item {\bf Case 2}$_C$: The encoder is informed with increased causal ESI ($S_1^i$ at time $i$) and the decoder is informed with DSI. This case is the same as Case 2, except for the causal ESI.
\end{itemize}
All cases are presented in Figure \ref{fig:Capacity}.

%
\begin{definition}
  A $(n,2^{nR}, 2^{nR'_{j}})$ {\it code}, $\{j \in 1,2\}$, for a channel with increased partial side information, as illustrated in Figure~\ref{fig:Capacity}, consists of two encoders and one decoder. The encoders are $f$ and $f_v$, where $f$ is the encoder for the channel's input and $f_v$ is the encoder for the side information, and the decoder is $g$, as described for each case:\\
  {{\it Case 1}: Two encoders
    \begin{align*}
      &f_v:\quad  \mathcal{S}_1^n \mapsto \{1,2,\dots, 2^{nR'_{1}}\},\\
      &f:\quad \{1,2,\dots, 2^{nR}\} \times \mathcal{S}_1^n \times \{1,2,\dots,2^{nR'_{1}}\} \mapsto \mathcal{X}^n,
    \end{align*}
    and a decoder
    \begin{align}
      &g:\quad \mathcal{Y}^n \times \mathcal{S}_2^n \times \{1,2,\dots, 2^{nR'_{1}}\} \mapsto \{1,2, \dots, 2^{nR}\}.
    \end{align}}
  {{\it Case 2}: Two encoders
    \begin{align*}
      &f_v:\quad \mathcal{S}_2^n  \mapsto \{1,2,\dots, 2^{nR'_{2}}\},\\
      &f:\quad \{1,2, \dots, 2^{nR}\} \times \mathcal{S}_1^n \times \{1,2,\dots, 2^{nR'_{2}}\}  \rightarrow \mathcal{X}^n,
    \end{align*}
    and a decoder
    \begin{align}
      &g:\quad  \mathcal{Y}^n \times \mathcal{S}_2^n \times  \{1,2,\dots, 2^{nR'_{2}}\} \mapsto \{1,2,\dots, 2^{nR}\}.
    \end{align}}
  {{\it Case 2$_C$}: Two encoders
    \begin{align*}
      &f_v:\quad \mathcal{S}_2^n     \mapsto \{1,2,\dots, 2^{nR'_{2}}\},\\
      &f_i: \quad \{1,2,\dots, 2^{nR}\}     \times \mathcal{S}_1^i \times \{1,2,\dots,2^{nR'_{2}}\} \mapsto \mathcal{X}_i,
    \end{align*}
    and a decoder
    \begin{align}
      &g:\quad  \mathcal{Y}^n \times \mathcal{S}_2^n \times \{1,2,\dots,2^{nR'_{2}}\} \mapsto \{1,2,\dots,2^{nR}\}.
    \end{align}}
  %
  %
  The {\it average probability of error}, $P_e^{(n)}$, for a $(2^{nR}, 2^{nR'_{j}}, n)$ code is defined as
  \begin{eqnarray} P_e^{(n)} = \frac{1}{2^{nR}} \sum_{w=1}^{2^{nR}} \Pr\Big\{
    \hat{W} \neq W\ | W=w\Big\},
  \end{eqnarray}
  where the index $W$ is chosen according to a uniform distribution over the set $\{1,2,\dots,2^{nR}\}$.
  %
  A rate pair $(R, R')$ is said to be {\it achievable} if   there exists a sequence of $(2^{nR}, 2^{nR'}, n)$ codes such that the average probability of error $P_e^{(n)} \rightarrow 0$ as $n \rightarrow \infty$.
\end{definition}
\begin{definition}
  The {\it capacity} of the channel, $C(R')$, is the supremum of all $R$ such that the rate pair $(R,R')$ is achievable.
\end{definition}
\subsection{Definitions and problem formulation - source coding with side information}
\label{sec:def:SC}
Throughout this article we use the common definitions of rate-distortion as presented in \cite{cove_thom_91}.
\begin{definition} \label{def:SC:problem_setting}
  The source sequence $\{X_i \in {\cal X}, i= 1,2,\dots\}$, the ESI sequence $\{S_{1,i} \in {\cal S}_1, i= 1,2,\dots\}$ and the DSI sequence $\{S_{2,i} \in  {\cal S}_2, i= 1,2,\dots\}$ are discrete random variables drawn from the finite alphabets ${\cal X}, {\cal S}_1$ and ${\cal S}_2$ respectively. The random variables $(X_i, S_{1,i}, S_{2,i})$ are i.i.d $\sim p(x,s_1,s_2)$. Let $\hat{{\cal X}}$ be the reconstruction alphabet and $d_x: {\cal X} \times \hat{{\cal X}}  \mapsto [0,\infty)$ be the distortion measure. The distortion between sequences is defined in the usual way:
  \begin{eqnarray}
    d(x^n,\hat{x}^n) = \frac{1}{n} \sum_{i=1}^n d(x_i, \hat{x}_i).
  \end{eqnarray}
\end{definition}
{\flushleft  {\bf Problem Formulation. }} For the source, $X$, the ESI, $S_1$, and the DSI, $S_2$, consider the following source coding problem cases:
\begin{itemize}
\item {\bf Case 1}: The encoder is informed with ESI and the decoder is informed with increased DSI.
\item {\bf Case 2}: The encoder is informed with increased ESI and the decoder is informed with DSI.
\item {\bf Case 1}$_C$: The encoder is informed with ESI and the decoder is informed with increased causal DSI ($S_2^i$ at time $i$). This case is the same as Case 1, except for the causal DSI.
\end{itemize}
All cases are presented in Figure \ref{fig:Source_Coding}.
%
\begin{definition}
  A $(n, 2^{nR}, 2^{nR'_{j}}, D)$ {\it code}, $\{j \in 1,2\}$, for the source $X$ with increased partial side information, as illustrated in Figure \ref{fig:Source_Coding}, consists of two encoders, one decoder and a distortion constraint. The encoders are $f$ and $f_v$, where $f$ is the encoder for the source and $f_v$ is the encoder for the side information, and the decoder is $g$, as described for each case:\\*
  {{\it Case 1}: Two encoders
    \begin{align*}
      &f_v:\quad \mathcal{S}_1^n \mapsto \{1,2,\dots,2^{nR'_{1}}\},\\
      &f:\quad \mathcal{X}^n \times \mathcal{S}_1^n \times \{1,2,\dots,2^{nR'_{1}}\} \mapsto \{1,2,\dots,2^{nR}\},
    \end{align*}
    and a decoder
    \begin{align}
      &g:\quad \{1,2,\dots,2^{nR}\} \times \mathcal{S}_2^n \times \{1,2,\dots,2^{nR'_{1}}\} \mapsto \hat{\mathcal{X}}^n.
    \end{align}}
  {{\it Case 2}: Two encoders
    \begin{align*}
      &f_v:\quad \mathcal{S}_2^n  \mapsto \{1,2,\dots,2^{nR'_{2}}\},\\
      &f:\quad \mathcal{X}^n \times \mathcal{S}_1^n \times \{1,2,\dots,2^{nR'_{2}}\}  \mapsto \{1,2,\dots,2^{nR}\},
    \end{align*}
    and a decoder
    \begin{align}
      &g:\quad  \{1,2,\dots, 2^{nR}\} \times \mathcal{S}_2^n \times  \{1,2,\dots,2^{nR'_{2}}\} \mapsto \hat{\mathcal{X}}^n.
    \end{align}
  {\it Case 1$_C$}: Two encoders
    \begin{align*}
      &f_v:\quad \mathcal{S}_1^n     \mapsto \{1,2,\dots,2^{nR'_{1}}\},\\
      &f:\quad \mathcal{X}^n     \times \mathcal{S}_1^n \times \{1,2,\dots,2^{nR'_{1}}\} \mapsto \{1,2,\dots,2^{nR}\},
    \end{align*}
    and a decoder
    \begin{align}
      &g_i:\quad     \{1,2,\dots,2^{nR}\} \times \mathcal{S}_{2}^i \times \{1,2,\dots,2^{nR'_{1}}\} \mapsto \hat{\mathcal{X}}_i.
    \end{align}}
  The distortion constraint for all three cases is:
  \begin{eqnarray}
    \mathbb{E}\Big[\frac{1}{n} \sum_{i=1}^nd(X_i,\hat{X}_i)\Big] \leq D.
  \end{eqnarray}
  For a given distortion, $D$, and for any $\epsilon > 0$, the rate pair $(R,R')$ is said to be {\it achievable} if there exists a  $(n,2^{nR},2^{nR'},D+\epsilon)$ code for the rate-distortion problem.
\end{definition}
%
\begin{definition}
  For a given $R'$ and distortion $D$, the {\it operational rate} $R^*(R',D)$ is the infimum of all $R$, such that the rate pair $(R,R')$ is achievable.
\end{definition}
\section{Coding with Increased Partial Side Information - Main Results}
\label{sec:main-results}
In this section we present the main results of this paper. We will first present the results for the channel coding cases, then the main results for the source coding cases and, finally, we will present the duality between them.
\subsection{Channel coding with side information}
\label{sec:CC:main-results}
For a channel with two-sided state information as presented in Figure \ref{fig:Capacity}, where $(S_{1,i}, S_{2,i}) \sim p(s_1,s_2)$, the {\it capacity} is as follows
\begin{theorem}[The capacity for the cases in Figure \ref{fig:Capacity}]\label{theorem:CC} For the memoryless channel $p(y|x,s_1,s_2)$, where $S_1$ is the ESI and $S_2$ is the DSI and the side information $(S_{1,i}, S_{2,i}) \sim p(s_1,s_2)$,  the channel capacity is\\*
\\*
  {\it Case\ 1:} The encoder is informed with ESI and the decoder is informed with increased DSI,
  {\small
    \begin{align}
      C_1^* =& \max_{\substack{
          p(v_1|s_1) p(u|s_1,v_1) p(x|u,s_1,v_1)\\
          {\rm s.t.}\ \ R' \geq I(\tS_1;S_1)-I(V_1;Y,S_2)}}
      I(U;Y,S_2|\tS_1) -I(U;S_1|\tS_1).
    \end{align}}
  {\it Case\ 2:} The encoder is informed with increased ESI and the decoder is informed with DSI;
  \\
  Lower bounded by
  {\small
    \begin{align} \label{eq:CC_case2}
      C_2^{lb*}& = \max_{
        \substack{
          p(v_2|s_2) p(u|s_1,v_2) p(x|u,s_1,v_2)\\
          {\rm s.t.}\ \ R' \geq I(V_2;S_2|S_1)
        }}
      I(U;Y,S_2|V_2)-I(U;S_1|V_2).
    \end{align}}
  Upper bounded by
  {\small
    \begin{align}
      C_2^{ub1*}& = \max_{
        \substack{
          p(v_2|s_1,s_2) p(u|s_1,v_2) p(x|u,s_1,v_2)\\
          {\rm s.t.}\ \ R' \geq I(V_2;S_2|S_1)
        }}
      I(U;Y,S_2|V_2)-I(U;S_1|V_2)
    \end{align}}
  and by
  {\small
    \begin{align}
      C_2^{ub2*}& = \max_{
        \substack{
          p(v_2|s_2) p(u|s_1,s_2,v_2) p(x|u,s_1,v_2)\\
          {\rm s.t.}\ \ R' \geq I(V_2;S_2|S_1)
        }}
      I(U;Y,S_2|V_2)-I(U;S_1|V_2).
    \end{align}}
  {\it Case\ 2$_C$:} The encoder is informed with increased causal ESI ($S_1^i$ at time $i$) and the decoder is informed with DSI,
  {\small
    \begin{align}
      C_{2C}^*& = \max_{
        \substack{
          p(v_2|s_2) p(u|v_2) p(x|u,s_1,v_2)\\
          R' \geq I(V_2; S_2)
        }}
      I(U;Y,S_2|V_2).
    \end{align}}
  For case $j$,  $j\in\{1,2\}$, some joint distribution, $p(s_1,s_2,v_j,u,x,y)$, and $(U,V_j)$ being some auxiliary random variables with bounded cardinality.\\*

Section \ref{sec:proof-CC} contains the proof.
\end{theorem}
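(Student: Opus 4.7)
The plan is to prove the three cases by concatenating a Wyner--Ziv-style description of one user's partial state information with a Gelfand--Pinsker code (or, in Case $2_C$, a Shannon-strategy code), and to establish each converse via Fano's inequality combined with a double application of the Csisz\'ar sum identity in the spirit of Steinberg~\cite{4608994}. For achievability in Case 1, I would build a two-layer random code: a Wyner--Ziv codebook $\{v_1^n(j)\}$ of size $2^{nI(V_1;S_1)}$ drawn i.i.d.\ from $p(v_1)$ and partitioned into $2^{nR'}$ bins, whose index $M_v$ the encoder forwards on the rate-$R'$ link and the decoder recovers using $(Y^n,S_2^n)$ provided $R'\geq I(V_1;S_1)-I(V_1;Y,S_2)$, and on top of it a Gelfand--Pinsker code with state $S_1$ at the encoder, output $(Y,S_2)$ at the decoder, and $V_1$ as common side information, yielding message rate $I(U;Y,S_2|V_1)-I(U;S_1|V_1)$. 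A joint-typicality argument handles the simultaneous decoding of both layers. The Case 2 lower bound uses the same template with the description layer reversed: under the Markov chain $V_2-S_2-S_1$, Wyner--Ziv gives binning rate $I(V_2;S_2|S_1)$ since the encoder has $S_1^n$ as its own side information, and GP is layered on with $V_2$ shared. For Case $2_C$ I would replace GP by Shannon strategies, i.e., pick a strategy letter $U$ and set $X_i = x(U_i,S_{1,i},V_{2,i})$; because the encoder has no side information about $S_2$ beyond what $V_2$ carries, the description rate collapses to $I(V_2;S_2)$.

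For the converses, starting from Fano's inequality and the description bound $nR'\geq H(M_v)$, I plan to identify the auxiliaries $V_{1,i}=(M_v,Y^{i-1},S_2^{i-1})$ and $U_i=(W,V_{1,i},S_{1,i+1}^n)$ in Case 1, and apply the Csisz\'ar sum identity twice --- once when bounding $R'$ and once when bounding $R$ --- to convert the mutual-information sums involving $(Y^{i-1},S_2^{i-1},S_{1,i+1}^n)$ into single-letter form; a uniform time-sharing variable $T\sim\text{Unif}\{1,\dots,n\}$ then closes the bound. The Markov conditions $V_1\to S_1\to S_2$ and $U\to(S_1,V_1)\to X$ required by the theorem statement fall out of the memorylessness of $p(s_1,s_2)$ and of the channel. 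The Case $2_C$ converse follows the same template with $V_{2,i}=(M_v,S_2^{i-1})$, and the chain $U\to V_2\to X$ is preserved by causality of the ESI. The two upper bounds in Case 2 correspond to two inequivalent auxiliary identifications: in one, $V_{2,i}$ inevitably depends on $S_{1,i}$ as well as $S_{2,i}$ (giving $p(v_2|s_1,s_2)$); in the other, $U_i$ inevitably depends on $S_{2,i}$ (giving $p(u|s_1,s_2,v_2)$). I do not see how to remove either enlargement with the Csisz\'ar-sum machinery, which is precisely why Case 2 is bracketed rather than solved.

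The main obstacle I anticipate is the Case 1 converse: the auxiliary $V_{1,i}$ must simultaneously absorb the description index $M_v$ and the decoder's past $(Y^{i-1},S_2^{i-1})$ while still satisfying the Markov chain $V_1\to S_1\to S_2$ demanded by the single-letter distribution. Steinberg's double Csisz\'ar trick is designed for exactly this tension, and adapting it to the present setting requires careful bookkeeping of the extra DSI $S_2^n$ inside both applications of the identity. The remaining pieces --- the joint-typicality error-probability analysis in the achievability proof and the Carath\'eodory-based cardinality bounds on $U$ and $V_j$ --- are standard and can be relegated to the appendix.
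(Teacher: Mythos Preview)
Your plan is essentially the same as the paper's: a Wyner--Ziv description layer concatenated with a Gelfand--Pinsker (or Shannon-strategy) code for achievability, and Fano plus Csisz\'ar-sum manipulations for the converses. Two small points of divergence are worth knowing. First, the paper does not redo the Case~1 converse at all: it observes that treating $S_2^n$ as part of the channel output reduces Case~1 verbatim to Steinberg's rate-limited-state theorem~\cite{4608994}, so your ``main obstacle'' simply evaporates. Second, your auxiliary choices in the converses differ in detail from the paper's --- for Case~$2_C$ the paper sets $V_{2,i}=(T_2,Y^{i-1},S_2^{i-1})$ and $U_i=W$, whereas you put $Y^{i-1}$ into $U_i$ instead; both work, but the Markov chain you actually need is $U-V_2-(S_1,S_2)$ (from the constraint $p(u|v_2)$), not ``$U\to V_2\to X$'' as you wrote, and whichever variable absorbs $Y^{i-1}$ is the one whose Markov relation requires the memorylessness argument. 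For the two Case~2 upper bounds the paper's identifications are $V_{2,i}=(T_2,Y^{i-1},S_{1,i+1}^n,S_2^{i-1}),\ U_i=W$ (preserving $U-(S_1,V_2)-S_2$) and $V_{2,i}=(T_2,S_2^{i-1}),\ U_i=(W,Y^{i-1},S_{1,i+1}^n)$ (preserving $V_2-S_2-S_1$), matching your diagnosis of why the gap between lower and upper bounds persists.
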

\begin{lemma}\label{lemma:CC} For all three channel coding cases described in this section and for $j\in\{1,2\}$, the following statements hold
  \begin{itemize}
  \item[$(i)$] The function $C_j(R')$ is a concave function of $R'$.
  \item[$(ii)$] It is enough to take $X$ to be a deterministic function of $(U,S_1,V_j)$ to evaluate $C_j$.
  \item[$(iii)$] The auxiliary alphabets $\mathcal{U}$ and $\mathcal{V}_j$ satisfy
    \begin{align*}
      \begin{array}[c]{l c}
        \mbox{for Case 1:} & |\mathcal{V}_1| \leq |\mathcal{X}| |\mathcal{S}_1| |\mathcal{S}_2| + 1 \mbox{\quad and } \\
        & |\mathcal{U}| \leq |\mathcal{X}| |\mathcal{S}_1| |\mathcal{S}_2| \big(|\mathcal{X}| |\mathcal{S}_1| |\mathcal{S}_2| +1\big),\\
        \hline
        \mbox{for Case 2:} & |\mathcal{V}_2| \leq  |\mathcal{S}_1| |\mathcal{S}_2| + 1 \mbox{\quad and} \\
        & |\mathcal{U}| \leq |\mathcal{X}| |\mathcal{S}_1| |\mathcal{S}_2| \big(|\mathcal{S}_1| |\mathcal{S}_2| +1\big),\\
        \hline
        \mbox{for Case 2$_C$:} & |\mathcal{V}_2| \leq |\mathcal{S}_2| + 1 \mbox{\quad and}\\
        & |\mathcal{U}| \leq |\mathcal{X}| |\mathcal{S}_2| \big(|\mathcal{S}_2| +1\big).
      \end{array}
    \end{align*}
  \end{itemize}
  Appendix~\ref{sec:proof-lemma-1} contains the proof for the above lemma.

{\it Remark:}
We assume that the lower bound of Case 2 is tight, namely, $C_2 = C_2^{lb}$. This claim is hard to corroborate; we have not, as yet, derived a converse proof that maintains both Markov relations $V_2 - S_2 - S_2$ and $U - (S_1,V_2) - S_2$ and that bounds any achievable rate from above simultaneously.
\end{lemma}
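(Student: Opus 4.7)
The plan for part (i) is a standard time-sharing argument. Fix $R_1', R_2' \geq 0$ and $\lambda \in [0,1]$, and let $p^{(k)}$ be a distribution achieving $C_j(R_k')$ for $k=1,2$. Introduce a time-sharing variable $Q \in \{1,2\}$ with $\Pr(Q=1) = \lambda$, independent of $(S_1, S_2)$, and define enlarged auxiliaries $\widetilde V_j = (V_j, Q)$ and $\widetilde U = (U, Q)$, where conditionally on $Q=k$ the joint distribution is $p^{(k)}$. All mutual-information functionals appearing in both the constraint on $R'$ and the objective decompose additively through conditioning on $Q$, so the rate constraint becomes $R' \geq \lambda R_1' + (1-\lambda)R_2'$ and the objective value equals $\lambda C_j(R_1') + (1-\lambda)C_j(R_2')$. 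Hence the pair $(\lambda R_1' + (1-\lambda)R_2',\, \lambda C_j(R_1') + (1-\lambda)C_j(R_2'))$ is feasible for the optimization in Theorem~\ref{theorem:CC}, yielding concavity.

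For part (ii), the plan is to absorb the randomness of $X$ into the auxiliary: set $U' = (U, X)$, so that $X$ is the deterministic projection of $U'$ onto its second coordinate and hence a deterministic function of $(U', S_1, V_j)$. The constraint on $R'$ depends only on $(V_j, S_1, S_2, Y)$ and is therefore unaffected by the relabeling. For the objective, one expands $I(U'; Y, S_2 | V_j) - I(U'; S_1 | V_j)$ via the chain rule and uses the channel-induced Markov chain $U - (X, S_1, S_2) - Y$ to check that the value coincides with $I(U; Y, S_2 | V_j) - I(U; S_1 | V_j)$. The same replacement works uniformly across Cases 1, 2, and 2$_C$, so restricting the maximization to deterministic $x = f(u, s_1, v_j)$ incurs no loss.

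For part (iii), the plan is to apply the Fenchel-Eggleston-Caratheodory theorem in two nested steps. First bound $|\mathcal V_j|$ by viewing the joint $p(u, x, s_1, s_2, y)$ as parameterized by $v_j$: we must preserve the underlying fixed joint $p(x, s_1, s_2)$ (for Case 1, costing $|\mathcal X||\mathcal S_1||\mathcal S_2|-1$ linear constraints; for Case 2 the marginal $p(s_1,s_2)$ suffices and for Case 2$_C$ only $p(s_2)$, because $V_2$ is a function of $S_2$ alone and no $X$-constraint is required at the $V_2$-level), together with one functional for the rate-constraint quantity and one for the objective, yielding the stated bounds on $|\mathcal V_j|$. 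Second, having fixed the distribution of $V_j$, bound $|\mathcal U|$ by applying the same theorem conditionally on each $v_j$: for each realization, preserving the conditional distribution of $(X, S_1, S_2)$ and the relevant conditional mutual-information values costs $|\mathcal X||\mathcal S_1||\mathcal S_2|$ support points (or the reduced count in Case 2$_C$), and summing over the $|\mathcal V_j|$ realizations gives the product bounds displayed in the lemma. The main obstacle is the bookkeeping in part (iii): the three cases have different Markov structures (noncausal vs.\ causal, $V_1$ depending on $S_1$ vs.\ $V_2$ depending on $S_2$), so one must identify case-by-case exactly which marginals are a priori fixed and which functionals must be preserved under the Caratheodory reduction, without double-counting.
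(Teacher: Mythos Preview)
Your arguments for parts (i) and (iii) are essentially those of the paper: time-sharing via an independent $Q$ absorbed into $V_j$ for concavity, and a two-stage application of the support lemma (first reduce $\mathcal V_j$, then $\mathcal U$) for the cardinality bounds. That part is fine.

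The gap is in part (ii). Your absorption trick $U'=(U,X)$ does \emph{not} work here. First, in Case~2$_C$ the auxiliary must satisfy $U-V_2-(S_1,S_2)$ (the optimization is over $p(u|v_2)$), but $X$ depends on $S_1$ through $p(x|u,s_1,v_2)$, so $U'=(U,X)$ violates this Markov chain and is not an admissible choice. Second, even in Cases~1 and~2, your claim that the objective ``coincides'' is false: by the chain rule the difference is
\[
\bigl[I(U';Y,S_2|V_j)-I(U';S_1|V_j)\bigr]-\bigl[I(U;Y,S_2|V_j)-I(U;S_1|V_j)\bigr]
= I(X;Y,S_2|U,V_j)-I(X;S_1|U,V_j),
\]
and the channel Markov chain $U-(X,S_1,S_2)-Y$ gives no control over this quantity. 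A simple Gelfand--Pinsker counterexample (take $U$ constant, $X=S$, and $Y$ independent of $(X,S)$) shows the difference can be strictly negative, so passing to $U'=(U,X)$ can \emph{decrease} the objective.

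The paper instead fixes $p(u,v_1|s_1)$ and observes that $p(y,s_2|u,v_1)$ is \emph{linear} in $p(x|u,s_1,v_1)$; since $I(U;Y,S_2|V_1)$ is convex in the conditional output law and $I(U;S_1|V_1)$ is constant once $p(u,v_1|s_1)$ is fixed, the objective is convex in $p(x|u,s_1,v_1)$ and attains its maximum at an extreme point, i.e., a deterministic map $x=f(u,s_1,v_1)$. This convexity argument is what you need in place of the absorption trick.
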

\subsection{Source coding with side information}
\label{sec:SC:Main_results}
For the problem of source coding with side information as presented in Figure \ref{fig:Source_Coding}, the {\it rate-distortion} function is as follows:
\begin{theorem}[The rate-distortion function for the cases in Figure \ref{fig:Source_Coding}]\label{theorem:SC} For a bounded distortion measure $d(x,\hat{x})$, a source, $X$, and side information, $S_1,S_2$, where $(X_i, S_{1,i},S_{2,i}) \sim p(x,s_1,s_2)$, the rate-distortion function is\\*
  \\*
  {\it Case 1:} The encoder is informed with ESI and the decoder is informed with increased DSI,
  {\small
    \begin{align}
      R_1^*(D)& = \min_{
      \substack{
          p(v_1|s_1) p(u|x, s_1, v_1) p(\hat{x}|u,s_2, v_1)\\
          {\rm s.t.}\ \ R' \geq I(V_1 ; S_1 | S_2)
        }}
      I(U;X,S_1|V_1) - I(U;S_2|V_1).
    \end{align}}
  %
  {\it Case 1$_C$:} The encoder is informed with ESI and the decoder is informed with increased causal DSI ($S_2^i$ at time $i$),
  {\small
    \begin{align}
      R_{1C}^*(D)& = \min_{
        \substack{
          p(v_1|s_1) p(u|x,s_1,v_1) p(\hat{x}|u,s_2, v_1)\\
          {\rm s.t.} \ \ R' \geq I(V_1;S_1)
        }}
      I(U;X,S_1|V_1).
    \end{align}}
  {\it Case 2:} The encoder is informed with increased ESI and the decoder is informed with DSI,
  {\small
    \begin{align}
      R_2^*(D) &= \min_{
      \substack{
          p(v_2|s_2) p(u|x, s_1, v_2) p(\hat{x}|u,s_2, v_2)\\
          {\rm s.t.}\ \ \ R' \geq I(V_2 ; S_2 ) -I(V_2; X, S_1)
        }}
      I(U ; X, S_1| V_2) - I(U ; S_2|V_2).
    \end{align}}
  For case $j$, $j\in \{1,2\}$, some joint distribution, $p(x,s_1,s_2,v_j,u,\hat{x})$, where $\mathbb{E}\Big[\frac{1}{n}\sum_{i=1}^nd(X_i,\hat{X}_i)\Big] \leq D$ and $(U,V_j)$ being some auxiliary random variables with bounded cardinality.

Section \ref{proof:CS} contains the proof.
\end{theorem}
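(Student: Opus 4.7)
The plan is to prove Theorem 2 by establishing achievability and a matching converse for each of the three source coding cases, in a manner that parallels the Wyner--Ziv construction and its converse. For achievability in Case 1, I would run a two-layer Wyner--Ziv scheme: first, generate a description $V_1^n$ of $S_1^n$ using random binning with $S_2^n$ as decoder side information, requiring $R' \ge I(V_1;S_1\mid S_2)$; then, treating $V_1^n$ as common side information at both terminals, generate $U^n$ jointly typical with $(X^n,S_1^n,V_1^n)$ and bin it against $(V_1^n,S_2^n)$, needing rate $I(U;X,S_1\mid V_1)-I(U;S_2\mid V_1)$. The decoder forms $\hat X_i = g(U_i,S_{2,i},V_{1,i})$ and a standard strong-typicality argument yields the distortion constraint.

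For Case 1$_C$, the first layer is the same but without Wyner--Ziv binning against $S_2^n$ (so the description rate is $I(V_1;S_1)$), since the decoder must reconstruct $V_1^n$ before observing all of $S_2^n$. The main rate then follows a Weissman--El Gamal causal coding argument: $U^n$ is generated and indexed without binning at rate $I(U;X,S_1\mid V_1)$, and the per-letter reconstruction $g_i(U_i,S_2^i,V_{1,i})$ uses only the causally available $S_2^i$. For Case 2, the DSI is described first by a Wyner--Ziv code with the main encoder's $(X^n,S_1^n)$ as side information, at rate $I(V_2;S_2)-I(V_2;X,S_1)$, and then the main encoder, now knowing $V_2^n$, applies Wyner--Ziv to $(X^n,S_1^n)$ against $(V_2^n,S_2^n)$, producing rate $I(U;X,S_1\mid V_2)-I(U;S_2\mid V_2)$.

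For the converse, I would start from Fano's inequality together with the distortion constraint to get the usual rate-distortion-style bound, then write $nR \ge H(M\mid V_j) \ge I(M;X^n,S_1^n\mid V_j,S_2^n)$ and expand via the chain rule. The key step is the single-letter identification of auxiliaries: for Case 1 take $V_{1,i} = (f_v(S_1^n),S_2^{i-1})$ and $U_i = (M,V_{1,i},X^{i-1},S_1^{i-1})$, then apply the Csisz\'ar sum identity to swap the direction of the conditioning between $S_2^{i-1}$ and $S_{1,i+1}^n$ so that the residual terms combine into the single-letter expression $I(U;X,S_1\mid V_1)-I(U;S_2\mid V_1)$; an analogous identification handles the description-rate bound $R' \ge I(V_1;S_1\mid S_2)$. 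Case 1$_C$ is simpler because causality lets one drop the $-I(U;S_2\mid V_1)$ term and removes the Csisz\'ar-sum rearrangement for $R'$. For Case 2 one chooses $V_{2,i} = (f_v(S_2^n), S_2^{i-1})$ (or a past-conditioned variant) and identifies $U_i = (M,V_{2,i},X^{i-1},S_1^{i-1})$, again using the Csisz\'ar sum to obtain the $-I(V_2;X,S_1)$ correction in the description-rate bound.

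The main obstacle I expect is ensuring the correct Markov structure on the auxiliaries in the converse, in particular the chain $V_2 - S_2 - (X,S_1)$ in Case 2, since $V_2^n=f_v(S_2^n)$ depends on the whole DSI sequence and the per-letter distribution must factor as $p(v_2\mid s_2)$; this is precisely what forces the $-I(V_2;X,S_1)$ correction in the $R'$ constraint and is the delicate place in the Csisz\'ar-sum manipulation. After that, a time-sharing random variable (absorbed into $V_j$) delivers the claimed single-letter form, and the cardinality bounds of Lemma~\ref{lemma:CC}'s analogue (via the Fenchel--Eggleston--Carath\'eodory theorem) finish the proof.
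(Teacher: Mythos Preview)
Your achievability plans for all three cases match the paper's exactly: a two-layer Wyner--Ziv scheme for Case~1, standard rate-distortion on $S_1$ followed by Weissman--El~Gamal for Case~$1_C$, and the two-round Kaspi scheme for Case~2 (the paper simply cites Kaspi for $K=1$ and omits this case entirely).

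Your converse route, however, differs from the paper's in a way worth noting. For Case~1 you propose a Gelfand--Pinsker style argument with the Csisz\'ar sum identity and auxiliaries $V_{1,i}=(f_v(S_1^n),S_2^{i-1})$, $U_i=(M,V_{1,i},X^{i-1},S_1^{i-1})$. The paper instead runs a direct Wyner--Ziv style converse with no Csisz\'ar sum: it sets $V_{1,i}=(T_1,S_{1,i+1}^n,S_2^{i-1},S_{2,i+1}^n)$ and $U_i=T$ (just the source index), then bounds $nR\ge \sum_i I(U_i;X_i,S_{1,i}\mid V_{1,i},S_{2,i})$ by a straightforward chain-rule expansion of $I(T;X^n,S_1^n\mid T_1,S_2^n)$, using only that conditioning reduces entropy and that $(X_i,S_{1,i})$ is independent of $X_{i+1}^n$ given $(T_1,S_{1,i+1}^n,S_2^n)$. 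The equality $I(U;X,S_1\mid V_1)-I(U;S_2\mid V_1)=I(U;X,S_1\mid V_1,S_2)$ (from the Markov chain $U-(X,S_1,V_1)-S_2$) then gives the stated form without any swap of past/future. The same direct approach handles Case~$1_C$ with $V_{1,i}=(T_1,S_{1,i+1}^n)$. The paper's choice is simpler and avoids the delicate direction-swap you flag; your Csisz\'ar-sum route is the one the paper highlights separately (in its Appendix on Gelfand--Pinsker/Wyner--Ziv duality) as an alternative, dual-looking proof, but it is not the working converse used for Theorem~2. If you pursue your approach, be aware that your $V_{1,i}$ omits $S_{1,i+1}^n$ and $S_{2,i+1}^n$, which the paper needs in order to make $\hat X_i$ a function of $(U_i,V_{1,i},S_{2,i})$ and to keep the required Markov chains; you would have to absorb those into $U_i$ and then re-verify $U-(X,S_1,V_1)-S_2$.
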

\begin{lemma} \label{lemma:SC}
  For all cases of rate-distortion problems in this section and for $j \in \{1,2\}$, the following statements hold.
  \begin{itemize}
  \item[(i)] The function $R_j(R',D)$ is a convex function of $R'$ and $D$.
  \item[(ii)] It is enough to take $\hat{X}$ to be a deterministic function of $(U,S_2,V_j)$ to evaluate $R_j$.
  \item[(iii)]  The auxiliary alphabets $\mathcal{U}$ and $\mathcal{V}_j$ satisfy
    \begin{align*}
      \begin{array}[c]{l c}
        \mbox{for Case 1:} & |\mathcal{V}_1| \leq |\mathcal{S}_1| |\mathcal{S}_2| + 1 \mbox{\quad and } \\
        & |\mathcal{U}| \leq |\mathcal{X}| |\mathcal{S}_1| |\mathcal{S}_2| \big(|\mathcal{S}_1| |\mathcal{S}_2| +1\big),\\
        \hline
        \mbox{for Case 1$_C$:} & |\mathcal{V}_1| \leq  |\mathcal{S}_1| + 1 \mbox{\quad and} \\
        & |\mathcal{U}| \leq |\mathcal{X}| |\mathcal{S}_1| \big(|\mathcal{S}_1| + 1\big),\\
        \hline
        \mbox{for Case 2:} & |\mathcal{V}_2| \leq |\mathcal{X}| |\mathcal{S}_1| |\mathcal{S}_2| + 1 \mbox{\quad and}\\
        & |\mathcal{U}| \leq |\mathcal{X}| |\mathcal{S}_1| |\mathcal{S}_2| \big(|\mathcal{X}| |\mathcal{S}_1| |\mathcal{S}_2| +1\big).
      \end{array}
    \end{align*}
  \end{itemize}
  Appendix~\ref{sec:proof-lemma-1} contains the proof for the above lemma.

\end{lemma}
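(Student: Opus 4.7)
The plan is to verify the three claims in order, each by a standard tool from network information theory.

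For part (i), I would use a time-sharing argument. Fix two operating points $(R'_1,D_1)$ and $(R'_2,D_2)$, attained by auxiliaries $(V_j^{(1)},U^{(1)})$ and $(V_j^{(2)},U^{(2)})$, respectively. Introduce a time-sharing variable $Q\in\{1,2\}$ with $\Pr\{Q=1\}=\lambda$, independent of $(X,S_1,S_2)$, and set $\tilde V_j=(Q,V_j^{(Q)})$, $\tilde U=U^{(Q)}$. Because the rate objective, the side-information constraint, and the expected distortion are all conditional expectations given $Q$, the new auxiliaries attain the pair $\bigl(\lambda R'_1+(1-\lambda)R'_2,\ \lambda D_1+(1-\lambda)D_2\bigr)$ at rate at most $\lambda R_j(R'_1,D_1)+(1-\lambda)R_j(R'_2,D_2)$, yielding joint convexity of $R_j$ in $(R',D)$.

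For part (ii), I would use the standard deterministic-reconstruction reduction. Given any $p(x,s_1,s_2,v_j,u,\hat x)$ meeting the constraints, replace $\hat X$ by the deterministic function $\hat x^\star(u,s_2,v_j)$ that minimizes $\mathbb{E}[d(X,\hat x)\mid U=u,\,S_2=s_2,\,V_j=v_j]$ over $\hat{\mathcal X}$. Since every mutual-information term appearing in the rate objective and in the side-information constraint depends only on the joint law of $(X,S_1,S_2,V_j,U)$ and not on $\hat X$, the rate is preserved while the expected distortion can only decrease.

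For part (iii), I would invoke the Fenchel--Eggleston strengthening of Carath\'eodory's theorem (the support lemma) twice. First, for each case, freeze the conditionals $p(u\mid x,s_1,v_j)$ and $p(\hat x\mid u,s_2,v_j)$ and bound $|\mathcal V_j|$ by the number of functionals that must be preserved: the entries of the relevant marginal over $(X,S_1,S_2)$ allowed by the case's Markov structure---giving $|\mathcal X||\mathcal S_1||\mathcal S_2|-1$, $|\mathcal S_1||\mathcal S_2|-1$, or $|\mathcal S_1|-1$ free parameters for Cases~2,~1, and 1$_C$, respectively---together with two scalar functionals, namely the rate objective and the side constraint (the distortion being preserved through the same parameters). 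Then, conditional on each value of $v_j$, apply the support lemma a second time to the induced problem in $\mathcal U$, preserving the conditional joint $p(x,s_1,s_2\mid v_j)$ and one additional scalar functional; chaining the two bounds produces the claimed product-form bound on $|\mathcal U|$.

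The principal technical difficulty lies in the cardinality step: one must identify, for each case, the minimum set of functionals the support lemma needs to preserve so that none of the rate, the distortion, or the side constraint is violated, and exploit the case-specific Markov factorization to arrive at exactly the stated bounds. Parts (i) and (ii) are routine by comparison.
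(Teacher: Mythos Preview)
Your approach mirrors the paper's: time-sharing for (i), pointwise minimization of the conditional expected distortion for (ii) (the rate-distortion counterpart of the paper's convexity-in-the-input-law argument for channels), and two passes of the support lemma for (iii). The paper in fact omits a separate proof of this lemma and simply refers back to the channel-coding version in Appendix~\ref{sec:proof-lemma-1}, so your sketch is at least as detailed as what the paper provides.

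One point in (iii) needs tightening: the parenthetical claim that the distortion is ``preserved through the same parameters'' is not correct. With $p(u\mid x,s_1,v_j)$ and $\hat x(u,s_2,v_j)$ frozen, the quantity $\mathbb{E}[d(X,\hat X)\mid V_j=v_j]$ is a continuous functional of $p(x,s_1,s_2,u\mid v_j)$, but it is \emph{not} determined by the marginal of $(X,S_1,S_2)$ together with the rate and side-information functionals; it must be listed as a separate functional when reducing $\mathcal V_j$. This would naively add one to each count relative to the channel-coding analog (where no distortion constraint appears). Since the paper only carries out the count on the channel side and declares the source-coding bounds by duality without rechecking, neither your sketch nor the paper actually resolves whether the stated $|\mathcal V_j|$ bounds absorb this extra functional or are simply loose by one.
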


\subsection{Main results - duality}
\label{sec:main-results-duality}
We now investigate the duality between the channel coding and the source coding for the cases in Figures \ref{fig:Capacity} and \ref{fig:Source_Coding}. The following transformation makes the duality between the  channel coding cases 1, 2, 2$_C$ and the source coding cases 2, 1, 1$_C$, respectively, evident. The left column corresponds to channel coding and the right column to source coding.
For cases $j$ and $\bar{j}$, where $j,\bar{j} \in~\{1,2\}$ and $\bar{j}~\neq~j$, consider the transformation:
\begin{align}
  {\rm channel\ coding} &\longleftrightarrow {\rm source \ coding}\\
  C &\longleftrightarrow R(D)\\
  {\rm maximization} &\longleftrightarrow {\rm minimization}\\
  C_j &\longleftrightarrow R_{\bar{j}}(D)\\
  X & \longleftrightarrow \hat{X}\\
  Y & \longleftrightarrow X\\
  S_j & \longleftrightarrow S_{\bar{j}}\\
  V_j & \longleftrightarrow V_{\bar{j}}\\
  U & \longleftrightarrow U\\
  R' &  \longleftrightarrow R'.
\end{align}
This transformation is an extension of the transformation provided in~\cite{Cover:2006:DCC:2263239.2266915} and in~\cite{DBLP:journals/tit/PradhanCR03}. Note that while the channel capacity formula in Case $j$ and the rate-distortion function in Case $\bar{j}$  are dual to one another in the sense of maximization-minimization, the corresponding rates $R'$ are not dual to each other in this sense; i.e., one would expect to see an opposite inequality ($\geq\ \leftrightarrow \ \leq$) for dual cases, where we have an inequality that is in the same direction ($\leq\ \leftrightarrow \ \leq$) in the $R'$ formulas. The duality in the side information rates, $R'$,  is then in the sense that the arguments in the formulas for the dual $R'$ are dual. This exception is due to the fact that while the Gelfand-Pinsker and the Wyner-Ziv problems for the main channel or the main rate-distortion problems are dual, the Wyner-Ziv problem for the side information stays the same; the only difference is the input and the output.

\section{Geometric Programming}
\label{sec:geom-progr}

In this section, we provide a method to evaluate the Wyner-Ziv rate, using the Lagrange dual function and geometric programming. Before presenting the main results on this subject, let us provide the definitions and notations that we will use throughout this section and throughout the proof of the forthcoming main results.

\subsection{Definitions and preliminaries - convex optimization and Lagrange duality}
\label{sec:defen-conv-optim}
Most of the notations and the definitions that we use in this section are taken from~\cite{Boyd:2004:CO:993483}. We denote the variable $x$ with dimension greater than $1$ as $\bx$ and we use $\bx \succeq 0$ to denote that $x_i \geq 0$ for all $i = 1,2,\dots, \dim(\bx)$.

Consider the following  optimization problem:
\begin{align}
  \label{eq:51}
  \begin{array}[l]{l l}
  \mbox{minimize} & f_0({\bf x})\\
  \mbox{subject to} & f_i({\bf x}) \leq 0,\quad i = 1,2,\dots, m,\\
  & h_j({\bf x}) = 0, \quad j = 1,2,\dots, p,
\end{array}
\end{align}
with the variable ${\bf x} \in \mathbb{R}^n$.
We refer to $f_0$ as the {\it objective function} of the optimization problem and to $f_i$ and $h_j$ as the {\it constraint functions}. We let $\mathcal{D}$ denote the domain of ${\bf x}$; this is the set of all points for which the objective and the constraint functions are defined. We denote the optimal minimizer of $f_0(\bx)$ in $\mathcal{D}$ as $\bx^*$. If the objective function, $f_0({\bf x})$, and the inequality constraint functions, $f_i({\bf x}),\ i=1,2,\dots,m$, are all convex in ${\bf x}$ and the equality constraint functions, $h_j({\bf x}),\ j=1,2,\dots,p$, are affine in ${\bf x}$, then the problem is said to be a {\it convex optimization problem}. The {\it Lagrangian} associated with problem~(\ref{eq:51}) is
\begin{align}
  \label{eq:52}
  L(\bx,\blambda, \bmu) = f_0({\bf x}) + \sum_{i=1}^m \lambda_i f_i(\bx) + \sum_{j=1}^p \mu_j h_j(\bx),
\end{align}
where ${\bf x} \in \mathcal{D},\ \blambda \in \mathbb{R}^m$ and $\bmu \in \mathbb{R}^p$. The {\it Lagrange dual function}, as defined in~\cite[Capter 5.1.2]{Boyd:2004:CO:993483}, is
  \begin{align}
    \label{eq:53}
    g(\blambda, \bmu) = \inf_{\bx \in \mathcal{D}} L(\bx,\blambda, \bmu).
  \end{align}
Following from \cite[Chapter 5.1.3]{Boyd:2004:CO:993483}, for any $\blambda$  where $\lambda_i \geq 0$ for $i=1,2,\dots,m$, the Lagrange dual function yields a lower bound on the optimal value, $f_0(\bx^*)$.
  The {\it Lagrange dual problem}~\cite[Chapter 5.2]{Boyd:2004:CO:993483} associated with (\ref{eq:51}) is
  \begin{align}
    \label{eq:54}
    \begin{array}[l]{l l}
      \mbox{maximize} & g(\blambda,\bmu)\\
      \mbox{subject to} & \lambda_i \geq 0,\quad i = 1,2,\dots, m.
    \end{array}
  \end{align}
In this context, we refer to the original problem (\ref{eq:51}) as the {\it primal problem}.
The {\it strong duality} property is associated with the case where the solution for the dual problem and the solution for the primal problem coincide. Following from~\cite[Chapter 5.2.3]{Boyd:2004:CO:993483}, if the primal problem is convex and Slater's condition~\cite[Chapter 5.2.3]{Boyd:2004:CO:993483} holds, then strong duality holds.

A special family of optimization problems that we are interested in is the family of geometric programs. This type of optimization problems is defined in~\cite[Chapter 4.5]{Boyd:2004:CO:993483} and is summarized here. Define {\it monomial} as the function
\begin{align}
  \label{eq:61}
  f(\bx) = c x_1^{a_1} x_2^{a_2} \dots x_n^{a_n},
\end{align}
were $c > 0$ and $a_i \in \mathbb{R}$. A sum of monomials, i.e., a function of the form
\begin{align}
  \label{eq:62}
  f(\bx) = \sum_{k=1}^K c_k x_1^{a_{1k}} x_2^{a_{2k}} \dots x_n^{a_{nk}},
\end{align}
where $c_k > 0$, is called a {\it posynomial}. An optimization problem of the form
\begin{align}
  \label{eq:63}
  \begin{array}[l]{l l}
  \mbox{minimize} & f_0({\bf x})\\
  \mbox{subject to} & f_i({\bf x}) \leq 1,\quad i = 1,2,\dots, m,\\
  & h_j({\bf x}) = 1, \quad j = 1,2,\dots, p,
\end{array}
\end{align}
where $f_0, \dots, f_{m}$ are posynomials, $h_1,\dots,h_p$ are monomials and $\bx \succeq 0$ is called a {\it geometric program}. Geometric programs, as mentioned in~\cite[Chapter 4.5]{Boyd:2004:CO:993483}, are not convex problems. However, these problems can be transformed into convex optimization problems by taking $\log(\cdot)$ on both the objective and the constraint functions.

\subsection{Problem Setting and Main Results}
\label{sec:main-results-GP}

Let us consider the classic Wyner-Ziv problem as illustrated in Figure~\ref{fig:WZ}. Assume correlated random variables $(X,S) \sim \iid p(x,s)$ with finite alphabets $\mathcal{X},\mathcal{S}$, respectively. Let $\big\{(X_i, S_{i})\big\}_{i=1}^n$ be a sequence of $n$ independent drawings of $(X,S)$. Let the sequence $X^n$ be the source sequence and let $S^n$ be the side information sequence available at the decoder. We wish to describe the source, $X$, at rate $R$ bits per symbol and to reconstruct $\hat X$ at the decoder with a distortion smaller than or equal to $D$, i.e., when encoding $X$ in blocks of length $n$, we desire that $ \mathbb{E}\Big[\frac{1}{n} \sum_{i=1}^nd(X_i,\hat{X}_i)\Big] \leq D$.

\begin{figure}[h!]
  \centering \small
  \psfrag{A}{\ \ Encoder}
  \psfrag{B}{\ \ Decoder}
  \psfrag{X}{$X$}
  \psfrag{x}{$\hat X$}
  \psfrag{R}{$S$}
  \includegraphics[scale = 0.5]{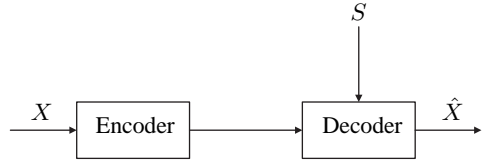}
  \caption{{\it The Wyner-Ziv problem}.}
  \label{fig:WZ}
\end{figure}

The rate-distortion function with side information at the decoder \cite{1055508} is
\begin{align}
  \label{eq:17}
  R(D) =  \min_{p(u|x) p(\hat x|u,s)}  I(U;X|S)
\end{align}
for some joint distribution $p(x,s,u,\hat x)$ such that $ \mathbb{E}\Big[d(X,\hat{X})\Big] \leq D$, i.e., $\sum_{x,s,u,\hat x} p(x,s) p(u|x) p(\hat x|u,s) d(x, \hat x) \leq D$. According to~\cite{Willems1983}, we can write the expression of the rate-distortion function as
\begin{align}
  \label{eq:18}
  R(D) = \min_{q(t|x)} I(T;X|S)
\end{align}
for some joint distribution $p(x,s,t) = p(x,s) q(t|x)$, where $\mathcal{T}$ is the set of all mappings
\begin{align}
  \label{eq:19}
  t:\quad \mathcal{S} \mapsto \hat{\mathcal{X}},
\end{align}
and the distortion constraint
\begin{align}
  \label{eq:20}
  \sum_{x,s,t} p(x,s) q(t|x) d\big(x, t(s)\big) \leq D
\end{align}
is maintained. We denote the set of $q(t|x)$'s for all $x \in \mathcal{X}$ and $t \in \mathcal{T}$ as  $\bq \in \mathbb{R}^{|\mathcal{T}||\mathcal{X}|}$ and we note that $I(T;X|S)$ is a convex function of $\bq$ and that the rate-distortion function, $R(D)$, is its optimal value.

Combining (\ref{eq:18}) and (\ref{eq:20}), we get that the Wyner-Ziv problem is the following problem
\begin{align}
  \label{eq:21}
  \begin{array}[l]{l l}
  \mbox{minimize} & \sum_{x,s,t} p(x,s) q(t|x) \log \frac{q(t|x)}{Q(t|s)}\\
  \mbox{subject to} & \sum_{t} q(t|x) = 1 \quad \forall x,\\
  & \sum_{x,s,t} p(x,s) q(t|x) d\big(x, t(s)\big) \leq D,\\
  \mbox{ } & q(t|x) \geq 0 \quad \forall x,t,
\end{array}
\end{align}
where the variables of the optimization are $\bq$ and the constant parameters are the source distribution, $p(x,s)$, the distortion measure, $d\big(x,t(s)\big)$, and the distortion constraint, $D$, for all $x \in \mathcal{X}$, $s \in \mathcal{S}$ and $t\in \mathcal{T}$. The marginal distribution $Q(t|s)$ is defined by
\begin{align}
 Q(t|s) = \frac{\sum_x p(x,s) q(t|x)}{\sum_x p(x,s)},
\end{align}
 We define the set of $Q(t|s)$'s for all $s \in \mathcal{S}$ and $t \in \mathcal{T}$ as $\bQ \in \mathbb{R}^{|\mathcal{T}||\mathcal{S}|}$.

The main result of this section is brought in the following theorem.
\begin{theorem}
\label{theoremGP}
  The Lagrange dual of the Wyner-Ziv rate-distortion problem is the following geometric program (in convex form):
  \begin{align}
    \label{eq:55}
    \begin{array}[l]{l l}
    \mbox{maximize} &  \sum_x p(x) \alpha_x - \gamma D\\
    \mbox{subject to} &  \alpha_x + \sum_s p(s|x) \bigg[ \log p(x|s) - \gamma d\big( x,t(s) \big) - y_{x,s,t}\bigg] \leq 0\quad \forall x,t,\\
    & \log \left( \sum_x \exp\big\{y_{x,s,t}\big\} \right) \leq 0 \quad \forall s,t,\\
    & \gamma \geq 0,
  \end{array}
\end{align}
where the optimization variables are ${\bm \alpha} \in \mathbb{R}^{|\mathcal{X}|}, \gamma \in \mathbb{R}_+$ and ${\bf y} \in \mathbb{R}^{|\mathcal{X}| |\mathcal{S}| |\mathcal{T}|}$, and the constant parameters are the source distribution $p(x,s)$, the distortion measure $d\big(x,t(s)\big)$ and the distortion constraint, $D$. Furthermore, if Slater's condition~\cite[Chapter 5.2.3]{Boyd:2004:CO:993483} holds, then strong duality holds and the solution for the optimization problem in~(\ref{eq:55}) is a tight lower bound on the Wyner-Ziv solution, (\ref{eq:21}), and $R(D)$ is its optimal value.
\end{theorem}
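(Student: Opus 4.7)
The plan is to derive the Lagrange dual of the Wyner--Ziv primal \eqref{eq:21} in two stages: (i) eliminate the implicit dependence of the marginal $Q(t|s)$ on $q(t|x)$ by a variational (log-sum-exp) representation that introduces the auxiliary variables $y_{x,s,t}$, and (ii) dualize the resulting linear program in $q$ by standard LP duality. The obstruction that blocked earlier geometric-program formulations is exactly the nonlinear coupling $Q(t|s) = \sum_x p(x|s)q(t|x)$; promoting $y$ to the status of a primal variable turns the objective into a bilinear function on convex sets, after which the problem becomes tractable.

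For the first stage, I would decompose $I(T;X|S) = H(X|S) - H(X|S,T)$ and use the Gibbs/Donsker--Varadhan identity $-H(p) = \sup\{\sum_x p_x y_x : \sum_x \exp(y_x) \le 1\}$, applied pointwise in $(s,t)$ to the conditional $p(\cdot|s,t)$. The Markov chain $T-X-S$ gives $p(x,s,t) = p(x,s)q(t|x)$, so
\begin{align*}
-H(X|S,T) = \sup_{y \in \mathcal{Y}} \sum_{x,s,t} p(x,s)q(t|x)\, y_{x,s,t}, \quad \mathcal{Y} := \Big\{y : \sum_x \exp(y_{x,s,t}) \le 1 \ \forall s,t \Big\},
\end{align*}
with the supremum attained at $y^*_{x,s,t} = \log p(x|s,t)$. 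Substituting into \eqref{eq:21} rewrites Wyner--Ziv as a min--sup saddle problem over $\mathcal{Q}(D)\times\mathcal{Y}$, where $\mathcal{Q}(D)$ is the convex, compact set of $q$'s meeting normalization, non-negativity, and the distortion constraint.

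For the second stage, I would interchange $\min_q$ and $\sup_y$ via Sion's minimax theorem (the objective is bilinear, both sets are convex, and $\mathcal{Q}(D)$ is compact), and for each fixed $y$ dualize the inner LP in $q$ using multipliers $\alpha_x$ for the normalization equalities and $\gamma \ge 0$ for the distortion inequality, keeping $q \ge 0$ as the remaining primal inequalities. Standard LP duality then produces the objective $\sum_x \alpha_x - \gamma D$ together with the constraints $\alpha_x \le p(x)\sum_s p(s|x)\bigl[y_{x,s,t}+\gamma d(x,t(s))\bigr]$ for all $x,t$. Adding back the $H(X|S) = -\sum_x p(x)\sum_s p(s|x)\log p(x|s)$ term and performing the affine change of variables $\tilde\alpha_x \leftarrow \alpha_x/p(x) - \sum_s p(s|x)\log p(x|s)$ absorbs the entropy into the objective as $\sum_x p(x)\tilde\alpha_x - \gamma D$ and transforms the sign constraint into $\tilde\alpha_x + \sum_s p(s|x)\bigl[\log p(x|s) - \gamma d(x,t(s)) - y_{x,s,t}\bigr]\le 0$, matching \eqref{eq:55} line by line.

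The main obstacle I anticipate is justifying the minimax interchange in the presence of the unbounded variables $y_{x,s,t}$; I would handle this either by a truncation argument (restrict to $y_{x,s,t} \ge -M$, apply Sion's theorem, then let $M \to \infty$) or by observing that the saddle point lies in a bounded subset of $\mathcal{Y}$ whenever the optimal $q^*$ is strictly positive on its support. The remaining verifications — that the resulting program is indeed a geometric program in convex form (since $\log\sum_x \exp(y_{x,s,t}) \le 0$ is a log-sum-exp inequality while all other expressions are affine in $(\alpha,\gamma,y)$), and that Slater's condition on the distortion constraint together with convexity of \eqref{eq:21} delivers strong duality and hence tightness of \eqref{eq:55} as a lower bound on $R(D)$ — are routine.
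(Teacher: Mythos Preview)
Your proposal is correct and takes a genuinely different route from the paper's own proof.

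The paper proceeds mechanically through the Lagrange dual of the nonlinear convex primal~\eqref{eq:21}: it writes the Lagrangian with multipliers $(\bmu,\gamma,\blambda)$, sets $\partial L/\partial q(t|x)=0$ to obtain an implicit fixed-point relation between $q^*$ and its marginal $Q^*(t|s)$, and then argues that the minimizing $q$ can itself be promoted to an optimization variable in the dual. The crucial step is a reparametrization $Q(t|s)/q(t|x)=p(x|s)/p(x|s,t)$ followed by a Jensen-inequality argument showing that the Markov constraint $T-X-S$ on $p(x|s,t)$ can be dropped without enlarging the optimal value; only after this are the equalities relaxed to inequalities (via $\lambda_{x,t}\ge 0$ and a monotonicity observation) and the logarithm taken to reach~\eqref{eq:55}.

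Your approach bypasses that entire Jensen/Markov-dropping step by introducing the variables $y_{x,s,t}$ \emph{a priori} through the Gibbs/Donsker--Varadhan variational identity for $-H(X|S,T)$. Because the identity already ranges over \emph{all} $y$ with $\sum_x e^{y_{x,s,t}}\le 1$, no Markov structure is imposed on $y$, and the log-sum-exp constraint of~\eqref{eq:55} appears directly. The remaining work is a clean Sion/LP-duality argument on a bilinear objective. What the paper's route buys is an explicit link to the KKT stationarity conditions of the primal (which also explains exactly where the earlier Chiang--Boyd bound went wrong: it imposed stationarity separately for each $s$); what your route buys is conceptual economy and the elimination of the most delicate step.

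One minor remark: your worry about unbounded $y$ in the minimax swap is unnecessary. Sion's theorem only requires compactness of \emph{one} of the two convex sets, and $\mathcal{Q}(D)$ is compact while the bilinear objective is continuous, so the swap is immediate without truncation.
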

\proof  The proof for Theorem~\ref{theoremGP}  is given in Appendix~\ref{sec:proof-GP}.

\section{Examples}
\label{sec:examples}

In this section we provide examples for Case~2 of the channel coding theorem and for Case~1 of the source coding theorem. The numerical iterative algorithm, which we used to numerically calculate the lower bound, $C_2^{lb}$, is provided in the next section.
\begin{example}[{\it Case 2 channel coding for a binary channel}]\label{ex:CC}
  Consider the binary channel illustrated in Figure \ref{fig:ch_top}. The alphabet of the input, the output and the two states is binary $\mathcal{X} = \mathcal{Y} = \mathcal{S}_1 = \mathcal{S}_2 = \{0,1\}$ with  $(S_1,S_2) \sim {\bf P_{S_1 S_2}}$ being a joint PMF matrix. The channel is dependent on the states $S_1$ and $S_2$, where the encoder is fully informed with $S_1$ and with $S_2$ with a rate limited to $ R'$ and the decoder is fully informed with $S_2$. The dependence of the channel on the states is illustrated in Figure \ref{fig:ch_top}. If  $(S_1 = 1, S_2 = 0)$ then the channel is the {\it Z channel} with transition probability $\epsilon$, if $(S_1 = 1, S_2 = 1)$ then the channel has no error, if $(S_1=0,S_2=0)$ then the channel is the {\it X-channel} and if $(S_1=0,S_2=1)$ then the channel is the {\it S-channel} with transition probability of $\epsilon$. The side information's joint pmf is
\begin{align*}
  {\bf P_{S_1S_2}} =  \begin{pmatrix} 0.1 & 0.4 \\ 0.4 & 0.1     \end{pmatrix}.
\end{align*}
The expressions for the lower bound on the capacity $C^{lb}_2( R')$ and for $ R'$ are brought in Case~2 of Theorem~\ref{theorem:CC}.
  \begin{figure}[h!] \vspace{-1cm}
    \centering \small
    \psfrag{e}{Encoder}
    \psfrag{c}{Channel}
    \psfrag{d}{Decoder} \footnotesize
    \psfrag{M}{$M$}
    \psfrag{x}{$X^n$}
    \psfrag{y}{$Y^n$}
    \psfrag{m}{$\hat M$}
    \psfrag{s}{$S_1^n$}
    \psfrag{v}{$S_2^n$}
    \psfrag{R}{$ R'$}
    \psfrag{z}{$0$}
    \psfrag{o}{$1$}
    \psfrag{A}{ $(1,0)$}
    \psfrag{B}{ $(1,1)$}
    \psfrag{C}{ $(0,0)$}
    \psfrag{D}{ $(0,1)$}
    \psfrag{E}{$(S_1, S_2)$} \small
    \psfrag{F}{The}
    \psfrag{G}{Channel}
    \includegraphics[scale = 0.6]{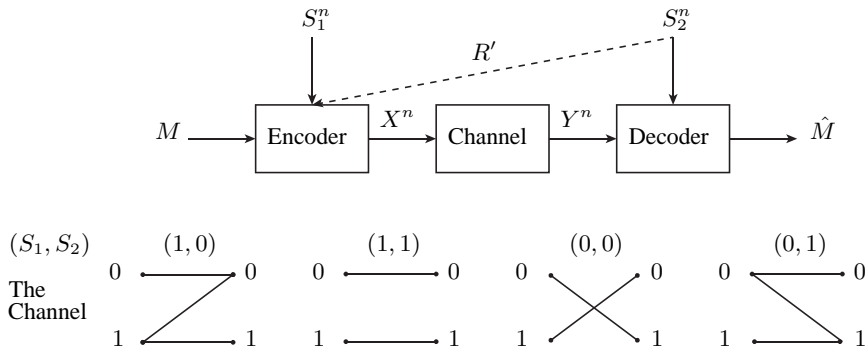}
    \caption{Example 1 {\it Channel coding Case 2} - channel topology.} \label{fig:ch_top}
  \end{figure}

In Figure~\ref{fig:example1} we provide the graph from of the computation of the lower bound on the capacity for the binary channel we are testing. In the graph, we present the lower bound, $C^{lb}_2( R')$, as a function of $ R'$. We also provide the Cover \& Chiang~\cite{Cover:2006:DCC:2263239.2266915} capacity (where $ R' = 0$) and the Gelfand \& Pinsker~\cite{citeulike:437050} capacity (where $ R' = 0$ and the decoder is not informed with $S_2$).\\

\begin{figure}[h!]
  \centering \small 
  \psfrag{a}{$C^{lb}_2( R')$}
  \psfrag{b}{C-C rate}
  \psfrag{c}{G-P rate}
  \psfrag{r}{$ R'$ [{\it bits}]}
  \psfrag{d}{Rate [{\it bits}]}
  \psfrag{h}{$H(S_2|S_1)$}
  \includegraphics[scale=0.8]{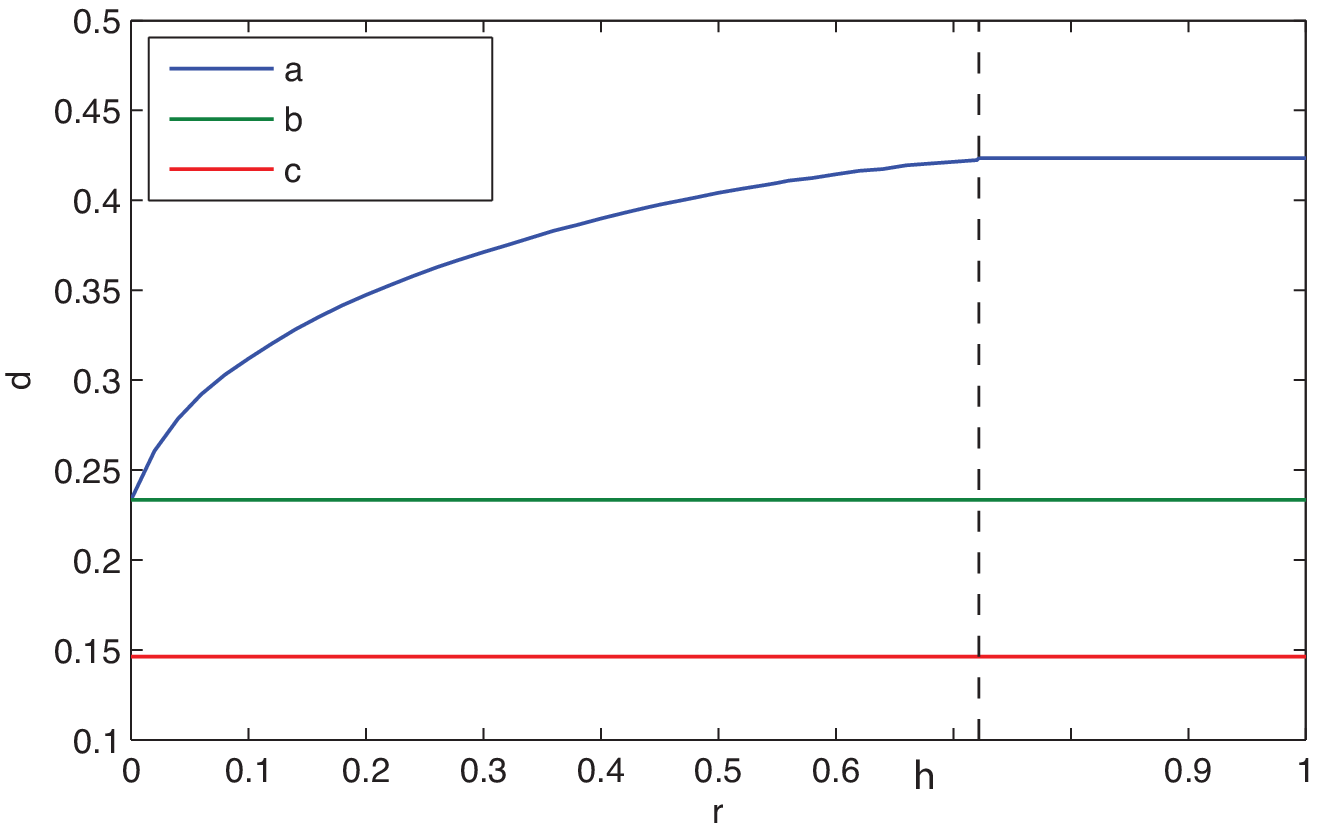}
  \caption[]{Example 1. {\it Channel coding Case 2}  for the channel depicted in Figure~\ref{fig:ch_top}, where the side information is distributed $S_1 \sim \mbox{Bernoulli}(0.5)$, and $\Pr \{S_2 \neq S_1\} = 0.8$. ${\it C^{lb}_2(R')}$ is the lower bound on the capacity of this channel, {\it C-C rate} is the Cover-Chiang rate ($R'=0$) and {\it G-P rate} is the Gelfand-Pinsker rate ($R'=0$ and the decoder has no side information available at all). Notice that at the encoder the maximal uncertainty about $S_2$ is $H(S_2|S_1) = 0.7219$ {\it bit}. Therefore, for any $ R' \geq 0.7219$ $C_2^{lb}$ reaches its maximal value.}
  \label{fig:example1}
\end{figure}

{\it Discussion}:
\begin{enumerate}
\item
  The algorithm that we used to calculate $C_2^{lb}(R')$ and $R'$ combines a grid-search and a Blahut-Arimoto-like algorithms. We first construct a grid of probabilities of the random variable $V_2$ given $S_2$, namely, $w(v_2|s_2)$. Then, for every probability $w(v_2|s_2)$ such that $I(V_2;S_2|S_1)$ is close enough to $R'$ we calculate the maximum of $I(U;Y,S_2|V_2) - I(U;S_1|V_2)$ using the iterative algorithm described in the next section. We then choose the maximum over those maximums and declare it to be $C_2^{lb}$. By taking a fine grid of the probabilities $w(v_2|s_2)$ the operation's result can be arbitrarily close to $C_2^{lb}$.
\item
  For a given joint PMF matrix ${\bf P_{S_1S_2}}$, we can see that $C^{lb}_2( R')$ is non-decreasing in $ R'$. Furthermore, since the expression $I(V_2;S_2|S_1)$ is bounded by $R_{\max} = \max_{p(v_2|s_2)} I(V_2;S_2|S_1) = H(S_2|S_1)$, allowing $R'$ to be greater than $R_{\max}$ cannot improve $C_2^{lb}$ any more. i.e., $C_2^{lb}( R' = R_{\max}) = C_2^{lb}( R' >R_{\max})$. Therefore, it is enough to allow $ R'=R_{\max}$ to achieve $C_2^{lb}$, as if the encoder is fully informed with $S_2$.
\item
Although $C_2^{lb}$ is a lower bound on the capacity, it can be significantly greater than the Cover-Chiang and the Gelfand-Pinsker rates for some channel models, as can be seen in this example. Moreover, we can actually state that $C_2^{lb}$ is always greater than or equal to the Gelfand-Pinsker and the Cover-Chiang rates. This is due to the fact that when $R' = 0$, $C_2^{lb}$ coincides with the Cover-Chiang rate, which, in its turn, is always greater than or equal to the Gelfand-Pinsker rate; since $C_2^{lb}$ is also non-decreasing in $R'$, it is obvious that our assertion holds.
\end{enumerate}
\end{example}

\begin{example}[{\it Source coding Case 1 for a  binary-symmetric source and Hamming distortion}] \label{ex:RD1}
  Consider the source $X = S_1 \oplus S_2$, where $S_1, S_2 \sim \iid \mbox{Bernoulli} (0.5)$, and consider the problem setting depicted in Case~1 of the source coding problems. It is sufficient for the decoder to reconstruct $S_1$ with distortion $\mathbb{E} \big[d(S_1, \hat S_1) \big] \leq D$ in order to reconstruct $X$ with the same distortion. Furthermore, the two rate-distortion problem settings illustrated in Figure~\ref{fig:equivalent_RD} are equivalent.
  \begin{figure}[h!]
    \centering \small
    \psfrag{a}{Setting $1$}
    \psfrag{b}{Setting $2$}
    \psfrag{d}{Dec}
    \psfrag{e}{Enc} \footnotesize
    \psfrag{s}{$S_1$}
    \psfrag{v}{$S_2$}
    \psfrag{c}{$\hat S_1$}
    \psfrag{x}{$X$}
    \psfrag{r}{$R+ R'$}
    \psfrag{R}{$R$}
    \psfrag{t}{$ R'$}
    \psfrag{y}{$\hat X$}
    \includegraphics[scale = 0.6]{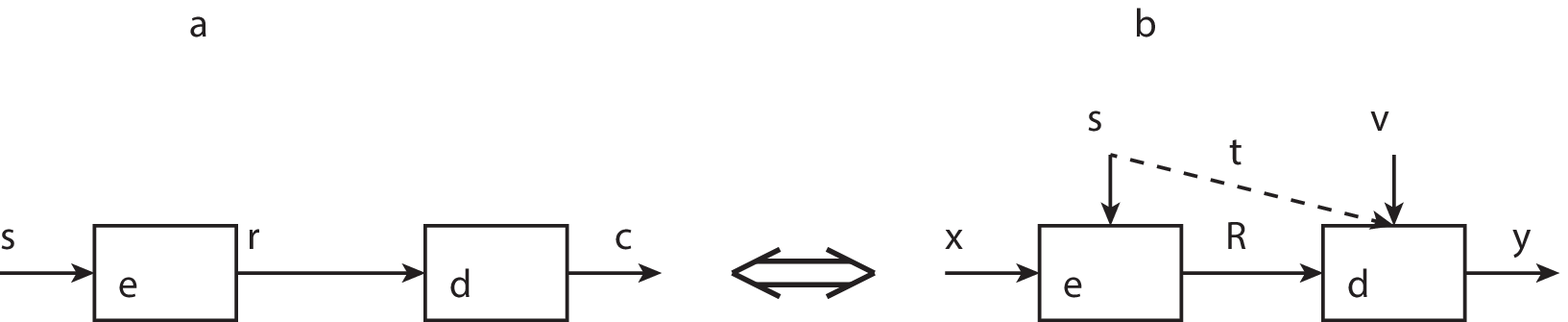}
    \caption{The equivalent rate-distortion problem for Case~1 for the source $X = S_1 \oplus S_2$ where $S_1, S_2 \sim \iid \mbox{Bernoulli} (0.5)$.}
    \label{fig:equivalent_RD}
  \end{figure}

  For every achievable rate in Setting 1, $\E{ d(S_1, \hat S_1)} \leq D$. Denote $\hat X \triangleq \hat S_1 \oplus S_2$, then, $d(S_1,\hat S_1) = S_1 \oplus \hat S_1 = (S_1 \oplus S_2) \oplus (\hat S_1 \oplus S_2) = X \oplus \hat X = d(X, \hat X)$ and, therefore, $\E{d(S_1, \hat S_1)} \leq D$ in Setting 1 $\Rightarrow\ \E{d(X,\hat X)} \leq D$ in Setting 2. In the same way, for Setting 2, denote $\hat S_1 \triangleq \hat X \oplus S_2$. Then, $d(X, \hat X) = X \oplus \hat X = S_1 \oplus \hat S_1$ and, therefore, $\E{d(X,\hat X)} \leq D$ in Setting 2 $\Rightarrow \ \E{d(S_1,\hat S_1)} \leq D$ in Setting 1. Hence, we can conclude that the two settings are equivalent and, for any given $0 \leq D$ and  $0 \leq  R'$, the rate-distortion function is
\begin{align}
  \label{eq:116}
  R(D) = \left\{
  \begin{array}{l r}
    1-H(D) -  R' & 1-H(D)-  R' \geq 0\\
    0& 1-H(D) -  R' < 0
  \end{array} \right..
\end{align}
In Figure~\ref{fig:Example2_RD} we present the plot resulting for this example. It is easy to verify that the Wyner \& Ziv rate and the Cover \& Chiang rate for this setting are $R_{WZ}(D) = R_{CC}(D) = \max \big\{1-H(D), 0\big\}$.
\begin{figure}[h!]
  \centering \footnotesize
  \psfrag{a}{$ R' = 0$}
  \psfrag{b}{$ R' = 0.1$}
  \psfrag{c}{$ R' = 0.2$}
  \psfrag{d}{$ R' = 0.3$}
  \psfrag{e}{$ R' = 0.4$}
  \psfrag{f}{$0.1$ bit}
  \psfrag{D}{$D$}
  \psfrag{R}{$R$ [{\it bits}]}
  \includegraphics[scale = 0.6]{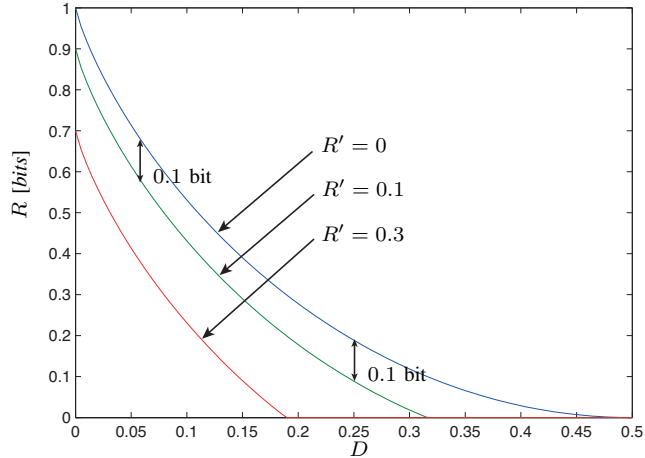}
  \caption{Example 2. {\it Source coding Case~1 for binary-symmetric source and Hamming distortion}. The source is given by $X = S_1 \oplus S_2$, where $S_1, S_2 \sim \mbox{Bernoulli} (0.5)$. The graph shows the rate-distortion function for different values of $R'$.}
  \label{fig:Example2_RD}
\end{figure}
\end{example}

\begin{example}[{\it Geometric programming and the Wyner-Ziv problem}]
  Consider the traditional Wyner-Ziv~\cite{1055508} problem where the source, $X$, and the side information, $S$, are distributed according to $X \sim {\rm Bernoulli}(0.5)$ and $\Pr\{S \neq X\} = 0.3$. We calculated the rate-distortion function, $R(D) = \min_{p(u|x) p(\hat x|u,s)} I(U;X|S)$ s.t. $\E{d(X,\hat X) \leq D}$, by using three different methods: first by using \cite[Theorem II]{1055508}, second by using \cite[Proposition 3]{Chiang04geometricprogramming} and third by using the geometric programming solution we introduced in Theorem~\ref{theoremGP}. The plot resulting from this computation is brought in Figure~\ref{fig:ExampleGeoProgWZ}.
  \begin{figure}[h!]
    \centering \small
    \psfrag{a}{Wyner-Ziv rate}
    \psfrag{b}{Chiang-Boyd lower bound}
    \psfrag{c}{Dual geometric program}
    \psfrag{d}{$D$}
    \psfrag{r}{Rate [{\it bits}]}
    \includegraphics[scale = 0.7]{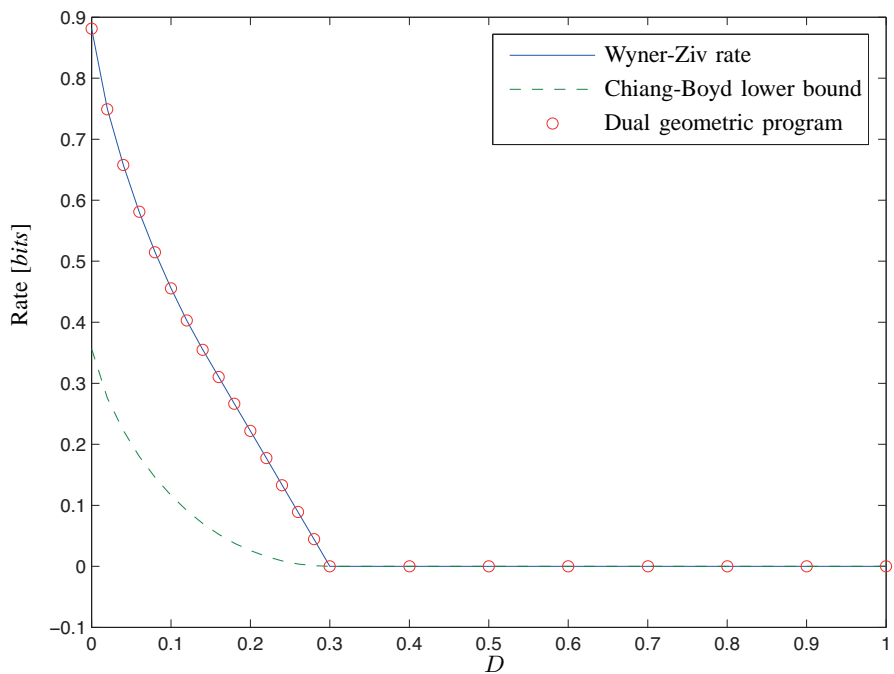}
    \caption{Example 3. {\it Geometric programming and Wyner-Ziv.} The source and the side information distribute $X \sim {\rm Bernoulli} (0.5)$ and $\Pr\{S \neq X\} = 0.3$.}
    \label{fig:ExampleGeoProgWZ}
  \end{figure}

  It can be seen in the figure that the geometric program, which was calculated according to Theorem~\ref{theoremGP}, is tight to the Wyner-Ziv rate.
\end{example}

%
\begin{example}[{\it Geometric programming and source coding Case 1}] 
  \begin{figure}[h!]
    \centering
    \psfrag{s}{$S_1$}
    \psfrag{S}{$S_2$}
    \psfrag{X}{$X$}
    \psfrag{Z}{$Z_0$}
    \psfrag{z}{$Z_1$}
    \includegraphics[scale = 0.8]{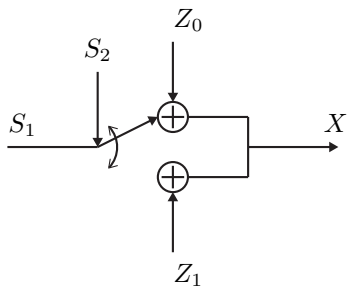}
    \caption{Example 4. {\it Source coding Case 1} with binary symmetric source generation, as given in~(\ref{eq:29})}
    \label{fig:binary_source}
  \end{figure}

  Again, consider a rate-distortion problem as outlined in Case~1 with a binary-symmetric source and Hamming distortion. The source, $X$, is the output of the system illustrated in Figure~\ref{fig:binary_source}, $S_1, S_2 \sim \iid \mbox{Bernoulli} (0.5)$, $S_2$ is controlling a switch, $Z_0 \sim \mbox{Bernoulli}\left(0.3 \right)$ and $Z_1 \sim \mbox{Bernoulli} \left(0.001 \right)$. The output of this system can be expressed as
  \begin{align}
    \label{eq:29}
    X = \left\{
      \begin{array}[l]{l r}
        S_1 \oplus Z_0, & S_2 = 0\\
        S_1 \oplus Z_1, & S_2 = 1
      \end{array}\right..
  \end{align}
 This source coding problem was introduced by Cheng, Stankovic and Xiong~\cite{DBLP:journals/tit/ChengSX05} for the case where the users are not allowed to share with each other their partial side information ($R'=0$). The rate-distortion expression for this problem is $ R_1(D) = \min  I(U;X,S_1|V_1) - I(U;S_2|V_1)$, where the minimization is over all $p(v_1|s_1) p(u|x, s_1, v_1)p(\hat{x}|u,s_2, v_1)$  s.t.  $R' \geq I(V_1;S_1|S_2)$ and that $\mathbb{E}\Big[\frac{1}{n}\sum_{i=1}^nd(X_i,\hat{X}_i)\Big] \leq D$. We solve this example by using the geometric programming expression we developed in Theorem~\ref{theoremGP}. The algorithm we developed in order to solve this problem uses some of the main principles we used in the algorithm that we developed for Example~\ref{ex:CC}  (Algorithm~1) and that is detailed in Section~\ref{sec:algorithm}. For this reason, we now bring a summary of the algorithm for this example.

First, as claimed in Section~\ref{sec:geom-progr}, it is possible to write the expression for the rate-distortion as $R(D) = \min I(T;X,S_1|V_1) - I(T;S_2|V_1)$ where the minimization is over all $w(v_1|s_1) q(t|x,s_1,v_1)$ s.t.  $R' \geq I(V_1 ; S_1|S_2)$ and that $\mathbb{E}\Big[\frac{1}{n}\sum_{i=1}^nd(X_i,T(S_2,V_1))\Big] \leq D$. The variable $T$ is the mapping $T: \mathcal{S}_2 \times \mathcal{V}_1 \to \hat{\mathcal{X}}$. It can be verified that for every fixed probability, $w(v_1|s_1)$, the function $I(T;X,S_1|V_1) - I(T;S_2|V_1)$ is a convex function of $q(t|x,s_1,v_1)$. Now, we construct a fine grid of probabilities $w(v_1|s_1)$, and we keep those $w(v_1|s_1)$ for which $R' \geq I(V_1;S_1|S_2) \geq R'-\epsilon$ in the array $\mathcal{W}^*$. At this point, for every $w(v_1|s_1) \in \mathcal{W}^*$ that we kept, we let $R_w(D)$ be the solution for the following geometric program
\begin{align}
  \label{eq:111}
  \begin{array}[l]{l l}
    \mbox{maximize} & \sum_{x,s_1,v_1} \alpha_{x,s_1,v_1} p(x,s_1,v_1) - \gamma D \\
    \mbox{subject to} & \alpha_{x,s_1,v_1} +  \sum_{s_2} p(s_2|x, s_1) \Big[ \log p(x,s_1|s_2,v_1) - \gamma d\big( x,t(s_2,v_1) \big) - y_{x,s_1,s_2,v_1,t} \Big]  \leq 0,\ \ \forall x,s_1,v_1,t,\\
    & \log \Big( \sum_{x,s_1} \exp \big\{ y_{x,s_1,s_2,v_1,,t}\big\} \Big) \leq 0, \quad \forall s_2,v_1,t,\\
    & \gamma \geq 0,
  \end{array}
\end{align}
where the variables of the maximization are $\balpha \in \mathbb{R}^{|\mathcal{X}| |\mathcal{S}_1| |\mathcal{V}_1|}, \gamma \in \mathbb{R}$ and ${\bf y} \in \mathbb{R}^{|\mathcal{X}| |\mathcal{S}_1| |\mathcal{S}_2| |\mathcal{V}_1| |\mathcal{T}|}$. It can be verified that this geometric program is a generalization of the geometric program we developed in Theorem~\ref{theoremGP} and that it corresponds to the problem of minimizing $I(T;X,S_1|V_1) - I(T;S_2|V_1)$ over $q(t|x,s_1,v_1)$ s.t. $\mathbb{E}\Big[\frac{1}{n}\sum_{i=1}^nd(X_i,\hat{X}_i)\Big] \leq D$ (for a fixed probability $w(v_1|s_1)$). Therefore, all we are left to do now is to declare
\begin{align}
  R(D) = \min_{w(v_1|s_1) \in \mathcal{W}^*} R_w(D).
\end{align}
This concludes the summary of the algorithm for solving this example.

The numeric result of the calculation of this rate-distortion function is brought in Figure~\ref{fig:RD_ex2}.
  \begin{figure}[h!]
    \centering \small 
    \psfrag{r}{Rate [{\it bits}]}
    \psfrag{D}{$D$}
    \psfrag{a}{$ R' = 0$}
    \psfrag{b}{$ R' = 0.25$}
    \psfrag{c}{$ R' = 0.5$}
    \psfrag{d}{$ R' = 0.75$}
    \psfrag{e}{$ R' = 1$}
    \includegraphics[scale = 0.8]{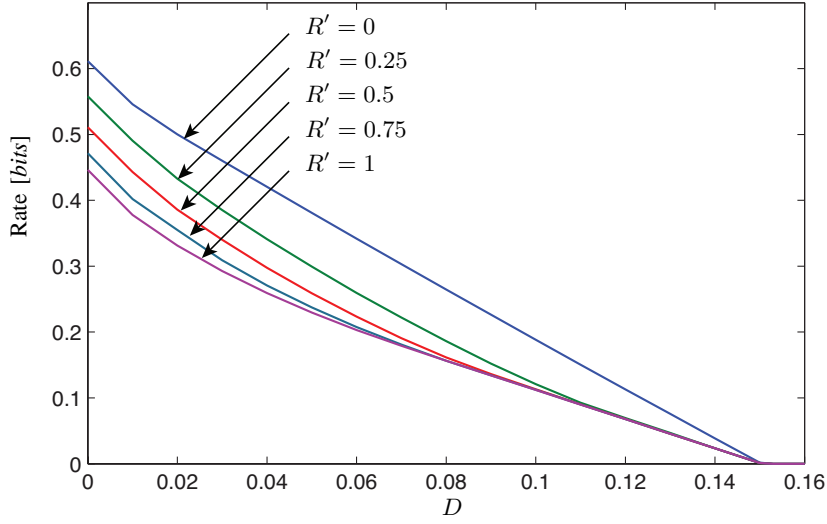}
    \caption{Example 4. {\it Geometric programming and source coding Case 1.} The source $X$ is depicted in Figure~\ref{fig:binary_source} and the distortion is the Hamming distortion.}
    \label{fig:RD_ex2}
  \end{figure}
\end{example}

\section{Semi-Iterative Algorithm}
\label{sec:algorithm}

In this section we provide algorithms that numerically calculate the lower bound on the capacity of Case 2 of the channel coding problems. The calculation of the Gelfand-Pinsker and the Wyner-Ziv problems has been addressed in many papers in the past, including~\cite{DBLP:journals/tit/HeegardG83},~\cite{Willems1983},~\cite{1365218} and~\cite{DBLP:journals/tit/ChengSX05}. All these algorithms are based on Arimoto's~\cite{Arimoto1972a} and Blahut's~\cite{Blahut72computationof} algorithms and on the fact that the Wyner-Ziv and the Gelfand-Pinsker problems can be presented as convex optimization problems. On the contrary, our problems are not convex in all of their optimization variables and, therefore, cannot be presented as convex optimization problems. In order to solve our problems we devised a different approach which combines a grid-search and a Blauhut-Arimoto-like algorithm. In this section, we provide the mathematical justification for those two algorithms. Other algorithms to numerically compute the channel capacity or the rate-distortion of the rest of the cases presented in this paper can be derived using the principles that we describe in this section.

\subsection{An algorithm for computing the lower bound on the capacity of Case 2}
\label{sec:alg_channel}
\begin{figure}[h!]
  \centering
  \small
  \psfrag{A}{\ \ Encoder}
    \psfrag{B}{\ \ Channel}
    \psfrag{D}{\ \ Decoder}
    \psfrag{W}{$W$}
    \psfrag{w}{$\hat{W}$}
    \psfrag{X}{$X^n$}
    \psfrag{x}{$X_i$}
    \psfrag{Y}{$Y^n$}
    \psfrag{y}{$Y_i$}
    \psfrag{T}{$S_1^n$}
    \psfrag{t}{$S_{1,i}$}
    \psfrag{R}{$S_2^n$}
    \psfrag{r}{$S_{2,i}$}
    \psfrag{J}{$R'$}
    \psfrag{K}{$R'$}
    \includegraphics[scale = 0.5]{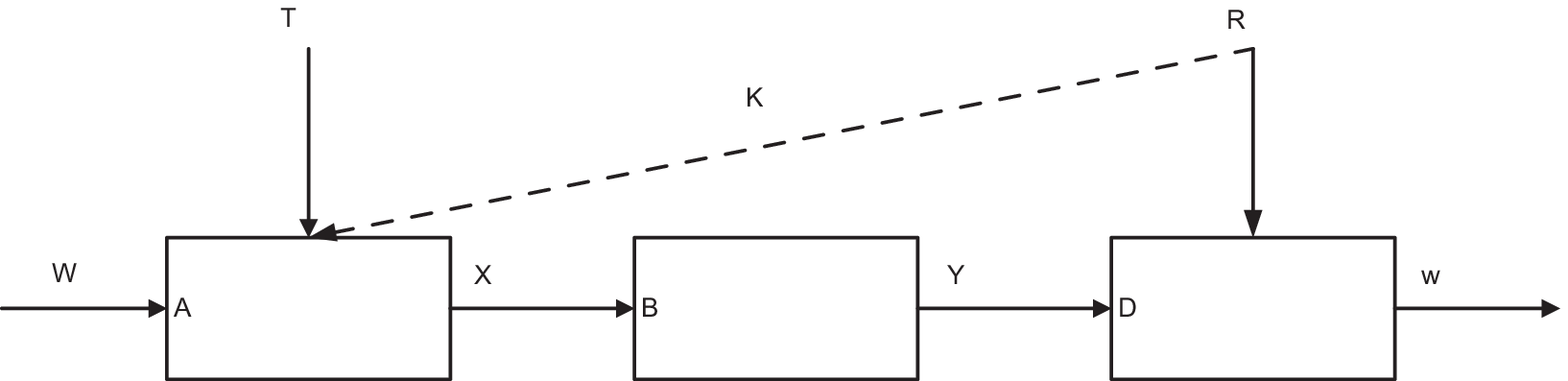}
  \caption{Channel coding: Case 2. $C^{lb}_2 = \max I(U;Y,S_2|V_2) - I(U;S_1|V_2)$, where the maximization is over all PMFs $w(v_2|s_2) p(u|s_1,v_2)p(x|s_1,v_2,u)$ such that $R' \geq I(V_2;S_2|S_1)$.}
  \label{fig:alg_channel}
\end{figure}
Consider the channel in Figure~\ref{fig:alg_channel} described by  $p(y|x,s_1,s_2)$ and consider the joint PMF $p(s_1,s_2)$. The capacity of this channel is lower bounded by $\max I(U;Y,S_2|V_2)-I(U;S_1|V_2)$, where the maximization is over all PMFs $p(s_1,s_2) w(v_2|s_2) p(u|s_1,v_2) p(x|s_1,v_2,u) p\big(y|x, s_1,s_2\big)$ such that $R' \geq I(V_2;S_2|S_1)$. Notice that the lower bound expression is not concave in $w(v_2|s_2)$, which is the main difficulty with the computation of it. We first present an outline of the semi-iterative algorithm we developed, then we present the mathematical background and justification for the algorithm and, finally, we present the detailed algorithm.

For any fixed PMF $w(v_2|s_2)$ denote
\begin{align}
  R_w & \triangleq I(V_2;S_2|S_1),\\
  C^{lb}_{2,w} & \triangleq \max_{p(u|s_1,v_2)p(x|u,s_1,v_2)} I(U;Y,S_2|V_2) - I(U;S_1|V_2). \label{eq:103}
\end{align}

Then, the lower bound on the capacity , $C^{lb}_2( R')$, can be expressed as
\begin{align}
  \label{eq:102}
  C^{lb}_2( R') = \max_{ \substack{w(v_2|s_2) \\  {\rm s.t.}\ R' \geq R_w}} \max_{p(u|s_1,v_2) p(x|u,s_1,v_2)} [I(U;Y,S_2|V_2) - I(U;S_1|V_2)] \triangleq \max_{ \substack{w(v_2|s_2) \\  {\rm s.t.}\ R' \geq R_w}} C^{lb}_{2,w}.
\end{align}
The outline of the algorithm is as follows: for any given rate $R' \leq H(S_2|S_1)$, $\epsilon>0$ and $\delta > 0$,
\begin{enumerate}
\item Establish a fine and uniformly spaced grid of legal PMFs, $w(v_2|s_2)$, and denote the set of all of those PMFs as $\mathcal{W}$.
\item Establish the set  $\mathcal{W}^* := \Big\{ w(v_2|s_2) \ |\  w(v_2|s_2) \in \mathcal{W}$ and $R'-\epsilon \leq R_w \leq R'\Big\}$. This set is the set of all PMFs $w(v_2|s_2)$ such that $R_w$ is $\epsilon$-close to $R'$ from below. If $\mathcal{W}^*$ is empty, go back to step 1 and make the grid finer. Otherwise, continue.
\item For every $w(v_2|s_2) \in \mathcal{W}^*$, perform a Blahut-Arimoto-like optimization to find $C^{lb}_{2,w}$ with accuracy of $\delta$.
\item Declare $C_2^{lb}(R') = \max_{w(v_2|s_2) \in \mathcal{W}^*} C^{lb(\epsilon,\delta,\mathcal{W})}_2(R')$.
\end{enumerate}
{\it Remarks}:
(a) We considered only those $R'$s such that $R' \leq H(S_2|S_1)$ since $H(S_2|S_1)$ is the maximal value that $I(V_2;S_2|S_1)$ takes. The interpretation of this is that if the encoder is informed with $S_1$, we cannot increase its side information about $S_2$ in more than $H(S_2|S_1)$. Therefore, for any $H(S_2|S_1) \leq R'$, we can limit $R'$ to be equal to $H(S_2|S_1)$ in order to compute the capacity.
(b) Since $C^{lb}_{2,w}(R')$ is continuous in $w(v_2|s_2)$ and bounded (for example, by $I(X;Y|S_1,S_2)$ from above and by $I(X;Y)$ from below), $C_2^{(\epsilon,\delta,\mathcal{W})}(R')$ can be arbitrarily close to $C_2^{lb}(R')$ for $\epsilon \to 0,\ \delta \to 0$ and $|\mathcal{W}| \to \infty$.

{\bf Mathematical background and justification}

Here we focus on finding the lower bound on the capacity of the channel for a fixed distribution $w(v_2|s_2)$, i.e., finding $C^{lb}_{2,w}$. Note that the mutual information expression $I(U;Y,S_2|V_2) - I(U;S_1|V_2)$ is concave in $p(u|s_1,v_2)$ and convex in $p(x|u,s_1,v_2)$. Therefore, a standard convex maximization technique is not applicable for this problem. However, according to Dupuis, Yu and Willems~\cite{1365218}, we can write the expression for the lower bound as $C^{lb}_{2,w} = \max_{q(t|s_1,v_2)} I(T;Y,S_2|V_2) - I(T;S_1|V_2)$, where $q(t|s_1,v_2)$ is a probability distribution over  the set of all possible strategies $t: \mathcal{S}_1 \times \mathcal{V}_2 \to \mathcal{X}$, the input symbol $X$ is selected using $x=t(s_1,v_2)$ and $p(y|x,s_1,s_2) = p(y|x,s_1,s_2,v_2) = p\big(y|t(s_1,v_2),s_1,s_2,v_2\big)$.
Now, since $I(T;Y,S_2|V_2) - I(T;S_1|V_2)$ is concave in $q(t|s_1,v_2)$, we can use convex optimization methods to derive $C_{2,w}^{lb}$.

Denote the PMF
\begin{align}
  \label{eq:p}
  p(s_1,s_2,v_2,t,y) & \triangleq p(s_1,s_2) w(v_2|s_2) q(t|s_1,v_2) p(y|t,s_1,s_2,v_2),
\end{align}
 and denote also
\begin{align}
  \label{eq:104}
  J_w(q,Q) &\triangleq \sum_{s_1,s_2,v_2,t,y} p(s_1,s_2,v_2,t,y) \log \frac{Q(t|y,s_2,v_2)}{q(t|s_1,v_2)},\\
  Q^*(t|y,s_2,v_2) &\triangleq \frac{\sum_{s_1} p(s_1,s_2,v_2,t,y)}{\sum_{s_1,t'} p(s_1,s_2,v_2,t',y)}.\label{eq:Q}
\end{align}
Notice that $Q^*(t|y,s_2,v_2)$ is a marginal distribution of $p(s_1,s_2,v_2,t,y)$ and that $J_w(q,Q^*) = I(T;Y,S_2|V_2)-I(T;S_1|V_2)$ for the joint PMF $p(s_1,s_2,v_2,t,y)$.

The following lemma is the key for the iterative algorithm.
\begin{lemma}
  \begin{align}
    \label{eq:105}
    C_{2,w}^{lb} = \sup_{q'(t|s_1,v_2)} \max_{Q'(t|y,s_2,v_2)} J_w(q',Q').
  \end{align}
\end{lemma}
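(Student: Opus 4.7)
The plan is to execute a standard Blahut--Arimoto style double-maximization argument. I will first show that for any fixed $q'(t|s_1,v_2)$, the inner maximum over $Q'$ is attained precisely at the marginal $Q^*$ defined in (\ref{eq:Q}), and at that point $J_w(q',Q^*)$ equals $I(T;Y,S_2|V_2) - I(T;S_1|V_2)$. Taking the outer supremum over $q'$ will then yield $C_{2,w}^{lb}$ by the strategy reformulation of Dupuis, Yu, and Willems~\cite{1365218} that is already invoked in the text.

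The key step is the inner maximization. Fix $q'(t|s_1,v_2)$ and consider
\begin{align*}
J_w(q',Q^*) - J_w(q',Q') = \sum_{s_1,s_2,v_2,t,y} p(s_1,s_2,v_2,t,y) \log \frac{Q^*(t|y,s_2,v_2)}{Q'(t|y,s_2,v_2)},
\end{align*}
where $p(s_1,s_2,v_2,t,y)$ is the joint law from (\ref{eq:p}). Since the log term is independent of $s_1$, I would first marginalize $s_1$ and apply the identity
\begin{align*}
\sum_{s_1} p(s_1,s_2,v_2,t,y) = Q^*(t|y,s_2,v_2)\, p(y,s_2,v_2),
\end{align*}
which is just the definition (\ref{eq:Q}) rewritten. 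This recasts the difference as the non-negative quantity
\begin{align*}
\sum_{s_2,v_2,y} p(s_2,v_2,y)\, D\bigl(Q^*(\cdot|y,s_2,v_2)\,\|\,Q'(\cdot|y,s_2,v_2)\bigr) \geq 0,
\end{align*}
with equality if and only if $Q' = Q^*$ (on the support of $p(y,s_2,v_2)$). Therefore $\max_{Q'} J_w(q',Q') = J_w(q',Q^*)$.

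Finally, the text already establishes that $J_w(q',Q^*) = I(T;Y,S_2|V_2) - I(T;S_1|V_2)$ for the joint law induced by $q'$. Taking $\sup_{q'}$ on both sides therefore yields $\sup_{q'(t|s_1,v_2)} [I(T;Y,S_2|V_2) - I(T;S_1|V_2)]$, which by the strategy-based characterization equals $C_{2,w}^{lb}$. The argument is essentially bookkeeping; the only subtlety worth being careful about is that $Q^*$ depends on $q'$ through the induced joint distribution, so the two optimizations cannot be literally exchanged — the correct statement is that $q'$ is optimized after the inner maximizer has been substituted in closed form.
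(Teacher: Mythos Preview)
Your argument is correct and follows the standard Blahut--Arimoto double-maximization approach; the paper itself does not prove this lemma in detail but simply cites Yeung~\cite{Yeung:2008:ITN:1457455}, and your inner-maximization step via the K-L divergence is exactly what the paper later records separately as Lemma~5. In short, you have written out the argument the paper only points to.
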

The proof for this is brought by Yeung in~\cite{Yeung:2008:ITN:1457455}. In addition, Yeung shows that the two-step alternating optimization procedure converges monotonically to the global optimum if the optimization function is concave. Hence, if we show that $J_w(q,Q)$ is concave, we can maximize it using an alternating maximization algorithm over $q$ and $Q$.
\begin{lemma}
\label{concavityOfJ}
  The function $J_w(q,Q)$ is concave in $q$ and $Q$ simultaneously.
\end{lemma}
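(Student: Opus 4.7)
The plan is to decompose $J_w$ into a nonnegative linear combination of simple bivariate building blocks and then exploit a well-known joint concavity fact about one such block. Writing out the sum,
\begin{align*}
J_w(q,Q) = \sum_{s_1,s_2,v_2,t,y} p(s_1,s_2)\,w(v_2|s_2)\,p(y|t,s_1,s_2,v_2) \cdot q(t|s_1,v_2) \log \frac{Q(t|y,s_2,v_2)}{q(t|s_1,v_2)},
\end{align*}
I observe that, since $w$ and the channel law $p(y|t,s_1,s_2,v_2)$ are held fixed, the coefficient
$p(s_1,s_2)\,w(v_2|s_2)\,p(y|t,s_1,s_2,v_2)$ in front of each summand is a nonnegative constant independent of the optimization variables. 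Hence each summand depends on only two scalar entries of the full vector $(q,Q)$, namely $q(t|s_1,v_2)$ and $Q(t|y,s_2,v_2)$.

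The key step is to show that the bivariate function
\begin{align*}
\phi(a,b) \;:=\; b\,\log\frac{a}{b}, \qquad a,b>0,
\end{align*}
is jointly concave. I would give either of two short arguments: (i) recognize $\phi(a,b) = -b\log(b/a)$, the negative of the Kullback–Leibler functional, which is well known to be jointly convex in its two positive arguments, so its negative is jointly concave; or equivalently (ii) note that $\phi$ is the perspective of the concave function $\log(\cdot)$, and the perspective operation preserves concavity. A direct Hessian calculation also works: the $2\times 2$ Hessian has diagonal entries $-1/b$ and $-b/a^2$, off-diagonals $1/a$, and determinant $0$, so it is negative semidefinite.

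Applying this to each summand, each term $p(s_1,s_2)\,w(v_2|s_2)\,p(y|t,s_1,s_2,v_2) \cdot \phi\bigl(Q(t|y,s_2,v_2), q(t|s_1,v_2)\bigr)$ is jointly concave in the two coordinates it involves; extending trivially to the full vector $(q,Q)$ keeps it jointly concave. Because a sum of jointly concave functions with nonnegative weights is jointly concave, $J_w(q,Q)$ is jointly concave in $(q,Q)$, completing the proof.

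There is no real obstacle in this argument; the only subtlety is to ensure that the different indexings of $q$ and $Q$ do not cause coupling beyond what is captured by the bivariate building block, and this follows immediately because each summand depends on exactly one entry of $q$ and one entry of $Q$.
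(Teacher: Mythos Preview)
Your proof is correct and is essentially the same argument as the paper's. The paper applies the two-term log-sum inequality to each summand, which is exactly the statement that $b\log(a/b)$ is jointly concave in $(a,b)$; you phrase the same fact via the perspective of $\log(\cdot)$ (equivalently, joint convexity of KL or a direct Hessian check), and then sum with nonnegative weights just as the paper does.
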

We can now proceed to calculate the steps in the iterative algorithm.
 \begin{lemma}
   For a fixed $q,~ J_w(q,Q)$ is maximized for $Q=Q^*$.
 \end{lemma}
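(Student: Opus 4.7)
The plan is to reduce the claim to Gibbs' inequality (non-negativity of KL divergence) by showing that $Q^*$ is exactly the conditional distribution $p(t\mid y,s_2,v_2)$ induced by the joint PMF in~\eqref{eq:p}. Since $q$ and $w$ are fixed, the full joint $p(s_1,s_2,v_2,t,y)$ defined in~\eqref{eq:p} is fixed, and the optimization over $Q$ is performed subject only to the simplex constraints $\sum_t Q(t\mid y,s_2,v_2) = 1$ for every $(y,s_2,v_2)$.

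First, I would observe that the numerator in the definition \eqref{eq:Q} of $Q^*$ is the marginal $p(t,y,s_2,v_2)$ and the denominator is $p(y,s_2,v_2)$, so $Q^*(t\mid y,s_2,v_2) = p(t\mid y,s_2,v_2)$. Next, I would form the difference
\begin{align*}
J_w(q,Q^*) - J_w(q,Q)
&= \sum_{s_1,s_2,v_2,t,y} p(s_1,s_2,v_2,t,y) \log \frac{Q^*(t\mid y,s_2,v_2)}{Q(t\mid y,s_2,v_2)}.
\end{align*}
Since the integrand is independent of $s_1$, summing out $s_1$ gives
\begin{align*}
J_w(q,Q^*) - J_w(q,Q)
&= \sum_{s_2,v_2,y} p(y,s_2,v_2) \sum_t p(t\mid y,s_2,v_2) \log \frac{p(t\mid y,s_2,v_2)}{Q(t\mid y,s_2,v_2)} \\
&= \sum_{s_2,v_2,y} p(y,s_2,v_2)\, D\!\left(p(\cdot\mid y,s_2,v_2)\, \big\|\, Q(\cdot\mid y,s_2,v_2)\right) \;\geq\; 0,
\end{align*}
by the non-negativity of KL divergence, with equality if and only if $Q = Q^*$ on the support of $p(y,s_2,v_2)$.

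There is no real obstacle here; the only subtlety is the bookkeeping to verify that $Q^*$ is indeed the conditional distribution of $T$ given $(Y,S_2,V_2)$ under the joint \eqref{eq:p}, which is immediate from the definition. An alternative derivation, which I would mention only if needed, uses Lagrange multipliers for the simplex constraints and matches the KKT stationarity condition to $Q = Q^*$, recovering the same conclusion; but the KL-divergence argument above is shorter and shows the maximizer is unique on the support of $p(y,s_2,v_2)$.
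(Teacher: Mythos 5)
Your proof is correct and takes the same approach as the paper: the paper's one-line proof appeals to the fact that $Q^*$ is the marginal conditional of the joint PMF and the non-negativity of KL divergence, which is exactly the argument you spell out in detail.
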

 \begin{proof}
   The above follows from the fact that $Q^*$ is a marginal distribution of $p(s_1,s_2) w(v_2|s_2) q(t|s_1,v_2)\linebreak[4] p(y|t,s_1,s_2,v_2)$ and the property of the K-L divergence $D(Q^*\| Q') \geq 0$.
 \end{proof}
 \begin{lemma}
\label{maximizing_q}
   For a fixed $Q,~J_w(q,Q)$ is maximized for $q=q^*$, where $q^*$ is defined by
   \begin{align}
     \label{eq:q}
     q^*(t|s_1,v_2) = \frac{\prod_{s_2,y}Q(t|y,s_2,v_2)^{p(s_2|s_1,v_2) p(y|t,s_1,s_2,v_2)}}{\sum_{t'} \prod_{s_2,y}Q(t|y,s_2,v_2)^{p(s_2|s_1,v_2) p(y|t',s_1,s_2,v_2)}},
   \end{align}
and
\begin{align}
  \label{eq:107}
  p(s_2|s_1,v_2) = \frac{p(s_1,s_2) w(v_2|s_2)}{\sum_{s'_2}p(s_1,s'_2) w(v_2|s'_2)}.
\end{align}
 \end{lemma}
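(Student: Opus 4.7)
\bigskip

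\noindent\textbf{Proof sketch for Lemma \ref{maximizing_q}.} My plan is a straightforward Lagrangian calculation. Because $Q$ is held fixed, the only restriction on the variable $q(\,\cdot\,|s_1,v_2)$ is that it be a stochastic kernel, i.e.\ $\sum_t q(t|s_1,v_2)=1$ for every pair $(s_1,v_2)$, together with the nonnegativity constraints $q(t|s_1,v_2)\ge 0$. By Lemma \ref{concavityOfJ}, $J_w(q,Q)$ is concave in $q$; hence any interior stationary point of the Lagrangian that satisfies the stochasticity constraints is automatically the global maximizer, and the nonnegativity constraints can be shown to be strictly satisfied by the resulting $q^\ast$ (it is an exponential of a finite real quantity, hence strictly positive).

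Expanding from the definition in \eqref{eq:p}--\eqref{eq:104}, write
\begin{align*}
J_w(q,Q)=\sum_{s_1,s_2,v_2,t,y}p(s_1,s_2)\,w(v_2|s_2)\,q(t|s_1,v_2)\,p(y|t,s_1,s_2,v_2)\,\log\frac{Q(t|y,s_2,v_2)}{q(t|s_1,v_2)}.
\end{align*}
Form the Lagrangian
\begin{align*}
\mathcal{L}(q,\bm\mu)=J_w(q,Q)-\sum_{s_1,v_2}\mu_{s_1,v_2}\!\left(\sum_t q(t|s_1,v_2)-1\right).
\end{align*}
Using $\partial(q\log Q-q\log q)/\partial q=\log Q-\log q-1$ and differentiating with respect to a generic $q(t|s_1,v_2)$ gives
\begin{align*}
\frac{\partial \mathcal{L}}{\partial q(t|s_1,v_2)}=\sum_{s_2,y}p(s_1,s_2)\,w(v_2|s_2)\,p(y|t,s_1,s_2,v_2)\bigl[\log Q(t|y,s_2,v_2)-\log q(t|s_1,v_2)-1\bigr]-\mu_{s_1,v_2}.
\end{align*}

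The last step is to simplify and solve. Summing out $y$ uses $\sum_y p(y|t,s_1,s_2,v_2)=1$, and summing out $s_2$ uses $\sum_{s_2}p(s_1,s_2)w(v_2|s_2)=p(s_1,v_2)$, so that the coefficient of $\log q(t|s_1,v_2)+1$ collapses to $p(s_1,v_2)$. Dividing the stationarity equation by $p(s_1,v_2)$ and recognizing $p(s_1,s_2)w(v_2|s_2)/p(s_1,v_2)=p(s_2|s_1,v_2)$, as in \eqref{eq:107}, we obtain
\begin{align*}
\log q(t|s_1,v_2)=\sum_{s_2,y}p(s_2|s_1,v_2)\,p(y|t,s_1,s_2,v_2)\log Q(t|y,s_2,v_2)+\tilde\mu_{s_1,v_2},
\end{align*}
for a rescaled multiplier $\tilde\mu_{s_1,v_2}$. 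Exponentiating yields $q(t|s_1,v_2)\propto \prod_{s_2,y}Q(t|y,s_2,v_2)^{p(s_2|s_1,v_2)\,p(y|t,s_1,s_2,v_2)}$, and the normalization $\sum_t q(t|s_1,v_2)=1$ absorbs $e^{\tilde\mu_{s_1,v_2}}$ and produces exactly the stated formula~\eqref{eq:q}.

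There is no real obstacle here beyond careful bookkeeping; the only point that requires a moment's thought is making sure the coefficient of $\log q$ after marginalization is $p(s_1,v_2)$ (not the joint $p(s_1,s_2,v_2)$), which is what makes the conditional $p(s_2|s_1,v_2)$ appear as the exponent in the product. Concavity from Lemma \ref{concavityOfJ} then upgrades this stationary point to the unique maximizer, completing the proof.
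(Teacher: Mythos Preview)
Your proposal is correct and follows essentially the same route as the paper's proof: both form the Lagrangian with multipliers enforcing $\sum_t q(t|s_1,v_2)=1$, differentiate, divide through by $p(s_1,v_2)$ to expose $p(s_2|s_1,v_2)$ as the exponent, exponentiate, and normalize. If anything, your write-up is slightly more complete in explicitly invoking the concavity of Lemma~\ref{concavityOfJ} to certify that the stationary point is the global maximizer and in noting that the resulting $q^\ast$ is automatically strictly positive.
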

Define $U_w(q)$ in the following way
\begin{align}
  \label{eq:111}
  U_w(q) = \sum_{s_1,v_2} p(s_1,v_2) \max_t \sum_{s_2,y} p(s_2|s_1,v_2) p(y|t,s_1,s_2,v_2) \log \frac{Q^*(t|y,s_2,v_2)}{q(t|s_1,v_2)},
\end{align}
where $Q^*$ is given in $($\ref{eq:Q}$)$, $p(s_1,v_2)$ and $p(s_2|s_1,v_2)$ are marginal distributions of the joint PMF $p(s_1,s_2,v_2,t,y) = p(s_1,s_2) w(v_2|s_2) q(t|s_1,v_2) p(y|t,s_1,s_2,v_2)$.
The following lemma will help us to define a termination condition for the algorithm.
\begin{lemma}
\label{U_w}
  For every $q(t|s_1,v_2)$ the function $U_w(q)$ is an upper bound on $C^{lb}_{w,2}$ and converges to $C_{2,w}^{lb}$ for a large enough number of iterations.
\end{lemma}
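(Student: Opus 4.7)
My plan is to prove both parts by exploiting concavity of the partially maximized objective $D(q) := \max_{Q'} J_w(q,Q') = J_w(q,Q^*(q))$ and recognizing $U_w(q)$ as the maximum, over admissible $q'$, of a tangent-plane upper bound on $D(\cdot)$. The main obstacle is a clean envelope-theorem gradient computation under the simplex constraints; once the gradient is in hand, the rest is a linear maximization over $q'$ plus a KKT-type identification at the fixed point $q^*$ of the iteration.

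Concretely: Lemma~\ref{concavityOfJ} gives joint concavity of $J_w(q,Q)$, so its partial maximum $D(q)$ is concave in $q$; hence for every admissible $q,q'$,
$D(q') \leq D(q) + \sum_{s_1,v_2,t}\frac{\partial D(q)}{\partial q(t|s_1,v_2)}\bigl[q'(t|s_1,v_2)-q(t|s_1,v_2)\bigr],$
with $C^{lb}_{2,w}=\max_{q'}D(q')$. By the envelope theorem, $\partial D / \partial q = (\partial J_w / \partial q)\big|_{Q=Q^*(q)}$, and a direct differentiation of $J_w$ (noting that $\sum_{s_2,y}p(s_2|s_1,v_2)p(y|t,s_1,s_2,v_2)=1$ when collecting the $-q\log q$ piece) yields $p(s_1,v_2)[F_q(s_1,v_2,t)-1]$, where $F_q(s_1,v_2,t) := \sum_{s_2,y}p(s_2|s_1,v_2)p(y|t,s_1,s_2,v_2)\log\frac{Q^*(t|y,s_2,v_2)}{q(t|s_1,v_2)}$ is the inner bracketed quantity in the definition (\ref{eq:111}) of $U_w$. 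Substituting back, the $-1$ terms cancel since $q$ and $q'$ each sum to one in $t$, and $\sum_{t,s_1,v_2}p(s_1,v_2)q(t|s_1,v_2)F_q(s_1,v_2,t)=D(q)$, leaving $D(q') \leq \sum_{s_1,v_2}p(s_1,v_2)\sum_t q'(t|s_1,v_2)F_q(s_1,v_2,t).$ Maximizing the right-hand side over probability kernels $q'$ puts all mass on a maximizing $t$ for each $(s_1,v_2)$, giving exactly $U_w(q)$, so $C^{lb}_{2,w}\leq U_w(q)$ for every $q$, proving the upper-bound assertion.

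For the convergence claim, equation~(\ref{eq:q}) applied at a fixed point $q^*$ of the alternating iteration gives $\log q^*(t|s_1,v_2) = \sum_{s_2,y}p(s_2|s_1,v_2)p(y|t,s_1,s_2,v_2)\log Q^*(t|y,s_2,v_2) - \log Z(s_1,v_2)$ for a $t$-independent normalizer $Z(s_1,v_2)$; substituting into $F_{q^*}$ collapses it to $F_{q^*}(s_1,v_2,t)=\log Z(s_1,v_2)$, independent of $t$. Consequently $\max_t F_{q^*}(s_1,v_2,t)=\sum_t q^*(t|s_1,v_2)F_{q^*}(s_1,v_2,t)$, so $U_w(q^*)=D(q^*)=C^{lb}_{2,w}$. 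Combined with the monotone global convergence of the two-step alternating maximization of the concave $J_w$ (Lemma~\ref{concavityOfJ} together with the cited Yeung argument), $q^{(k)}\to q^*$, and continuity of $U_w$ in $q$ delivers $U_w(q^{(k)})\to C^{lb}_{2,w}$, as claimed.
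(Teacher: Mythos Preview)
Your proof is correct and reaches the same destination as the paper, but by a genuinely different route. The paper establishes the key inequality $J_w(q_0,Q_0^*)\le\sum_{s_1,s_2,v_2,t,y}p_0(s_1,s_2,v_2,t,y)\log\frac{Q_1^*(t|y,s_2,v_2)}{q_1(t|s_1,v_2)}$ by a direct Kullback--Leibler chain-rule decomposition: it rewrites the gap as $D\bigl(p_0(s_2,v_2,y)\,\|\,p_1(s_2,v_2,y)\bigr)+D\bigl(p_0(s_1|s_2,v_2,t,y)\,\|\,p_1(s_1|s_2,v_2,t,y)\bigr)\ge 0$, and then replaces the $q_0$-average over $t$ by the pointwise maximum to obtain $U_w(q_1)$. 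You instead invoke concavity of $D(q)=\max_Q J_w(q,Q)$ together with the envelope/Danskin identity $\nabla D(q)=\partial_q J_w(q,Q)\big|_{Q=Q^*(q)}$, and recognize $U_w(q)$ as the linear-in-$q'$ tangent upper bound maximized over the simplex. Both arguments land on the identical intermediate inequality $D(q')\le\sum_{s_1,v_2}p(s_1,v_2)\sum_t q'(t|s_1,v_2)F_q(s_1,v_2,t)$ and finish the same way. The paper's divergence manipulation is more self-contained (no differentiability or Danskin hypotheses needed), while your approach is more structural and makes transparent \emph{why} $U_w$ is the right upper bound: it is precisely the concave tangent at $q$ optimized over the feasible set. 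For the convergence claim both proofs coincide: use the fixed-point form of (\ref{eq:q}) to see that $F_{q^*}(s_1,v_2,t)=\log Z(s_1,v_2)$ is $t$-independent, whence $U_w(q^*)=J_w(q^*,Q^*)=C^{lb}_{2,w}$, and appeal to the cited alternating-maximization convergence for concave objectives. One small caveat: your envelope step tacitly assumes a unique maximizer $Q^*(q)$ (or that you are working with supergradients); this holds here by strict concavity of $J_w$ in $Q$, but is worth stating.
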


\subsection{Semi-iterative algorithm}
\label{sec:semi-iter-algor}
The the algorithm for finding $C^{lb}_2( R')$ is brought in Algorithm~1. Notice that the result of this algorithm, $C_{2}^{(\epsilon,\delta,\mathcal{W})}(R')$, can be arbitrarily close to $C_2^{lb}(R')$ for $\epsilon \to 0,\ \delta\to 0$ and $|\mathcal{W}| \to \infty$.

\begin{algorithm}
\label{algorithm}
  \caption{Numerically calculating $C^{lb}_2(R')$}
{\fontsize{10}{18}\selectfont

  \begin{algorithmic}[1]
    \State Chose $\epsilon>0$, $\delta>0$
    \State Set $R' \gets \min\{R', H(S_2|S_1)\}$  \Comment {the amount of information needed for the encoder to know $S_2$ given $S_1$}
    \State Set $C \gets -\infty$
    \State Establish a fine and uniformly spaced grid of legal PMFs $w(v_2|s_2)$ and name it $\mathcal{W}$
\ForAll {$w$ {\bf in} $\mathcal{W}$}
  \State Compute $R_w$ using \begin{align*} R_w = I(V_2;S_2) - I(V_2;S_1)  \end{align*}

  \If { $R'-\epsilon \le R_{w} \le R'$}
    \State Set $Q(t|y,s_2,v_2)$ to be a uniform distribution over $\{1,2,\dots, |\mathcal{T}|\}$, where $\mathcal{T}$ is the alphabet of $t$.
    \Statex \qquad \quad i.e., $Q(t|y,s_2,v_2) = \frac{1}{|\mathcal{T}|},\ \ \forall t,y,s_2,v_2$

    \Repeat
      \State Set $q(t|s_1,v_2) \gets q^*(t|s_1,v_2)$ using
      \begin{align*}
        q^*(t|s_1,v_2) = \frac{\prod_{s_2,y} Q(t|y,s_2,v_2)^{p(s_2|s_1,v_2) p(y|t,s_1,s_2,v_2)}}{ \sum_{t'} \prod_{s_2,y} Q(t'|y,s_2,v_2)^{p(s_2|s_1,v_2) p(y|t',s_1,s_2,v_2)}}
      \end{align*}
      \State Set $(Q(t|y,s_2,v_2) \gets Q^*(t|y,s_2,v_2)$ using
      \begin{align*}
        Q^*(t|y,s_2,v_2) = \frac{\sum_{s_1} p(s_1,s_2,v_2,t,y)}{\sum_{s_1,t'} p(s_1,s_2,v_2,t',y)}
      \end{align*}
      \State Compute $J_w(q,Q)$ using
      \begin{align*}
        J_w(q,Q) = \sum_{s_1,s_2,v_2,t,y} p(s_1,s_2,v_2,t,y) \log \frac{Q(t|y,s_2,v_2)}{q(t|s_1,v_2)}
      \end{align*}
      \State Compute $U_w(q)$ using
      \begin{align*}
        U_w(q) = \sum_{s_1,v_2} p(s_1,v_2) \max_t \sum_{s_2,y} p(s_2|s_1,v_2) p(y|t,s_1,s_2,v_2) \log \frac{Q^*(t|y,s_2,v_2)}{q(t|s_1,v_2)}
      \end{align*}

    \Until {$U_w(q) - J(q,Q) < \delta$}

    \If {$C \le J_w(q,Q)$}
      \State Set $C \gets J_w(q,Q)$
    \EndIf
  \EndIf

\EndFor

\If {$C < 0$}  \Comment {there is no PMF $w(v_2|s_2) \in \mathcal{W}$ such that $R_{w}$ is $\epsilon$-close to $R'$ from below}
  \State {\bf go to} line 4 and make the grid finer
\EndIf
\State {\bf Declare} $C_2^{lb(\epsilon,\delta,\mathcal{W})}(R') = C$
  \end{algorithmic}}
\end{algorithm}

\section{Open Problems}
\label{sec:open_problems}
In this section we discuss the generalization of the channel capacity and the rate-distortion problems that we presented in Section \ref{sec:main-results}. We now consider the cases where the encoder and the decoder are informed with both a rate-limited description of the ESI and a rate-limited description of the DSI simultaneously, as illustrated in Figure~\ref{fig:channel12}. Although proofs for the converses are not provided in this paper and are considered as open problems, we do provide achievability schemes for both problems.

\subsection{A lower bound on the capacity of a channel with two-sided increased partial side information}
\label{sec:channel_open_problem}
\begin{figure}[h!]
  \centering
  \small
  \psfrag{A}{\ Encoder}
  \psfrag{B}{\ \ Channel}
  \psfrag{D}{\ Decoder}
  \psfrag{W}{$W$}
  \psfrag{w}{$\hat{W}$}
  \psfrag{X}{$X^n$}
  \psfrag{x}{$X_i$}
  \psfrag{Y}{$Y^n$}
  \psfrag{y}{$Y_i$}
  \psfrag{T}{$S_1^n$}
  \psfrag{t}{$S_{1,i}$}
  \psfrag{R}{$S_2^n$}
  \psfrag{r}{$S_{2,i}$}
  \psfrag{J}{$R'_{1}$}
  \psfrag{K}{$R'_{2}$}

  \includegraphics[scale = 0.5]{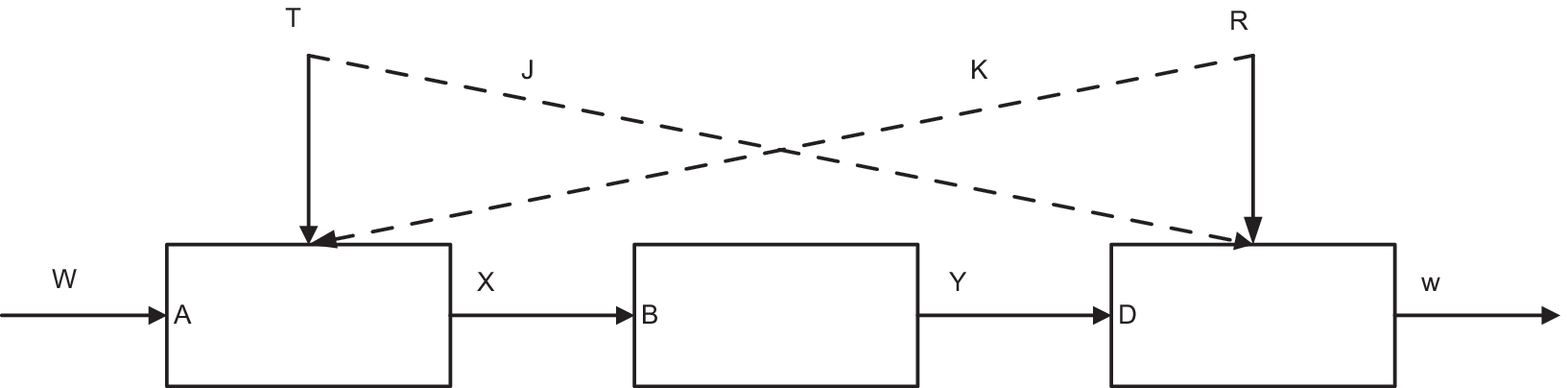}
  \caption{A lower bound on the capacity of a channel with two-sided increased partial side information: $C_{12} \geq \max I(U;Y,S_2|V_1,V_2) - I(U;S_1|V_1,V_2)$, where the maximization is over all PMFs $p(v_1|s_1) p(v_2|s_2) p(u|s_1,v_1,v_2) p(x|u,s_1,v_1,v_2)$ such that $R'_{1} \geq I(V_1;S_1) - I(V_1; Y,S_2,V_2)$ and $R'_{2} \geq I(V_2;S_2) - I(V_2;S_1,V_1)$.}
  \label{fig:channel12}
\end{figure}
Consider the channel illustrated in Figure~\ref{fig:channel12}, where $(S_{1,i},S_{2,i})~\iid \sim p(s_1,s_2)$. The encoder is informed with the ESI $(S_1^n)$ and rate-limited DSI and the decoder is informed with the DSI $(S_2^n)$ and rate-limited ESI. An $(n, 2^{nR}, 2^{nR'_{1}}, 2^{nR'_{2}})$ code for the discussed channel consists of three encoding maps:
\begin{align*}
  &f_{v1}: \quad {\cal S}_1^n \mapsto \{1,2,\dots, 2^{nR'_{1}}\},\\
  &f_{v2}: \quad {\cal S}_2^n \mapsto \{1,2,\dots, 2^{nR'_{2}}\},\\
  &f: \quad \{1,2,\dots,2^{nR}\} \times {\cal S}_1^n \times \{1,2,\dots,2^{nR'_{2}}\} \mapsto {\cal X}^n,
\end{align*}
and a decoding map:
\begin{align*}
  &g: {\cal Y}^n \times {\cal S}_2^n \times \{1,2,\dots,2^{nR'_{1}}\} \mapsto \{1,2,\dots,2^{nR}\}.
\end{align*}

{\it Fact 1:} The channel capacity, $C^*_{12}$, of this channel coding setup is bounded from below as follows:
\begin{align}
  C_{12}^* &\geq \max_{
    \substack{ p(v_1|s_1) p(v_2|s_2) p(u|s_1,v_1,v_2) p(x|u,s_1,v_1,v_2)\\
      \mbox{s.t. \ } R'_{1} \geq I(V_1;S_1) - I(V_1; Y,S_2,V_2)\\
      \quad R'_{2} \geq I(V_2;S_2) - I(V_2;S_1)}}
  I(U;Y,S_2|V_1,V_2) - I(U;S_1|V_1,V_2),
\end{align}
for some joint distribution $p(s_1,s_2,v_1,v_2,u,x,y)$ and $U,V_1$ and $V_2$ are some auxiliary random variables.

The proof for the achievability follows closely the proofs given in Appendix~\ref{sec:proof-CC} and, therefore, we only provide the outline of the achievability. The main steps of the achievability scheme are outlined in the following.

{\it Sketch of proof of Achievability for Fact~1:}
 (a) {\it The ESI encoder wants to describe $S_1^n$ to the decoder with rate of $R'_1$}. We generate $2^{n ( I(V_1;S_1) + \epsilon )}$ sequences $V_1^n\ \iid \sim p(v_1)$ and randomly distribute them into $2^{n \big( I(V_1; S_1) - I(V_1 ; Y,S_2,V_2) + 2\epsilon\big) }$ bins; each bin contains $2^{n( I(V_1;Y,S_2,V_2) - \epsilon)}$ codewords. The ESI encoder is given the sequence $s_1^n$ and first looks for a sequence $v_1^n$ that is jointly typical with $s_1^n$. If there is such a codeword, the ESI encoder sends the index of the bin that contains $v_1^n$ to the decoder. The decoder, given $y^n,s_2^n,v_2^n$, looks for a unique codeword in the received bin that is jointly typical with $y^n,s_2^n,v_2^n$. Since there are more than $2^{nI(V_1;S_1)}$ sequences $V_1^n$, the ESI encoder is assured with high probability to find a sequence $v_1^n$ such that $(v_1^n,s_1^n) \in \T(V_1,S_1)$. Since, in addition, there are less than $2^{nI(V_1;Y,S_2,V_2)}$ codewords in the bin, the decoder is assured to find a unique sequence $v_1^n$ in the bin such that $(v_1^n,y^n,s_2^n,v_2^n) \in \T(V_1,Y,S_2,V_2)$ with high probability. Therefore, the constraint on the shared ESI is maintained if $R'_1 > I(V_1;S_1) - I(V_1;Y,S_2,V_2)$.

(b) {\it The DSI encoder wants to describe $S_2^n$ to the channel's encoder with a rate of $R'_2$}. We generate $2^{n(I(V_2;S_2) + \epsilon)}$ sequences $V_2^n \sim \iid p(v_2)$ and randomly distribute them into $2^{n \big( I(V_2;S_2) - I(V_2;S_1,V_1) +2\epsilon \big) }$ bins; each bin contains $2^{n(I(V_2;S_1,V_1) - \epsilon)}$ codewords. The DSI encoder, given $s_2^n$, first looks for a sequence $v_2^n$ that is jointly typical with $s_2^n$. If there is such a codeword, the DSI encoder sends the index of the bin where $v_2^n$ is located to the channel's encoder. The channel's encoder, given $s_1^n,v_1^n$, looks for a unique sequence $v_2^n$ in the received bin that is jointly typical with $s_1^n,v_1^n$. Since there are more than $2^{nI(V_2;S_2)}$ sequences $V_2^n$, the DSI encoder is assured with high probability to find such a sequence $v_2^n$ such that $(v_2^n,s_2^n) \in \T(V_2,S_2)$. In its turn, the channel's encoder is also assured with high probability to find the unique sequence $v_2^n$ in its received bin such that $(v_2^n,s_1^n,v_1^n) \in \T(V_2,S_1,V_1)$, since there are less than $2^{nI(V_2;S_1,V_1)}$ codewords $V_2^n$ in the bin. Therefore, the constraint of the shared DSI is maintained if $R'_2 > I(V_2;S_2) - I(V_2;S_1,V_1)$.

(c) {\it The encoder wants to send the message $W$ to the decoder}. For each $v_1^n,v_2^n$ we generate $2^{n(I(U;Y,S_2|V_1,V_2) - \epsilon)}$ sequences $U^n$ using the PMF $ p(u^n|v_1^n,v_2^n) = \prod_{i=1}^n p(u_i|v_{1,i},v_{2,i})$ and randomly distribute them into $2^{n \big( I(U;Y,S_2,|V_1,V_2) - I(U;S_1|V_1,V_2) - 2\epsilon \big)}$ bins; each bin contains $2^{n(I(U;S_1|V_1,V_2) + \epsilon)}$ codewords. The encoder, given $s_1^n,v_1^n,v_2^n$ and the message $W$, looks in the bin number $W$ for a sequence $u^n$ that is jointly typical with $s_1^n,v_1^n,v_2^n$ and sends $x_i = f(u_i,s_{1,i},v_{1,i},v_{2,i})$ over the channel at time $i$. The decoder receives $y^n,s_2^n,v_1^n,v_2^n$ and first looks for a unique sequence $u^n$ that is jointly typical with $y^n,s_2^n,v_1^n,v_2^n$. Upon finding the desired sequence $u^n$, the decoder declares $\hat W$ to be the index of the bin that contains $u^n$. Having less than $2^{nI(U;Y,S_2|V_1,V_2)}$ sequences $U^n$ assures with high probability that decoder will identify a unique sequence $u^n$ such that $(u^n,y^n,s_2^n,v_1^n,v_2^n) \in \T(U,Y,S_2,v_1,v_2)$. This is also valid because the Markov relation $(U,V_1,V_2) - (X,S_1,S_2) - Y$ implies that $(u^n,v_1^n,v_2^n,x^n,s_1^n,s_2^n,y^n) \in \T(U,V_1,V_2,X,S_1,S_2,Y)$. In addition, since in each of the encoder's bins there are more than $2^{nI(U;S_1|V_1,V_2)}$ codewords $U^n$, the encoder is assured with high probability to find a sequence $u^n$ in the bin indexed $W$ such that $(u^n,s_1^n,v_1^n,v_2^n) \in \T(U,S_1,v_1,v_2)$. We can conclude that if  $R < I(U;Y,S_2|V_1,V_2) - I(U;S_1|V_1,V_2)$ is maintained, then a reliable communication over the channel is achievable; namely, it is possible to find a sequence of codes such the $\Pr\{\hat W \neq W\}$ goes to zero as the block length goes to infinity. This concludes the sketch of the achievability.

\subsection{An upper bound on the rate-distortion with two-sided increased partial side information}
\label{sec:channel_open_problem}
\begin{figure}[h!]
  \centering
  \small
  \psfrag{A}{\ Encoder}
  \psfrag{B}{\ Decoder}
  \psfrag{X}{$X^n$}
  \psfrag{x}{$\hat{X}^n$}
  \psfrag{S}{$\hat{X}^n$}
  \psfrag{E}{$S_1^n$}
  \psfrag{e}{$S_{1,i}$}
  \psfrag{R}{$S_2^n$}
  \psfrag{r}{$S_{2,i}$}
  \psfrag{J}{$R'_{2}$}
  \psfrag{K}{$R'_{1}$}

  \includegraphics[scale = 0.5]{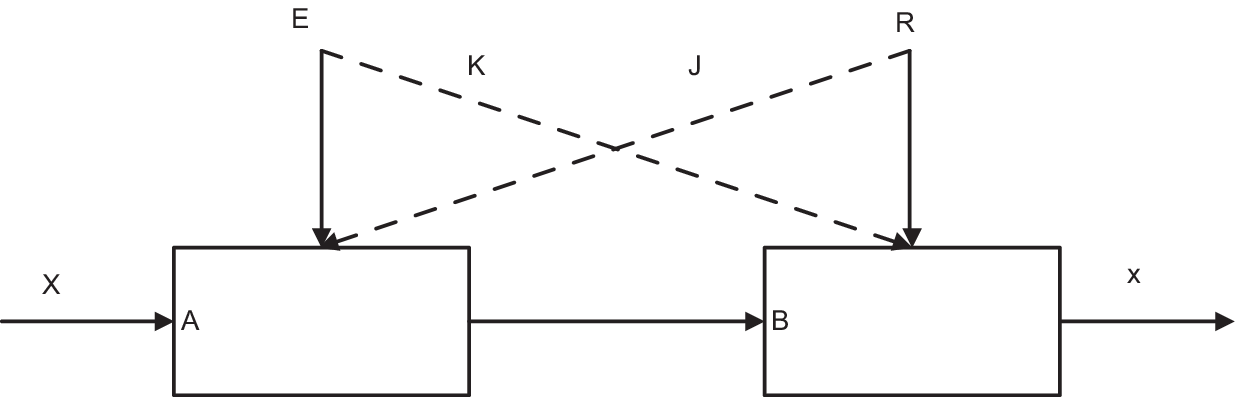}
  \caption{An upper bound on the rate-distortion with two-sided increased partial side information: $R_{12}(D) \leq \min I(U;X,S_1|V_1,V_2) - I(U;S_2|V_1,V_2)$, where the minimization is over all PMFs $p(v_1|s_1) p(v_2|s_2) p(u|x,s_1,v_1,v_2) p(\hat x|u,s_2,v_1,v_2)$ such that $R'_{1} \geq I(V_1;S_1) - I(V_1; S_2)$, $R'_{2} \geq I(V_2;S_2) - I(V_2;X,S_1,V_1)$ and $\mathbb{E}\Big[\frac{1}{n}\sum_{i=1}^nd(X,\hat X)\Big]\leq D$.}
  \label{fig:RD12}
\end{figure}
Consider the rate-distortion problem illustrated in Figure~\ref{fig:RD12}, where the source $X$ and the side information $S_1,S_2$ are distributed $(X_i,S_{1,i},S_{2,i}) \sim \iid p(x,s_1,s_2)$. The encoder is informed with the ESI $(S_1^n)$ and rate-limited DSI and the decoder is informed with the DSI $(S_2^n)$ and rate-limited ESI. An $(n, 2^{nR}, 2^{nR'_{1}}, 2^{nR'_{2}}, D)$ code for the discussed rate-distortion problem consists of three encoding maps:
\begin{align*}
  &f_{v1}: \quad {\cal S}_1^n \mapsto \{1,2,\dots, 2^{nR'_{1}}\},\\
  &f_{v2}: \quad {\cal S}_2^n \mapsto \{1,2,\dots, 2^{nR'_{2}}\},\\
  &f: \quad {\cal X}^n \times {\cal S}_1^n \times \{1,2,\dots,2^{nR'_{2}}\} \mapsto \{1,2,\dots,2^{nR}\},
\end{align*}
and a decoding map:
\begin{align*}
  &g: \{1,2,\dots,2^{nR}\} \times {\cal S}_2^n \times \{1,2,\dots,2^{nR'_{1}}\} \mapsto \hat{{\cal X}}^n.
\end{align*}

{\it Fact 2:} For a given distortion, $D$, and a given distortion measure, $d(X,\hat X) :\ \ \mathcal{X} \times \hat{\mathcal{X}} \mapsto \mathbb{R}^+$, the rate-distortion function $R^*_{12}(D)$ of this setup is bounded from above as follows:
\begin{align}
  \label{eq:R12}
  R_{12}^*(D) &\leq \min_{
    \substack{ p(v_1|s_1) p(v_2|s_2) p(u|x,s_1,v_1,v_2) p(\hat x|u,s_2,v_1,v_2)\\
      \mbox{s.t. \ } R'_{1} \geq I(V_1;S_1) - I(V_1; S_2,V_2)\\
      \quad R'_{2} \geq I(V_2;S_2) - I(V_2;X,S_1,V_1)}}
  I(U;X,S_1|V_1,V_2) - I(U;S_2|V_1,V_2),
\end{align}
for some joint distribution $p(x,s_1,s_2,v_1,v_2,u,\hat x)$ where $\mathbb{E}\Big[\frac{1}{n}\sum_{i=1}^nd(X_i,\hat X_i)\Big]\leq D$ and  $U,V_1$ and $V_2$ are some auxiliary random variables.

The achievability proof is outlined in the following. The steps of the proof resemble the steps of the achievability proof for Fact~1.

{\it Sketch of proof of Achievability for Fact~2:}
(a) {\it The ESI encoder wants to describe $S_1^n$ to the decoder with a rate of $R'_1$}. We generate $2^{n ( I(V_1;S_1) + \epsilon )}$ sequences $V_1^n\ \iid \sim p(v_1)$ and randomly distribute them into $2^{n \big( I(V_1; S_1) - I(V_1 ; S_2,V_2) + 2\epsilon\big) }$ bins; each bin contains $2^{n( I(V_1;S_2,V_2) - \epsilon)}$ codewords. The ESI encoder is given the sequence $s_1^n$ and first looks for a sequence $v_1^n$ that is jointly typical with $s_1^n$. If there is such a codeword, the ESI encoder sends the index of the bin that contains $v_1^n$ to the decoder. The decoder, given $s_2^n,v_2^n$, looks for a unique codeword in the received bin that is jointly typical with $s_2^n,v_2^n$. Since there are more than $2^{nI(V_1;S_1)}$ sequences $V_1^n$, the ESI encoder is assured with high probability to find a sequence $v_1^n$ such that $(v_1^n,s_1^n) \in \T(V_1,S_1)$. Since, in addition, there are less than $2^{nI(V_1;S_2,V_2)}$ codewords in the bin, the decoder is assured with high probability to find a unique sequence $v_1^n$ in the bin such that $(v_1^ns_2^n,v_2^n) \in \T(V_1,S_2,V_2)$. Therefore, the constraint on the rate of the shared ESI is maintained  if $R'_1 > I(V_1;S_1) - I(V_1;S_2,V_2)$.

(b) {\it The DSI encoder wants to describe $S_2^n$ to the source encoder with a rate of $R'_2$}. We generate $2^{n(I(V_2;S_2) + \epsilon)}$ sequences $V_2^n \sim \iid p(v_2)$ and randomly distribute them into $2^{n \big( I(V_2;S_2) - I(V_2;X,S_1,V_1) +2\epsilon \big) }$ bins; each bin contains $2^{n(I(V_2;X,S_1,V_1) - \epsilon)}$ codewords. The DSI encoder, given $s_2^n$, first looks for a sequence $v_2^n$ that is jointly typical with $s_2^n$. If there is such a codeword, the DSI encoder sends the index of the bin where $v_2^n$ is located to the source encoder. The source encoder, given $x^n,s_1^n,v_1^n$, looks for a unique sequence $v_2^n$ in the received bin that is jointly typical with $x^n,s_1^n,v_1^n$. Since there are more than $2^{nI(V_2;S_2)}$ sequences $V_2^n$, the DSI encoder is assured with high probability to find a sequence $v_2^n$ such that $(v_2^n,s_2^n) \in \T(V_2,S_2)$. At the same time, the source encoder is assured with high probability to find the unique sequence $v_2^n$ in its received bin such that $(v_2^n,x^n,s_1^n,v_1^n) \in \T(V_2,X,S_1,V_1)$, since there are less than $2^{nI(V_2;X,S_1,V_1)}$ codewords $V_2^n$ in the bin. Therefore, the constraint on the rate of the shared DSI is maintained if $R'_2 > I(V_2;S_2) - I(V_2;X,S_1,V_1)$.

(c) {\it The source encoder wants to describe the source $X$ to the decoder with distortion smaller than or equal to $D$; that is $\E{d(X,\hat X)}\leq D$}. For each $v_1^n,v_2^n$ we generate $2^{n(I(U;X,S_1|V_1,V_2) + \epsilon)}$ sequences $U^n$ using the PMF $p(u^n|v_1^n,v_2^n) = \prod_{i=1}^n p(u_i|v_{1,i},v_{2,i})$ and randomly distribute them into $2^{n \big( I(U;X,S_1,|V_1,V_2) - I(U;S_2|V_1,V_2) + 2\epsilon \big)}$ bins; each bin contains $2^{n(I(U;S_2|V_1,V_2) - \epsilon)}$ codewords. The source encoder, given $x^n,s_1^n,v_1^n,v_2^n$, looks for a sequence $u^n$ that is jointly typical with $x^n,s_1^n,v_1^n,v_2^n$ and sends the index of the bin that contains $u^n$ to the decoder. The decoder, given $s_2^n,v_1^n,v_2^n$, looks for a unique sequence $u^n$ in the received bin that is jointly typical with $s_2^n,v_1^n,v_2^n$. Upon finding the desired sequence $u^n$, the decoder declares $\hat x_i = g(u_i,s_{2,i},v_{1,i},v_{2,i})$ for $i \in \{1,2,\dots,n\}$ to be the reconstruction of the source $x^n$. Having more than $2^{nI(U;X,S_1|V_1,V_2)}$ sequences $U^n$ assures the encoder with high probability to find a sequence $u^n$ such that $(u^n,x^n,s_1^n,v_1^n,v_2^n) \in \T(U,X,S_1,v_1,v_2)$. Since, in addition, each one of the bins contains there are less than $2^{nI(U;S_2|V_1,V_2)}$ codewords $U^n$, the decoder is assured with high probability to find a unique sequence $u^n$ in the bin such that $(u^n,s_2^n,v_1^n,v_2^n) \in \T(U,S_2,v_1,v_2)$. Therefore, and since the Markov chain $(X,S_1) - (U,S_2,V_1,V_2) - \hat X$ is satisfied, we can conclude that a rate of $R > I(U;X,S_1|V_1,V_2) - I(U;S_2|V_1,V_2)$ allows the decoder to produce $\hat x^n$ that satisfies the distortion constraint with high probability; i.e., that $d(x^n,\hat x^n) \leq D$ with high probability. This concludes the sketch of the proof of the achievability.

\appendices
\label{appendix}

\section{Duality of the Converse of the Gelfand-Pinsker Theorem and the Wyner-Ziv Theorem}
\label{sec:dual-conv-gelf}
In this appendix we provide proofs of the converse of the Gelfand-Pinsker capacity and the converse of the Wyner-Ziv rate in a dual way.

\begin{align}
  \label{eq:5}
  \begin{array}[c]{r |r l | r l} \small
  &  &\mbox{Channel capacity} & &\mbox{Rate-distortion}\\
    \hline
  1& nR  &= H(W) & nR &= H(T)\\
  \hline
  2& & \stackrel{(a)}{\leq} I(W;Y^n) - I(W;S^n) +n\epsilon_n & & \stackrel{(a)}{\geq} I(T;X^n) - I(T;S^n)\\
  \hline
  3& & = \sum_{i=1}^n \Big[ I(W;Y_i|Y^{i-1})  & & = \sum_{i=1}^n \Big[ I(T;X_i|X^{i-1}) \\
  &  & \quad - I(W;S_i|S_{i+1}^n) \Big] +n\epsilon_n & & \quad - I(T;S_i|S_{i+1}^n) \Big] \\
  \hline
  4& & = \sum_{i=1}^n \Big[ I(W,S_{i+1}^n;Y_i|Y^{i-1}) & &  = \sum_{i=1}^n \Big[ I(T,S_{i+1}^n;X_i|X^{i-1}) \\
  & & \quad - I(W,Y^{i-1};S_i|S_{i+1}^n) \Big]  +\Delta - \Delta^* +n\epsilon_n & & \quad - I(W,X^{i-1};S_i|S_{i+1}^n) \Big] +\Delta - \Delta^*\\
  \hline
  5& & \stackrel{(b)}{\leq} \sum_{i=1}^n \Big[ I(W,,Y^{i-1},S_{i+1}^n;Y_i)  & & \stackrel{(b)}{\geq} \sum_{i=1}^n \Big[ I(T,,X^{i-1},S_{i+1}^n;X_i) \\
  & & \quad - I(W,Y^{i-1},S_{i+1}^n;S_i) \Big] +n\epsilon_n & & \quad - I(T,X^{i-1},S_{i+1}^n;S_i) \Big] \\
  \hline
  6& & = \sum_{i=1}^n \Big[ I(U_i;Y_i) - I(U_i;S_i) \Big] + n\epsilon_n, & & = \sum_{i=1}^n \Big[ I(U_i;X_i) - I(U_i;S_i) \Big],
  \end{array}
\end{align}
where
\begin{align}
  \label{eq:7}
  \begin{array}[c]{l l | l l}
  \Delta & = \sum_{i=1}^n I(Y^{i-1};S_i|W,S_{i+1}^n), & \Delta & = \sum_{i=1}^n I(X^{i-1};S_i|T,S_{i+1}^n),\\
  \Delta^* &= \sum_{i=1}^n I(S_{i+1}^n;Y_i|W,Y^{i-1}),& \Delta^* &= \sum_{i=1}^n I(S_{i+1}^n;X_i|T,X^{i-1}),\\
  (a) & \mbox{follows from Fano's inequality} &  (a) & \mbox{follows from Fano's inequality} \\
  & \mbox{and from that fact that } W \mbox{ is} & & \mbox{and from the fact that } T \mbox{ is} \\
  & \mbox{independent of } S^n, & & \mbox{independent of } S^n,\\
  (b) & \mbox{follows from the fact that } S_i \mbox{ is} \quad & (b) & \mbox{follows from the fact that } S_i \mbox{ is} \\
  & \mbox{independent of } S_{i+1}^n. & & \mbox{independent of } S_{i+1}^n \mbox{ and that } X_i \\
  & & &\mbox{is independent of } X^{i-1}.
  \end{array}
\end{align}
By substituting the output $Y$ and the input $X$ in the channel capacity theorem with the input $X$ and the output $\hat X$ in the rate-distortion theorem, respectively, we can observe duality in the converse proofs of the two theorems.

\section{Proof of Theorem \ref{theorem:CC}}
\label{sec:proof-CC}
In this section we provide the proofs for Theorem~\ref{theorem:CC}, Cases 2 and 2$_C$. The results for Case~1, where the encoder is informed with ESI and the decoder is informed with increased DSI, can be derived directly from \cite[Theorem VII]{4608994}. In~\cite{4608994}, Steinberg considered the case where the encoder is fully informed with the ESI and the decoder is informed with a rate-limited description of the ESI. Therefore, by considering the DSI, $S_2^n$, to be a part of the channel's output, we can apply Steinberg's result on the channel depicted in Case 1. For this reason, the proof for this case is omitted.

\subsection{Proof of Theorem \ref{theorem:CC}, Case 2}
\label{proof:CC:case2}

\begin{figure}[h!]
  \centering
  \small
  \psfrag{A}{\ \ Encoder}
    \psfrag{B}{\ \ Channel}
    \psfrag{D}{\ \ Decoder}
    \psfrag{W}{$W$}
    \psfrag{w}{$\hat{W}$}
    \psfrag{X}{$X^n$}
    \psfrag{x}{$X_i$}
    \psfrag{Y}{$Y^n$}
    \psfrag{y}{$Y_i$}
    \psfrag{T}{$S_1^n$}
    \psfrag{t}{$S_{1,i}$}
    \psfrag{R}{$S_2^n$}
    \psfrag{r}{$S_{2,i}$}
    \psfrag{J}{$R'$}
    \psfrag{K}{$R'$}
    \includegraphics[scale = 0.5]{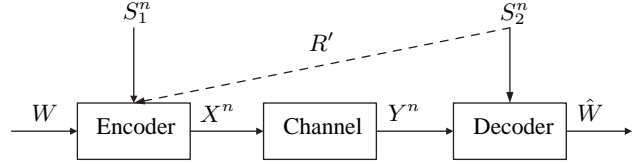}
  \caption{Channel capacity: Case 2. Lower bound: $C^{lb}_2 = \max I(U;Y,S_2|V_2) - I(U;S_1|V_2)$, where the maximization is over all joint PMFs $p(s_1,s_2,v_2,u,x,y)$ that maintain the Markov relations $U - (S_1,V_2) - S_2$ and $V_2 - S_2 - S_1$ and the constraint $R' \geq I(V_2;S_2|S_1)$. Upper bounds: $C^{ub1}_2$ is the result of the same expressions as for the lower bound, except that the maximization is taken over all PMFs that maintain the Markov chain $U-(S_1,V_2) - S_2$, and $C_2^{ub2}$ is the result of the same expressions as for the lower bound, except that this time the maximization is taken over all PMFs that maintain $V_2-S_2-S_1$.}
  \label{fig:channel2}
\end{figure}
The proof of the lower bound, $C_2^{lb}$, is performed in the following way: for the description of the DSI, $S_2$, at a rate $R'$ we use a Wyner-Ziv coding scheme where the source is $S_2$ and the side information is $S_1$. Then, for the channel coding, we use a Gelfand-Pinsker coding scheme where the state information at the encoder is $S_1$, $S_2$ is a part of the channel's output and the rate-limited description of $S_2$ is side information at both the encoder and the decoder. Notice that $I(U;Y,S_2|V_2) - I(U;S_1,|V_2) = I(U;Y,S_2,V_2) - I(U;S_1,V_2)$ and that, since the Markov chain $V_2 - S_2 - S_1$ holds, we can also write $R' \geq I(V_2;S_2) - I(V_2;S_1)$. We make use of these expressions in the following proof.\\
\mbox{\quad}{\bf Achievability:} ({\it Channel capacity Case 2 - Lower bound}). Given $(S_{1,i},S_{2,i})\sim$ i.i.d. $p(s_1,s_2)$ and the memoryless channel $p(y|x,s_1,s_2)$, fix $ p(s_1, s_2, v_2, u, x, y) = p(s_1,s_2) p(v_2|s_2) p(u|s_1,v_2) p(x|u,s_1,v_2) p(y|x,s_1,s_2)$, where $x = f(u,s_1,v_2)$ (i.e., $p(x|u,s_1,v_2)$ can get the values $0$ or $1$).\\*

{\it Codebook generation and random binning}
\begin{enumerate}
\item Generate a codebook $\mathcal{C}_v$ of $2^{n(I(V_2;S_2))+2\epsilon}$ sequences $V_2^n$ independently using $\iid \sim p(v_2)$. Label them $v_2^n(k)$, where $k \in \big\{1,2,\dots, 2^{n(I(V_2;S_2) + 2\epsilon)}\big\}$, and randomly assign each sequence $v_2^n(k)$ a bin number $b_v\big(v_2^n(k)\big)$ in the set $\big\{1,2,\dots, 2^{nR'}\big\}$.
\item Generate a codebook $\mathcal{C}_u$ of $2^{n(I(U;Y,S_2,V_2) - 2\epsilon)}$ sequences $U^n$ independently using $\iid \sim p(u)$. Label them $u^n(l)$, $l \in \big\{1,2,\dots, 2^{n(I(U;Y,S_2,V_2) - 2\epsilon)}\big\}$, and randomly assign each sequence a bin number $b_u\big(u^n(l)\big)$ in the set $\big\{1,2,\dots, 2^{nR}\big\}$.
\end{enumerate}
Reveal the codebooks and the content of the bins to all encoders and decoders.\\*
{\it Encoding}
\begin{enumerate}
\item {\it State Encoder}: Given the sequence $S_2^n$, search the codebook $\mathcal{C}_v$ and identify an index $k$ such that $\big(v_2^n(k), S_2^n\big) \in \T(V_2,S_2)$. If such a $k$ is found, stop searching and send the bin number $j = b_v\big(v_2^n(k)\big)$. If no such $k$ is found, declare an error.
\item {\it Encoder}: Given the message $W$, the sequence $S_1^n$ and the index $j$, search the codebook $\mathcal{C}_v$ and identify an index $k$ such that $\big(v_2^n(k),S_1\big) \in \T(V_2,S_1)$. If no such $k$ is found or there is more than one such index, declare an error. If a unique $k$, as defined, is found, search the codebook $\mathcal{C}_u$ and identify an index $l$ such that $\big(u^n(l),S_1^n,v_2^n(k)\big) \in \T(U,S_1,V_2)$ and $b_u\big(u^n(l)\big) = W$. If a unique $l$, as defined, is found, transmit $x_i = f\big(u_i(l), S_{1,i}, v_{2,i}(k)\big)$, $i = 1,2,\dots,n$. Otherwise, if there is no such $l$ or there is more than one, declare an error.
\end{enumerate}
{\it Decoding}\\
Given the sequences $Y^n,S_2^n$ and the index $k$, search the codebook $\mathcal{C}_u$ and identify an index $l$ such that $\big(u^n(l), Y^n,S_2^n,v_2^n(k)\big) \in \T(U,Y,S_2,V_2)$. If a unique $l$, as defined, is found, declare the message $\hat W$ to be the bin index where $u^n(l)$ is located, i.e., $\hat W = b_u\big(u^n(l)\big)$. Otherwise, if no such $l$ is found or there is more than one, declare an error.
\\*
{\it Analysis of the probability of error}\\
Without loss of generality, let us assume that the message $W=1$ was sent and the indexes that correspond with the given $W=1, S_1^n, S_2^n$ are $(k=1, l=1$ and $j=1)$; i.e., $v_2^n(1)$ corresponds with $S_2^n$, $b_v\big(v_2^n(1)\big)=1$,  $u^n(1)$ is chosen according to $\big(W=1, S_1^n, v_2^n(1)\big)$ and $b_u\big(u^n(1)\big)=1$.\\*
Define the following events:
\begin{align*}
  E_1 &:= \Big\{\forall v_2^n(k) \in \mathcal{C}_v,\ \big(v_2^n(k), S_2^n\big) \notin \T(V_2,S_2)\Big\}\\
  E_2 &:= \Big\{\big(v_2^n(1),S_1^n\big) \notin \T(V_2,S_1) \Big\}\\
  E_3 &:= \Big\{\exists k' \neq 1 \mbox{ such that } b_v\big(v_2^n(k')\big) = 1 \mbox{ and } \big(v_2^n(k'), S_1^n\big) \in \T(V_2,S_1) \Big\}\\
  E_4 &:= \Big\{\forall u^n(l) \in \mathcal{C}_u \mbox{ such that } b_u\big(u^n(l)\big) = 1,\ \big(u^n(l), S_1^n, v_2^n(1)\big) \notin \T(U,S_1,V_2) \Big\}\\
  E_5 &:= \Big\{\big(u^n(1), Y^n, S_2^n, v_2^n(1)\big) \notin \T(U,Y,S_2,V_2)\Big\}\\
  E_6 &:= \Big\{\exists l' \neq 1 \mbox{ such that } \big(u^n(l'), Y^n, S_2^n, v_2^n(1)\big) \in \T(U,Y,S_2,V_2) \Big\}\\
\end{align*}
The probability of error $P_e^{(n)}$ is upper bounded by $P_e^{n} \leq P(E_1)+P(E_2|E_1^c)+P(E_3|E_1^c,E_2^c)+P(E_4|E_1^c,E_2^c,E_3^c)+P(E_5|E_1^c,\dots,E_4^c)+P(E_6|E_1^c,\dots,E_5^c)$. Using standard arguments, and assuming that $(S_1^n,S_2^n) \in \T(S_1,S_2)$ and that $n$ is large enough, we can state that
\begin{enumerate}
\item   \begin{align}
    P(E_1) =& \Pr \big\{ \bigcap_{v_2^n(k) \in \mathcal{C}_v} \big(v_2^n(k),S_2^n\big) \notin \T(V_2,S_2) \big\}\nonumber \\
    =& \prod_{k=1}^{2^{n(I(V_2;S_2) + 2\epsilon)}} \Pr\big\{ \big(v_2^n(k), S_2^n\big) \notin \T(V_2,S_2) \big\}\nonumber\\
    =& \prod_{k=1}^{2^{n(I(V_2;S_2) + 2\epsilon)}} \Big(1-\Pr\big\{ \big(v_2^n(k), S_2^n\big) \in \T(V_2,S_2) \big\}\Big)\nonumber\\
    \leq& \Big(1-2^{-n(I(V_2;S_2)+\epsilon)}\Big)^{2^{n(I(V_2;S_2) + 2\epsilon)}} \nonumber\\
    \leq& e^{-2^{-n(I(V_2;S_2)+\epsilon)} 2^{n(I(V_2;S_2)+2\epsilon)}}\nonumber\\
  =& e^{-2^{n\epsilon}}.
  \end{align}
  The probability that there is no $v_2^n(k)$ in $\mathcal{C}_v$ such that $\big(v_2^n(k),S_2^n \big)$ is strongly jointly typical is exponentially small provided that $|\mathcal{C}_v| \geq 2^{n(I(V_2;S_2)+\epsilon)}$. This follows from the standard rate-distortion argument that $2^{nI(V_2;S_2)}$ $v_2^n$'s ``cover'' ${\cal S}_2^n$, therefore $P(E_1) \mapsto 0$.
\item
  By the Markov lemma~\cite{Berger}, since $(S_1^n,S_2^n)$ are strongly jointly typical, $\big(S_2^n, v_2^n(1)\big)$ are strongly jointly typical and the Markov chain $S_1-S_2-V_2$ holds, then $\big(S_1^n, S_2^n, v_2^n(1)\big)$ are strongly jointly typical with high probability. Therefore, $P(E_2|E_1^c) \to 0$.
\item
  \begin{align}
    P(E_3|E_1^c,E_2^c) &= \Pr \big\{ \bigcup_{\substack{ v_2^n(k'\neq 1) \in \mathcal{C}_v \\ b_v\big(v_2^n(k')\big) = 1}} \big(v_2^n(k'), S_1^n\big) \in \T(V_2,S_1)\big\}\\
    &\leq \sum_{\substack{ v_2^n(k'\neq 1) \in \mathcal{C}_v \\ b_v\big(v_2^n(k')\big) = 1}} \Pr\big\{ \big(v_2^n(k'), S_1^n\big) \in \T(V_2,S_1)\big\}\\
    & \leq \sum_{\substack{ v_2^n(k'\neq 1) \in \mathcal{C}_v \\ b_v\big(v_2^n(k')\big) = 1}}  2^{n(I(V_2;S_1)+\epsilon)}\\
    & = 2^{n(I(V_2;S_2)+2\epsilon - R')} 2^{-n(I(V_2;S_1)-\epsilon)}\\
    & = 2^{n( I(V_2;S_2) - I(V_2;S_1) + 3\epsilon - R')}.
  \end{align}
  The probability that there is another index $k'$, $k'\neq 1$, such that $v_2^n(k')$ is in bin number $1$ and that is strongly jointly typical with $S_1^n$ is bounded by the number of $v_2^n(k')$'s in the bin times the probability of joint typicality. Therefore, if the number of bins $R' > I(V_2;S_2)-I(V_2;S_1) + 3\epsilon$ then $P(E_3|E_1^c,E_2^c) \to 0$.
\item
  We  use here the same argument we used for $P(E_1)$; by the covering lemma, we can state that the probability that there is no $u^n(l)$ in bin number $1$ that is strongly jointly typical with $\big(S_1^n, v_2^n(1)\big)$ tends to zero for large enough $n$ if the average number of $u^n(l)$'s in each bin is greater than $2^{n(I(U;S_1, V_2) + \epsilon)}$;  i.e., $|\mathcal{C}_u|{/} 2^{nR} > 2^{n(I(U;S_1, V_2) + \epsilon)}$. This also implies that in order to avoid an error the number of words one should use is $R < I(U;Y,S_2,V_2) - I(U;S_1,V_2) - 3\epsilon$, where the last expression also equals $I(U;Y,S_2|V_2)-I(U;S_1|V_2)-3\epsilon$.
\item
As we argued for $P(E_2|E_1^c)$, since $\big(X^n,u^n(1),S_1^n,v_2^n(1)\big)$ is strongly jointly typical, $\big(Y^n,X^n,S_1^n,S_2^n\big)$ is strongly jointly typical and the Markov chain $(U,V_2) - (X,S_1,S_2) - Y$ holds, then, by the Markov lemma~\cite{Berger}, $\big(u^n(1), Y^n,S_2^n,v_2^n(1)\big)$ is strongly jointly typical with high probability, i.e.,  $P(E_5|E_1^c,\dots,E_4^c) \to 0$.
\item
  \begin{align}
    P(E_6|E_1^c,\dots,E_5^c) &= \Pr\big\{ \bigcup_{u^n(l'\neq 1) \in \mathcal{C}_u} \big(u^n(l'), Y^n, S_2^n,v_2^n(1)\big) \in \T(U,Y,S_2,V_2) \big\} \nonumber\\
  &\leq \sum_{l' = 2}^{2^{n(I(U;Y,S_2,V_2) + 2\epsilon)}} \Pr \big\{ \big(u^n(l'), Y^n,S_2^n, V_2^n\big) \in \T(U,Y,S_2,V_2)\big\} \nonumber\\
  & \leq \sum_{l' = 2}^{2^{n(I(U;Y,S_2,V_2) + 2\epsilon)}} 2^{-n(I(U;Y,S_2,V_2)-\epsilon)} \nonumber\\
  & \leq 2^{n(I(U;Y,S_2,V_2)-2\epsilon)}2^{-n(I(U;Y,S_2,V_2)-\epsilon)} \nonumber\\
  & = 2^{-n\epsilon}.
  \end{align}
  The probability that there is another index $l'$, $l' \neq 1$, such that $u^n(l')$ is strongly jointly typical with $\big(Y^n,S_2^n,v_2^n(1)\big)$ is bounded by the total number of $u^n$'s times the probability of joint typicality. Therefore, taking $|\mathcal{C}_u| < 2^{n(I(U;Y,S_2,V_2) - \epsilon)}$ assures us that $P(E_6|E_1^c,\dots,E_5^c) \to 0$.  This follows the standard channel capacity argument that one can distinguish at most $2^{nI(U;Y,S_2,V_2)}$ different $u^n(l)$'s given any typical member of ${\cal Y}^n \times {\cal S}_2^n \times {\cal V}_2^n$.
\end{enumerate}

This shows that for rates $R$ and $R'$ as described and for large enough $n$, the error events are of arbitrarily small probability. This concludes the proof of the achievability and the lower bound on the capacity of Case~2.\\
{\mbox{\quad}\bf Converse:} ({\it Channel capacity Case 2 - Upper bound}). We first prove that it is possible to bound the capacity from above by using two random variables, $U$ and $V$, that maintain the Markov chain $U - (S_1,V_2) - S_2$ (that is $C_2^{ub1}$). Then, we prove that it is also possible to upper-bound the capacity by using $U$ and $V$ that maintain the Markov relation $V_2 - S_2 - S_1$ (that is $C_2^{ub2}$).

Fix the rates $R$ and $R'$ and a sequence of codes $(2^{nR}, 2^{nR'}, n)$ that achieve the capacity. By Fano's inequality, $H(W|Y^n,S_2^n) \leq n\epsilon_n$, where $\epsilon_n \rightarrow 0$ as $n \rightarrow \infty$. Let $T_2=f_v(S_2^n)$, and define $V_{2,i} = (T_2,Y^{i-1}, S_{1,i+1}^n, S_2^{i-1}),\ U_i = W$; hence, the Markov chain $U_i - (S_{1,i}, V_{2,i}) - S_{2,i}$ is maintained. The proof for this follows.
\begin{align}
  \label{eq:13}
  p(u_i|s_{1,i},v_{2,i},s_{2,i}) = & p(w|s_{1,i},t_2, y^{i-1}, s_{1,i+1}^n, s_2^{i-1}, s_{2,i}) \nonumber\\
  = & \sum_{x^{i-1},s_1^{i-1}} p(w,x^{i-1},s_1^{i-1} | s_{1,i},t_2, y^{i-1}, s_{1,i+1}^n, s_2^{i-1}, s_{2,i}) \nonumber\\
  \stackrel{(a)}{=} & \sum_{x^{i-1},s_1^{i-1}} p(s_1^{i-1} | t_2, y^{i-1}, s_{1,i}^n, s_2^{i-1}) p(x^{i-1} | t_2, y^{i-1}, s_1^n, s_2^{i-1}) p(w| x^{i-1},t_2,y^{i-1},s_1^n,s_2^{i-1}) \nonumber\\
  = & p(w|t_2,y^{i-1},s_{1,i+1}^n,s_2^{i-1},s_{1,i}).
\end{align}
Next, consider
%
\begin{align}
  nR' \geq& H(T_2) \nonumber \\
  \geq& H(T_2|S_1^n) - H(T_2|S_1^n, S_2^n) \nonumber \\
  =& I(T_2; S_2^n| S_1^n) \nonumber \\
  =& H(S_2^n|S_1^n)-H(S_2^n|T_2,S_1^n) \nonumber \\
  =& \sum_{i=1}^n \Big[H(S_{2,i}|S_1^n,S_2^{i-1})-H(S_{2,i}|T_2,S_1^n,S_2^{i-1})\Big] \nonumber \\
  \stackrel{(a)}{=}& \sum_{i=1}^n \Big[H(S_{2,i}|S_{1,i})  -H(S_{2,i}|T_2,S_1^n,S_2^{i-1},Y^{i-1})\Big] \nonumber \\
  \stackrel{(b)}{=}& \sum_{i=1}^n \Big[H(S_{2,i}|S_{1,i})  -H(S_{2,i}|T_2,S_{1,i+1}^n,S_2^{i-1},Y^{i-1}, S_{1,i})\Big] \nonumber \\
  =& \sum_{i=1}^n \Big[H(S_{2,i}|S_{1,i})-H(S_{2,i}|V_{2,i}, S_{1,i})\Big] \nonumber \\
  =& \sum_{i=1}^n I(S_{2,i} ; V_{2,i}|S_{1,i}),
\end{align}
where (a) follows from the fact that $S_{2,i}$ is independent of $(S_1^{i-1}, S_{1,i+1}^n, S_2^{i-1})$ given $S_{1,i}$, and the fact that $Y^{i-1}$ is independent of $S_{2,i}$ given $(T_2, S_1^n, S_2^{i-1})$ (the proof for this follows) and (b) follows from the fact that conditioning reduces entropy.
\begin{align}
  p(y^{i-1}|t_2, s_1^n,s_2^{i-1}, s_{2,i}) = & \sum_{x^n, w} p(y^{i-1}, x^n,w|t_2,s_1^n,s_2^{i-1}, s_{2,i}) \nonumber\\
  =& \sum_{x^n, w} p(w) p(x^n|w,t_2,s_1^n) p(y^{i-1}| x^{i-1},s_1^{i-1},s_2^{i-1}) \nonumber\\
  =& p(y^{i-1}|t_2, s_1^n,s_2^{i-1}),
\end{align}
where we used the facts that $W$ is independent of $(T_2,S_1^n, S_{2,i}^n)$, $X^n$ is a function of $(W,T_2,S_1^n)$ and that the channel is memoryless; i.e., $Y^{i-1}$ is independent of $(W,T_2,S_{1,i}^n,S_{2,i}^n)$ given $(X^{i-1}, S_1^{i-1}, S_2^{i-1})$. We continue the proof of the converse by considering the following set of inequalities:
\begin{align}
  nR =& H(W) \nonumber \\
  \leq& H(W|T_2) - H(W|T_2,Y^n,S_2^n) +n\epsilon_n \nonumber \\
  =& I(W;Y^n,S_2^n|T_2) +n\epsilon_n \nonumber \\
  =& \sum_{i=1}^n I(W;Y_i,S_{2,i}|T_2, Y^{i-1}, S_2^{i-1}) +n\epsilon_n\nonumber \\
  \stackrel{(b)}{=}& \sum_{i=1}^n \Big[ I(W,S_{1,i+1}^n; Y_i, S_{2,i}| T_2,Y^{i-1}, S_2^{i-1}) \nonumber \\
  & \qquad -I(S_{1,i+1}^n ; Y_i,S_{2,i}|W,T_2,Y^{i-1},S_2^{i-1})\Big] +n\epsilon_n\nonumber \\
  \stackrel{(c)}{=}& \sum_{i=1}^n \Big[I(W,S_{1,i+1}^n; Y_i, S_{2,i}|T_2, Y^{i-1}, S_2^{i-1}) \nonumber \\
  & \qquad -I(S_{1,i} ; Y^{i-1}, S_2^{i-1}|W,T_2,S_{1,i+1}^n)\Big] +n\epsilon_n\nonumber \\
  =& \sum_{i=1}^n \Big[ I(W;Y_i, S_{2,i}|T_2,Y^{i-1}, S_{1,i+1}^n, S_2^{i-1}) \nonumber \\
  & \qquad -I(S_{1,i};W|T_2, Y^{i-1}, S_{1,i+1}^n, S_2^{i-1})\Big]\nonumber \\
  & \qquad+ \Delta - \Delta^* +n\epsilon_n, \label{eq:3}
\end{align}
where
\begin{align}
  \Delta =& \sum_{i=1}^nI(S_{1,i+1}^n; Y_i, S_{2,i} | T_2, Y^{i-1}, S_2^{i-1}), \label{eq:1} \\
  \Delta^* =& \sum_{i=1}^nI(S_{1,i} ; Y^{i-1}, S_2^{i-1} | T_2, S_{1,i+1}^n), \label{eq:2}
\end{align}
(b) follows from the mutual information properties and (c) follows from the Csisz\'ar sum identity.\\*
By using the Csisz\'ar sum on (\ref{eq:1}) and (\ref{eq:2}), we get
\begin{align}
  \Delta = \Delta^*,
\end{align}
and, therefore, from  (\ref{eq:4}) and (\ref{eq:3})
\begin{align}
  R' \geq & \frac{1}{n} \sum_{i=1}^n I(S_{2,i} ; V_{2,i}|S_{1,i})\\
  R - \epsilon_n\leq& \frac{1}{n} \sum_{i=1}^n \Big[ I(U_i;Y_i, S_{2,i}|V_{2,i}) -I(U_i;S_{1,i}|V_{2,i})\Big].
\end{align}
Using the convexity of $R'$ and Jansen's inequality, the standard time sharing argument for $R$ and the fact that $\epsilon_n \to 0$ as $n \to \infty$, we can conclude that
\begin{align}
  R' \geq& I(V_2;S_2|S_1),\\
  R \leq& I(U;Y,S_2|V_2)- I(U;S_1| V_2),
\end{align}
where $U$ and $V$ maintain the Markov chain $U - (S_1,V_2) - S_2$.

We now proceed to prove that it is possible to upper-bound the capacity of Case 2 by using two random variables, $U$ and $V$, that maintain the Markov chain $V_2 - S_2 - S_1$.
Fix the rates $R$ and $R'$ and a sequence of codes $(2^{nR}, 2^{nR'}, n)$ that achieve the capacity. By Fano's inequality, $H(W|Y^n,S_2^n) \leq n\epsilon_n$, where $\epsilon_n \rightarrow 0$ as $n \rightarrow \infty$. Let $T_2=f_v(S_2^n)$ and define $V_{2,i} = (T_2,S_2^{i-1}),\ U_i = (W,Y^{i-1},S_{1,i+1}^n)$. The Markov chain $V_{2,i} - S_{2,i} - S_{1,i}$ is maintained. Then,
%
\begin{align}
  nR' \geq& H(T_2) \nonumber \\
  \geq& H(T_2|S_1^n) - H(T_2|S_1^n, S_2^n) \nonumber \\
  =& I(T_2; S_2^n| S_1^n) \nonumber \\
  =& H(S_2^n|S_1^n)-H(S_2^n|T_2,S_1^n) \nonumber \\
  =& \sum_{i=1}^n \Big[H(S_{2,i}|S_1^n,S_2^{i-1})-H(S_{2,i}|T_2,S_1^n,S_2^{i-1})\Big] \nonumber \\
  \stackrel{(a)}{=}& \sum_{i=1}^n \Big[H(S_{2,i}|S_{1,i})  -H(S_{2,i}|T_2,S_{1,i},S_{1,i+1}^n,S_2^{i-1})\Big] \nonumber \\
  \geq&\sum_{i=1}^n \Big[H(S_{2,i}|S_{1,i})  -H(S_{2,i}|T_2,S_{1,i},S_2^{i-1})\Big] \nonumber \\
  =& \sum_{i=1}^n \Big[H(S_{2,i}|S_{1,i})-H(S_{2,i}|V_{2,i}, S_{1,i})\Big] \nonumber \\
  =& \sum_{i=1}^n I(S_{2,i} ; V_{2,i}|S_{1,i}), \label{eq:4}
\end{align}
where (a) follows from the fact that $S_{2,i}$ is independent of $(S_1^{i-1}, S_{1,i+1}^n, S_2^{i-1})$ given $S_{1,i}$, and the fact that $(Y^{i-1}, S_1^{i-1})$ is independent of $S_{2,i}$ given $(T_2, S_{1,i}^n, S_2^{i-1})$; the proof for this follows.
\begin{align}
  \label{eq:34}
  p(y^{i-1},s_1^{i-1}|t_2, s_{1,i}^n,s_2^{i-1}, s_{2,i}) = & \sum_{x^n, w} p(y^{i-1},s_1^{i-1}, x^n,w|t_2,s_{1,i}^n,s_2^{i-1}, s_{2,i}) \nonumber\\
  =& \sum_{x^n, w} p(w) p(s_1^{i-1}|s_2^{i-1}) p(x^n|w,t_2,s_1^n) p(y^{i-1}| x^{i-1},s_1^{i-1},s_2^{i-1}) \nonumber\\
  =& p(y^{i-1},s_1^{i-1}|t_2, s_{1,i}^n,s_2^{i-1}),
\end{align}
where we used the facts that $W$ is independent of $(T_2,S_{1,i}^n, S_{2,i}^n)$, $S_1^{i-1}$ is independent of $(T_2,S_{1,i}^n,S_{2,i}^n)$ given $S_2^{i-1}$, $X^n$ is a function of $(W,T_2,S_1^n)$ and that the channel is memoryless; i.e., $Y^{i-1}$ is independent of $(W,T_2,S_{1,i}^n,S_{2,i}^n)$ given $(X^{i-1}, S_1^{i-1}, S_2^{i-1})$.

In order to complete our proof, we need the following lemma.
\begin{lemma} The following inequality holds:
  \begin{align}
    \label{eq:33}
    \sum_{i=1}^n I(S_{1,i} ; W, Y^{i-1},S_{1,i+1}^n |T_2,S_2^{i-1}) \leq \sum_{i=1}^n I(S_{1,i} ; W, Y^{i-1}, S_2^{i-1}|T_2,S_{1,i+1}^n).
  \end{align}
  \begin{proof}
Notice that
    \begin{align}
      \label{eq:36}
     \sum_{i=1}^n  I(S_{1,i} ; W, Y^{i-1},S_{1,i+1}^n |T_2,S_2^{i-1}) = \sum_{i=1}^n I(S_{1,i} ; W, Y^{i-1},S_{1,i+1}^n,S_2^{i-1} |T_2) - I(S_{1,i} ; S_2^{i-1}|T_2)
    \end{align}
and that
\begin{align}
  \label{eq:37}
  \sum_{i=1}^n I(S_{1,i} ; W, Y^{i-1}, S_2^{i-1}|T_2,S_{1,i+1}^n) = \sum_{i=1}^n I(S_{1,i} ; W, Y^{i-1},S_{1,i+1}^n, S_2^{i-1}|T_2) - I(S_{1,i} ; S_{1,i+1}^n|T_2).
\end{align}
Therefore, it is enough to show that $ \sum_{i=1}^n - I(S_{1,i} ; S_2^{i-1}|T_2) \leq \sum_{i=1}^n - I(S_{1,i} ; S_{1,i+1}^n|T_2)$ holds in order to prove the lemma. Therefore, consider
\begin{align}
  \label{eq:38}
  \sum_{i=1}^n - I(S_{1,i} ; S_{1,i+1}^n|T_2) - \big(\sum_{i=1}^n - I(S_{1,i} ; S_2^{i-1}|T_2)\big) &  = \sum_{i=1}^n H(S_{1,i}|T_2,S_{1,i+1}^n) - H(S_{1,i}|T_2,S_2^{i-1}) \nonumber\\
  & = \sum_{i=1}^n H(S_1^n|T_2) - H(S_{1,i}|T_2,S_2^{i-1}) \nonumber\\
  & = \sum_{i=1}^n H(S_{1,i}|T_2,S_1^{i-1}) - H(S_{1,i}|T_2,S_2^{i-1}) \nonumber\\
  & \stackrel{(a)}{\geq} 0,
\end{align}
where (a) follows from the fact that the Markov chain $S_{1,i} - (T_2,S_2^{i-1}) - (T_2,S_1^{i-1})$ holds and from the data processing inequality. This completes the proof of the lemma.
  \end{proof}
\end{lemma}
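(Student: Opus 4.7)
The plan is to reduce the inequality to a comparison of two sums of conditional entropies involving only $S_1$, $S_2$ and $T_2$, and then settle it using the i.i.d.\ structure of the pairs $(S_{1,j},S_{2,j})$. First I would apply the chain rule for mutual information in a symmetric way to both sides. The left-hand summand becomes
\begin{align*}
 I(S_{1,i};W,Y^{i-1},S_{1,i+1}^n\mid T_2,S_2^{i-1}) = I(S_{1,i};W,Y^{i-1},S_{1,i+1}^n,S_2^{i-1}\mid T_2) - I(S_{1,i};S_2^{i-1}\mid T_2),
\end{align*}
and the right-hand summand is expanded analogously with the roles of $S_{1,i+1}^n$ and $S_2^{i-1}$ swapped, producing the same ``big'' common term $I(S_{1,i};W,Y^{i-1},S_{1,i+1}^n,S_2^{i-1}\mid T_2)$ and a subtracted term $-I(S_{1,i};S_{1,i+1}^n\mid T_2)$. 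After summing over $i$ and cancelling the common term, the lemma reduces to showing
\begin{align*}
 \sum_{i=1}^n I(S_{1,i};S_{1,i+1}^n\mid T_2) \;\le\; \sum_{i=1}^n I(S_{1,i};S_2^{i-1}\mid T_2),
\end{align*}
i.e., $\sum_i H(S_{1,i}\mid T_2,S_2^{i-1}) \le \sum_i H(S_{1,i}\mid T_2,S_{1,i+1}^n)$.

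Second, I would recognise the right-hand side as the single entropy $H(S_1^n\mid T_2)$ via the backward chain rule, and likewise use the forward chain rule to write $H(S_1^n\mid T_2)=\sum_i H(S_{1,i}\mid T_2,S_1^{i-1})$. Thus the task becomes
\begin{align*}
 \sum_{i=1}^n H(S_{1,i}\mid T_2,S_2^{i-1}) \;\le\; \sum_{i=1}^n H(S_{1,i}\mid T_2,S_1^{i-1}).
\end{align*}
This is where the main work sits: I would establish the per-$i$ inequality $H(S_{1,i}\mid T_2,S_2^{i-1})\le H(S_{1,i}\mid T_2,S_1^{i-1})$ by combining a Markov chain with conditioning-reduces-entropy. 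The Markov chain I would use is $S_{1,i}-(T_2,S_2^{i-1})-S_1^{i-1}$, which gives $H(S_{1,i}\mid T_2,S_2^{i-1})=H(S_{1,i}\mid T_2,S_2^{i-1},S_1^{i-1})$, and then dropping $S_2^{i-1}$ from the conditioning yields the inequality.

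The main obstacle, and the only nontrivial content, is verifying that Markov chain, because $T_2=f_v(S_2^n)$ mixes all of $S_2^n$ and therefore depends on the future indices $i,i+1,\dots,n$. I would verify it directly from the i.i.d.\ coupling: because the pairs $(S_{1,j},S_{2,j})$ are i.i.d., the block $(S_{1,i},S_{2,i}^n)$ is independent of $(S_1^{i-1},S_2^{i-1})$, so conditioning on $S_2^{i-1}$ makes $S_1^{i-1}$ independent of the future $S_2$ block, hence of $T_2$ and of $S_{1,i}$. This is analogous to—and in fact a close cousin of—the justifications the authors already used to introduce the auxiliary random variables $V_{2,i}=(T_2,S_2^{i-1})$ earlier in the converse of Case~2, so I expect the argument to slot in cleanly and close the proof.
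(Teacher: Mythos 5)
Your proposal is correct and follows essentially the same route as the paper's proof: the same chain-rule decomposition with cancellation of the common term $I(S_{1,i};W,Y^{i-1},S_{1,i+1}^n,S_2^{i-1}\mid T_2)$, the same use of the forward and backward entropy chain rules to rewrite $\sum_i H(S_{1,i}\mid T_2,S_{1,i+1}^n)=H(S_1^n\mid T_2)=\sum_i H(S_{1,i}\mid T_2,S_1^{i-1})$, and the same Markov chain $S_{1,i}-(T_2,S_2^{i-1})-S_1^{i-1}$ (invoked via data processing/conditioning-reduces-entropy) to close the per-index inequality. The only addition is your explicit verification of the Markov chain from the i.i.d.\ structure, which the paper asserts without a detailed check; your verification is correct.
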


We continue the proof of the converse by considering the following set of inequalities:
\begin{align}
  nR =& H(W) \nonumber \\
  \leq& H(W|T_2) - H(W|T_2,Y^n,S_2^n) +n\epsilon_n \nonumber \\
  =& I(W;Y^n,S_2^n|T_2) +n\epsilon_n \nonumber \\
  =& \sum_{i=1}^n I(W;Y_i,S_{2,i}|T_2, Y^{i-1}, S_2^{i-1}) +n\epsilon_n\nonumber \\
  \stackrel{(a)}{=}& \sum_{i=1}^n \Big[ I(W,S_{1,i+1}^n; Y_i, S_{2,i}| T_2,Y^{i-1}, S_2^{i-1}) \nonumber \\
  & \qquad -I(S_{1,i+1}^n ; Y_i,S_{2,i}|W,T_2,Y^{i-1},S_2^{i-1})\Big] +n\epsilon_n\nonumber \\
  \stackrel{(b)}{=}& \sum_{i=1}^n \Big[I(W,S_{1,i+1}^n; Y_i, S_{2,i}|T_2, Y^{i-1}, S_2^{i-1}) \nonumber \\
  & \qquad -I(S_{1,i} ; Y^{i-1}, S_2^{i-1}|W,T_2,S_{1,i+1}^n)\Big] +n\epsilon_n\nonumber \\
  & = \sum_{i=1}^n \Big[I(W,S_{1,i+1}^n; Y_i, S_{2,i}|T_2, Y^{i-1}, S_2^{i-1}) \nonumber \\
  & \qquad -I(S_{1,i} ; W, Y^{i-1}, S_2^{i-1}|T_2,S_{1,i+1}^n)\Big] +n\epsilon_n\nonumber \\
  & \stackrel{(c)}{\leq} \sum_{i=1}^n \Big[I(W,S_{1,i+1}^n; Y_i, S_{2,i}|T_2, Y^{i-1}, S_2^{i-1}) \nonumber \\
  & \qquad -I(S_{1,i} ; W, Y^{i-1},S_{1,i+1}^n |T_2,S_2^{i-1})\Big] +n\epsilon_n\nonumber \\
  =& \sum_{i=1}^n I(U_i ; Y_i, S_{1,i+1}^n | T_2,S_2^{i-1}) - I(U_i;S_{1,i}|V_{2,i}),
\end{align}
where (a) follows from the mutual information properties, (b) follows from the Csisz\'ar sum identity and (c) follows from Lemma 3. Therefore,
\begin{align}
  R' \geq & \frac{1}{n} \sum_{i=1}^n I(S_{2,i} ; V_{2,i}|S_{1,i})\\
  R - \epsilon_n\leq& \frac{1}{n} \sum_{i=1}^n \Big[ I(U_i;Y_i, S_{2,i}|V_{2,i}) -I(U_i;S_{1,i}|V_{2,i})\Big].
\end{align}
Using the convexity of $R'$ and Jansen's inequality, the standard time sharing argument for $R$ and the fact that $\epsilon_n \to 0$ as $n \to \infty$, we can conclude that
\begin{align}
  R' \geq& I(V_2;S_2|S_1),\\
  R \leq& I(U;Y,S_2|V_2)- I(U;S_1| V_2),
\end{align}
where the Markov chain $V_2 - S_2 - S_1$ holds.
Therefore, we can conclude that the expression given in~(\ref{eq:CC_case2}) is an upper-bound to any achievable rate. This concludes the proof of the upper-bound and the proof of Theorem \ref{theorem:CC} Case 2.

\subsection{Proof of Theorem \ref{theorem:CC}, Case 2$_C$}
\label{sec:proof:CC:causal}
For describing the DSI, $S_2$, with a rate $R'$ we use the standard  rate-distortion coding scheme. Then, for the channel coding we use the Shannon strategy~\cite{1662378} coding scheme where the channel's causal state information at the encoder is $S_1$, $S_2$ is a part of the channel's output and the rate-limited description of $S_2$ is the side information at both the encoder and the decoder.

\begin{figure}[h!]
  \centering
  \small
  \psfrag{A}{\ \ Encoder}
    \psfrag{B}{\ \ Channel}
    \psfrag{D}{\ \ Decoder}
    \psfrag{W}{$W$}
    \psfrag{w}{$\hat{W}$}
    \psfrag{X}{$X^n$}
    \psfrag{x}{$X_i$}
    \psfrag{Y}{$Y^n$}
    \psfrag{y}{$Y_i$}
    \psfrag{T}{$S_1^n$}
    \psfrag{t}{$S_{1,i}$}
    \psfrag{R}{$S_2^n$}
    \psfrag{r}{$S_{2,i}$}
    \psfrag{J}{$R'$}
    \psfrag{K}{$R'$}
    \includegraphics[scale = 0.5]{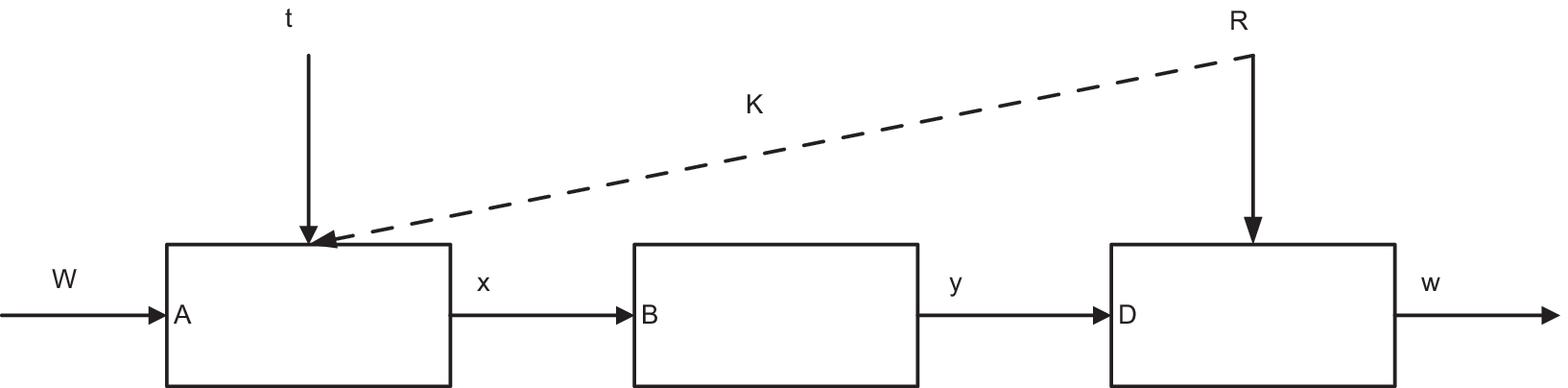}
  \caption{Channel capacity: Case 2 with causal ESI. $C_{2C} = \max I(U;Y,S_2|V_2)$, where the maximization is over all PMFs $p(v_2|s_2) p(u|v_2) p(x|u,s_1,v_2)$ such that $R' \geq I(V_2;S_2)$.}
  \label{fig:channel2c}
\end{figure}
\mbox{\quad}{\bf Achievability:} ({\it Channel capacity Case~2$_C$}). Given $(S_{1,i},S_{2,i})\sim$ i.i.d. $p(s_1,s_2)$, where the ESI is known in a causal way ($S_1^i$ at time $i$), and the memoryless channel $p(y|x,s_1,s_2)$, fix $p(s_1, s_2, v_2, u, x, y) = p(s_1,s_2) p(v_2|s_2) p(u|v_2) p(x|u,s_1,v_2) p(y|x,s_1,s_2)$, where $x = f(u,s_1,v_2)$ (i.e., $p(x|u,s_1,v_2)$ can get the values $0$ or $1$).\\*

{\it Codebook generation and random binning}
\begin{enumerate}
\item Generate a codebook $\mathcal{C}_v$ of $2^{n\big(I(V_2;S_2) + 2\epsilon\big)}$ sequences $V_2^n$ independently using $\iid \sim p(v_2)$. Label them $v_2^n(k)$ where $k \in \big\{1,2,\dots, 2^{n(I(V_2;S_2) + 2\epsilon)}\big\}$.
\item For each $v_2^n(k)$ generate a codebook $\mathcal{C}_u(k)$ of $2^{n\big(I(U;Y,S_2|V_2)-2\epsilon\big)}$ sequences $U^n$ distributed independently according to $\iid \sim p(u|v_2)$. Label them $u^n(w,k)$, where $w\in \big\{1,2,\dots, 2^{n(I(U;Y,S_2|V_2) - 2\epsilon)}\big\}$, and associate the sequences $u^n(w,\cdot)$ with the message $W=w$.
\end{enumerate}
Reveal the codebooks and the content of the bins to all encoders and decoders.\\*
{\it Encoding}
\begin{enumerate}
\item {\it State Encoder}: Given the sequence $S_2^n$, search the codebook $\mathcal{C}_v$ and identify an index $k$ such that $\big(v_2^n(k), S_2^n\big) \in \T(V_2,S_2)$. If such a $k$ is found, stop searching and send it. Otherwise, if no such $k$ is found, declare an error.
\item {\it Encoder}: Given the message $W \in \big\{1,2,\dots,2^{n(I(U;Y,S_2|V_2)-2\epsilon)}\big\}$, the index $k$ and $S_1^i$ at time $i$, identify $u^n(W,k)$ in the codebook $\mathcal{C}_u(k)$ and transmit $x_i = f\big(u_i(W,k), S_{1,i}, v_{2,i}(k)\big)$ at any time $i \in \{1,2,\dots,n\}$. The element $x_i$ is the result of a multiplexer with an input signal $\big(u_i(W,k) , v_{2,i}(k)\big)$ and a control signal  $S_{1,i}$.
\end{enumerate}
{\it Decoding}\\
Given $Y^n, S_2^n$ and $k$, look for a unique index $\hat W$, associated with the sequence $u^n(\hat{W},k) \in {\cal C}_u(k)$, such that $\big(Y^n, S_2^n, u^n(\hat{W},k)\big) \in \T(Y,U,S_2|v_2^n(k))$. If a unique such $\hat W$ is found, declare that the sent message was $\hat W$. Otherwise, if no unique index $\hat W$ exists, declare an error.\\*
\\*
{\it Analysis of the probability of error}\\
Without loss of generality, let us assume that the message $W=1$ was sent and the index $k$ that correspond with $S_2^n$ is $k=1$;  i.e., $v_2^n(1)$ corresponds to $S_2^n$ and $u^n(1,1)$ is chosen according to $\big(W=1, v_2^n(1)\big)$.\\*
Define the following events:
\begin{align*}
  E_1 &:= \Big\{\forall v_2^n(k)\in {\cal C}_v,\ \big(S_2^n, v_2^n(k) \big) \notin \T(S_2,V_2) \Big\}\\
  E_2 &:= \Big\{(u^n(1,1), Y^n,S_2^n) \notin \T(U,Y,S_2|v_2^n(1)) \Big\}\\
  E_3 &:= \Big\{\exists w' \neq 1:\ u^n(w',1) \in {\cal C}_u(1) {\rm\ and}\  \big(u^n(w',1),Y^n ,S_2^n\big) \in \T(U,Y,S_2|v_2^n(1)) \Big\}.
\end{align*}
The probability of error $P_e^{(n)}$ is upper bounded by $P_e^{n} \leq P(E_1)+P(E_2|E_1^c) + P(E_3|E_1^c, E_2^c)$. Using standard arguments and assuming that $(S_1^n,S_2^n) \in \T(S_1,S_2)$ and that $n$ is large enough, we can state that
\begin{enumerate}
\item For each sequence $v_2^n \in \mathcal{C}_v$, the probability that $v_2^n$ is not jointly typical with $S_2^n$ is at most $\big(1-2^{-n(I(V_2;S_2)+\epsilon)}\big)$. Therefore, having $2^{n(I(V_2;S_2)+2\epsilon)}$ $\iid$ sequences in $\mathcal{C}_v$, the probability that none of those sequences is jointly typical with $S_2^n$ is bounded by
    \begin{align}
      P(E_1) \leq & 2^{n(I(V_2;S_2)+2\epsilon)} \big(1-2^{-n(I(V_2;S_2)+\epsilon)}\big) \nonumber\\
        & \leq e^{-2^{n(I(V_2;S_2)+2\epsilon)} 2^{-n(I(V_2;S_2)+\epsilon)}} \nonumber\\
        & = e^{-2^{n\epsilon}},
      \end{align}
where, for every $\epsilon>0$, the last line goes to zero as $n$ goes to infinity.
\item The random variable $Y^n$ is distributed according to $p(y|x,s_1,s_2) = p(y|x,s_1,s_2,v_2)$, therefore, having $(S_2^n,v_2^n(1)) \in \T(S_2,V_2)$ implies that $(Y^n,S_2^n,v_2^n(1)) \in \T(Y,S_2,V_2)$. Recall that  $x_i = f\big(u_i(1,1),S_{1,i},v_2(1)\big)$ and that $U^n$ is generated according to $p(u|v_2)$; therefore, $(X^n,S_1^n, u^n(1,1),v_2^n(1))$ is jointly typical. Thus, by the Markov~lemma~\cite{Berger}, we can state that $(Y^n,S_2^n,u^n(1,1),v_2^n(1)) \in \T(Y,S_2,U,V_2)$ with high probability for a large enough $n$.
\item Now, the probability for a random $U^n$, such that $(U^n,v_2^n(1)) \in \T(U,V_2)$, to be also jointly typical with $(Y^n,S_2^n,v_2^n(1))$ is upper bounded by $2^{-n(I(U,Y,S_2|V_2) - \epsilon)}$, hence
  \begin{align}
    \label{eq:30}
    P(E_3|E_1^c,E_2^c) \leq & \sum_{1<w'}^{|\mathcal{C}_u(1)|} \Pr\big\{\big(u^n(w',1), Y^n,S_2^n\big) \in \T(U,Y,S_2|v_2^n(1)) \Big\} \nonumber\\
    \leq & \sum_{1<w'}^{|\mathcal{C}_u(1)|} 2^{-n(I(U,Y,S_2|V_2) - \epsilon)} \nonumber\\
    \leq & 2^{n(I(U,Y,S_2|V_2) - 2\epsilon)} 2^{-n(I(U,Y,S_2|V_2) - \epsilon)} \nonumber\\
    = & 2^{-n\epsilon},
  \end{align}
  which goes to zero exponentially fast with $n$ for every $\epsilon >0$.\\
  Therefore, $P^{(n)}_{\epsilon} = P(\hat W \neq W)$ goes to zero as $n \to \infty$.
\end{enumerate}
{\mbox{\quad}\bf Converse:} ({\it Channel capacity case 2$_c$}). Fix the rates $R$ and $R'$ and a sequence of codes $(2^{nR}, 2^{nR'}, n)$ that achieve capacity. By Fano's inequality, $H(W|Y^n,S_2^n) \leq n\epsilon_n$, where $\epsilon_n \rightarrow 0$ as $n \rightarrow \infty$. Let $T_2=f_v(S_2^n)$, and define $V_{2,i} = (T_2, Y^{i-1}, S_2^{i-1}),\ U_i = W$. Then,
%
\begin{align}
  nR' \geq& H(T_2) \nonumber \\
  \geq& H(T_2) - H(T_2 | S_2^n) \nonumber \\
  =& I(T_2; S_2^n) \nonumber \\
  =& H(S_2^n)-H(S_2^n|T_2) \nonumber \\
  =& \sum_{i=1}^n \Big[H(S_{2,i}|S_2^{i-1})-H(S_{2,i}|T_2,S_2^{i-1})\Big] \nonumber \\
  \stackrel{(a)}{=}& \sum_{i=1}^n \Big[H(S_{2,i}) - H(S_{2,i}|T_2,S_2^{i-1},Y^{i-1})\Big] \nonumber \\
  =& \sum_{i=1}^n I(S_{2,i} ; T_2, Y^{i-1}, S_2^{i-1}) \nonumber \\
  =& \sum_{i=1}^n I(S_{2,i} ; V_{2,i}), \label{eq:11}
\end{align}

where (a) follows from the fact that $S_{2,i}$ is independent of $S_2^{i-1}$ and the fact that  $S_{2,i}$ is independent of $Y^{i-1}$ given $(T_2, S_2^{i-1})$. The proof for this follows.
\begin{align}
  \label{eq:35}
  p(y^{i-1}|t_2,s_2^{i-1}, s_{2,i}) = & \sum_{w,x^{i-1}, s_1^{i-1}} p(y^{i-1}, w,x^{i-1}, s_1^{i-1}|t_2,s_2^{i-1}, s_{2,i}) \nonumber\\
  =& \sum_{w,x^{i-1},s_1^{i-1}} p(w) p(s_1^{i-1}|s_2^{i-1})p(x^{i-1}|w,t_2,s_1^{i-1}) p(y^{i-1}|x^{i-1},s_1^{i-1}, s_2^{i-1}) \nonumber\\
  =& p(y^{i-1}|t_2, s_2^{i-1}),
\end{align}
where we used the fact that $W$ is independent of $(T_2,S_2^{i-1}, S_{2,i})$, $S_1^{i-1}$ is independent of $(T_2,S_{2,i})$ given $S_2^{i-1}$, $X^{i-1}$ is a function of $(W,T_2,S_1^{i-1})$ and that $Y^{i-1}$ is independent of $(W,T_2,S_{2,i})$ given $(X^{i-1},S_1^{i-1},S_2^{i-1})$. We now continue with the proof of the converse.
\begin{align}
  nR \leq& H(W) \nonumber \\
  \leq& H(W|T_2) - H(W|T_2,Y^n,S_2^n) +n\epsilon_n \nonumber \\
  =& I(W;Y^n,S_2^n|T_2) +n\epsilon_n \nonumber \\
  =& \sum_{i=1}^n I(W;Y_i,S_{2,i}|T_2, Y^{i-1}, S_2^{i-1}) +n\epsilon_n\nonumber \\
  =& \sum_{i=1}^n I(U_i ; Y_i, S_{2,i} | V_{2,i}) + n\epsilon_n \label{eq:12}
\end{align}
and therefore, from  (\ref{eq:11}) and (\ref{eq:12})
\begin{align}
  R' \geq & \frac{1}{n} \sum_{i=1}^n I(S_{2,i} ; V_{2,i})\\
  R - \epsilon_n\leq& \frac{1}{n} \sum_{i=1}^n  I(U_i;Y_i, S_{2,i}|V_{2,i}).
\end{align}
Using the convexity of $R'$ and Jansen's inequality, the standard time-sharing argument for $R$ and the fact that $\epsilon_n \to 0$ as $n \to \infty$, we can conclude that
\begin{align}
  R' \geq& I(V_2;S_2),\\
  R \leq& I(U;Y,S_2|V_2).
\end{align}

Notice that the Markov chain $V_{2,i} - S_{2,i} - S_{1,i}$ holds since $(Y^{i-1},S_2^{i-1})$ is independent of $S_{1,i}$ and $T_2(S_2^n)$ is dependent on $S_{1,i}$ only through $S_{2,i}$. Notice also that the Markov chain $U_i - V_{2,i} - (S_{1,i},S_{2,i})$ holds since

\begin{align}
  \label{eq:10}
  p(w|t_2,y^{i-1}, s_2^{i-1}, s_{1,i}, s_{2,i}) =& \sum_{x^{i-1},s_1^{i-1}} p(w,x^{i-1},s_1^{i-1} | t_2,y^{i-1}, s_2^{i-1}, s_{1,i}, s_{2,i}) \nonumber\\
    =& \sum_{x^{i-1},s_1^{i-1}} p(s_1^{i-1}|t_2,y^{i-1},s_2^{i-1}) p(x^{i-1} | t_2, y^{i-1}, s_1^{i-1},s_2^{i-1}) p(w|t_2,x^{i-1},s_1^{i-1}) \nonumber\\
    = & p(w|t_2,y^{i-1},s_2^{i-1}).
\end{align}
This concludes the converse, and the proof of Theorem \ref{theorem:CC} Case 2$_C$.

\section{Proof of Theorem \ref{theorem:SC}}
\label{proof:CS}

In this section we provide the proof of Theorem~\ref{theorem:SC}, Cases 1 and 1$_C$. Case 2, where the encoder is informed with increased ESI and the decoder is informed with DSI is a special case of~\cite{DBLP:journals/tit/Kaspi85} for $K=1$ and, therefore, the proof for this case is omitted. Following Kaspi's scheme (Figure~\ref{fig:kaspi}) for $K=1$, at the first stage, node $W$ sends a description of $W$ with a rate limited to $R_w$, then, after reconstructing $\hat W$ at the $Z$ node, it sends a function of $Z$ and $\hat W$ over to node $W$ with a rate limited to $R_z$. Let $S_2$ be $W$ in Kaspi's scheme and $(X,S_1)$ be $Z$ in Kaspi's scheme. Consider $D_z = d(Z_i,\hat Z_i) = d\big((X,S_{1,i}), (\hat X_i, \hat S_{1,i})\big) = d(X_i,\hat X_i) = D$. 
Then, it is apparent that Case 2 of the rate-distortion problems is a special case of Kaspi's two-way problem for $K=1$.
\begin{figure}[h!]
  \centering
  \scriptsize
  \psfrag{a}{W CODEC}
  \psfrag{b}{Z CODEC}
  \psfrag{c}{Binary data at rate $R_z$}
  \psfrag{d}{Binary data at rate $R_w$}
  \psfrag{w}{$\{W_i\}$}
  \psfrag{z}{$\{Z_i\}$}
  \psfrag{W}{$\{\hat W_i\}$}
  \psfrag{Z}{$\{\hat Z_i\}$}
  \includegraphics[scale = 1]{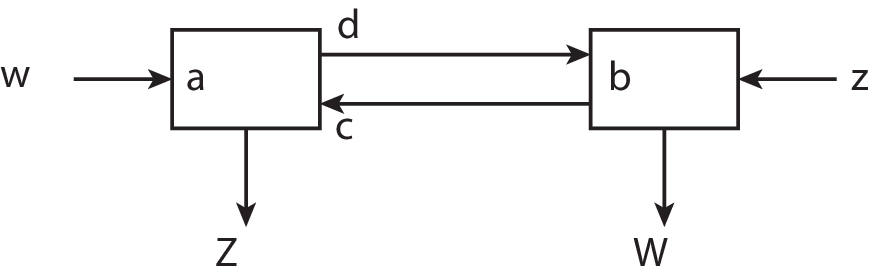}
  \caption{Kaspi's two-way source coding scheme. The total rates are $R_w = \sum_{k=1}^K R_w^k$ and $R_z = \sum_{k=1}^K R_z^k$ and the expected per-letter distortions are $D_w = \mathbb{E}\Big[\frac{1}{n} \sum_{i=1}^n d(W_i, \hat W_i) \Big]$ and $D_z = \mathbb{E}\Big[\frac{1}{n} \sum_{i=1}^n d(Z_i, \hat Z_i) \Big]$.}
  \label{fig:kaspi}
\end{figure}
\subsection{Proof of Theorem \ref{theorem:SC}, Case 1}
\label{proof:SC:case1}
We use the Wyner-Ziv coding scheme for the description of the ESI, $S_1$, at a rate $R'$, where the source is $S_1$ and the side information at the decoder is $S_2$. Then, to describe the main source, $X$, with distortion less than or equal to $D$ we use the Wyner-Ziv coding scheme again, where this time, $S_2$ is the side information at the decoder, $S_1$ is a part of the source and the rate-limited description of $S_1$ is the side information at both the encoder and the decoder.  Notice that $I(U;X,S_1|V_1) - I(U;S_2|V_1) = I(U;X,S_1,V_1) - I(U;S_1,V_1)$ and that since the Markov chain $V_1 - S_1 - S_2$ holds, it is also possible to write $R' \geq I(V_1;S_1) - I(V_1;S_2)$; we use these expressions in the following proof.

\begin{figure}[h!]
  \centering
  \small
    \psfrag{A}{\ \ Encoder}
    \psfrag{B}{\ \ Decoder}
    \psfrag{X}{$X^n$}
    \psfrag{x}{$\hat{X}^n$}
    \psfrag{s}{$\hat{X}_i$}
    \psfrag{E}{$S_1^n$}
    \psfrag{e}{$S_{1,i}$}
    \psfrag{R}{$S_2^n$}
    \psfrag{r}{$S_{2,i}$}
    \psfrag{J}{$R'$}
    \psfrag{K}{$R'$}
    \includegraphics[scale = 0.5]{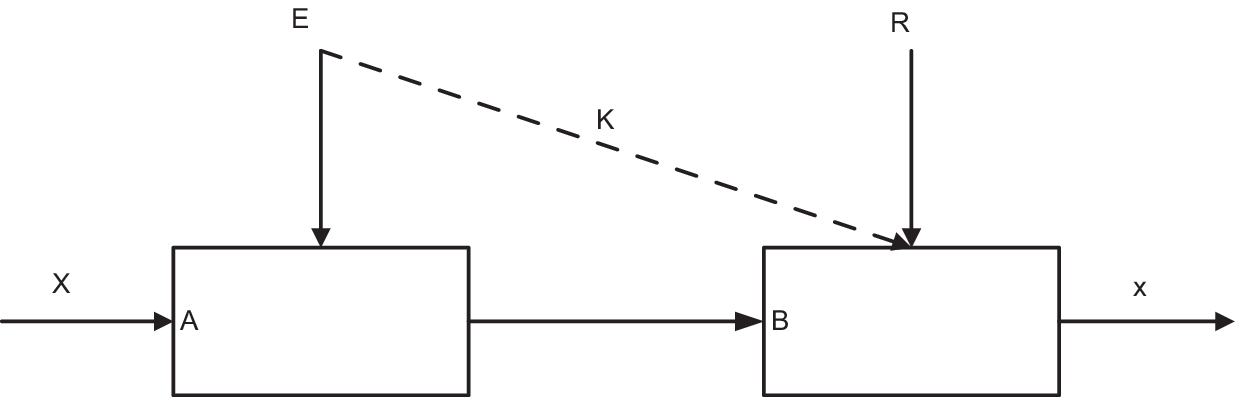}
  \caption{Rate-distortion: Case 1. $R_1(D) = \min I(U;X,S_1|V_1) - I(U;S_2|V_1)$, where the minimization is over all PMFs $p(v_1|s_1) p(u|x,s_1,v_1) p(\hat x|u,s_2,v_1)$ such that $R' \geq I(V_1;S_1|S_2)$ and $\mathbb{E}\Big[d(X,\hat X)\Big]\leq D$.}
  \label{fig:RD1}
\end{figure}
{\bf Achievability:} ({\it Rate-distortion Case 1}). Given $(X_i, S_{1,i}, S_{2,i})\ \iid \sim \ p(x,s_1,s_2)$ and the distortion measure $D$, fix $p(x, s_1, s_2, v_1, u, \hat x) = p(x,s_1,s_2) p(v_1|s_1) p(u|x,s_1,v_1) p(\hat x|u,s_2,v_1)$ that satisfies $\mathbb E \big[ d(X, \hat X) \big] = D$ and $\hat x = f(u, s_2, v_1)$.\\*
{\it Codebook generation and random binning}\\*
\begin{enumerate}
\item Generate a codebook, ${\cal C}_v$, of $2^{n\big(I(V_1;S_1)+2\epsilon\big)}$ sequences, $V_1^n$, independently using $\iid \sim p(v_1)$. Label them $v_1^n(k)$, where $k \in \big\{1,2,\dots,2^{n(I(V_1;S_1) + 2\epsilon)}\big\}$ and randomly  assign each sequence $v_1^n(k)$ a bin number $b_v\big(v_1^n(k)\big)$ in the set $\big\{1,2,\dots,2^{nR'}\big\}$.
\item Generate a codebook ${\cal C}_u$ of $2^{n\big(I(U;X,S_1,V_1)+2\epsilon\big)}$ sequences $U^n$ independently using $\iid \sim p(u)$. Label them $u^n(l)$, where $l \in \big\{1,2,\dots, 2^{n(I(U;X,S_1,V_1)+2\epsilon)}\big\}$, and randomly and assign each $u^n(l)$ a bin number $b_u\big(u^n(l)\big)$ in the set $\big\{1,2,\dots, 2^{nR}\big\}$.
\end{enumerate}
Reveal the codebooks and the content of the bins to all encoders and decoders.\\*
{\it Encoding}
\begin{enumerate}
\item {\it State Encoder}: Given the sequence $S_1^n$, search the codebook ${\cal C}_v$ and identify an index $k$ such that $\big(S_1^n, v_1^n(k)\big) \in \T(S,V_1)$. If such a $k$ is found, stop searching and send the bin number $j = b_v\big(v_1^n(k)\big)$. If no such $k$ is found, declare an error.
\item {\it Encoder}: Given the sequences $X^n$, $S_1^n$ and $v_1^n(k)$, search the codebook ${\cal C}_u$ and identify an index $l$ such that  $\big(X^n, S_1^n, v_1^n(k), u^n(l)\big) \in \T(X,S_1,V_1,U)$. If such an $l$ is found, stop searching and send the bin number $w = b_u\big(u^n(l)\big)$. If no such $l$ is found, declare an error.
\end{enumerate}

{\it Decoding}\\
Given the bins indices $w$ and $j$ and the sequence $S_2^n$, search the codebook ${\cal C}_v$ and identify an index $k$ such that $\big(S_2^n, v_1^n(k)\big) \in \T(S_2,V_1)$ and $b_v\big(v_1^n(k)\big) = j$. If no such $k$ is found or there is more than one such index, declare an error. If a unique $k$, as defined, is found, search the codebook ${\cal C}_u$ and identify an index $l$ such that $\big(S_2^n, v_1^n(k), u^n(l) \big) \in \T(S_2, V_1, U)$ and $b_u\big(u^n(l)\big) = w$. If a unique $l$, as defined, is found, declare $\hat X_i = f_i(u^n_i(l), S_{2,i}, v_{1,i}(k)),\ i = 1,2,\dots ,n$. Otherwise, if there is no such $l$ or there is more than one, declare an error.
\\*
{\it Analysis of the probability of error}\\
Without loss of generality, for the following events $E_2, E_3, E_4, E_5$ and $E_6$, assume that $v_1^n(k=1)$ and $b_v\big(v_1^n(k=1)\big)=1$ correspond to the sequences $(X^n,S_1^n,S_2^n)$ and for the events $E_5$ and $E_6$ assume that $u^n(l=1)$ and $b_u\big(u^n(l=1)\big) = 1$ correspond to the same given sequences.
Define the following events:
\begin{align*}
  E_1 &:= \Big\{\forall v_1^n(k)\in {\cal C}_v,\ \big(S_1^n, v_1^n(k) \big) \notin \T(S_1,V_1) \Big\}\\
  E_2 &:= \Big\{\big(S_1^n, v_1^n(1)\big)\in \T(S_1,V_1) \mbox{ but } \big(S_2^n, v_1^n(1) \big) \notin \T(S_2,V_1) \Big\}\\
  E_3 &:= \Big\{\exists k' \neq 1 \mbox{ such that } b_v\big(v_1^n(k')\big) = 1  \mbox{ and } \big(S_2^n, v_1^n(k') \big) \in \T(S_2,V_1) \Big\}\\
  E_4 &:= \Big\{\forall u^n(l)\in {\cal C}_u,\ \big(X^n, S_1^n, v_1^n(1), u^n(l)\big)  \notin \T(X,S_1,V_1,U\big) \Big\}\\
  E_5 &:= \Big\{\big(X^n, S_1^n, v_1^n(1), u^n(1)\big)  \in \T(X,S_1,V_1,U\big) \mbox{ but } \big(S_2^n, v_1^n(1), u^n(1) \big) \notin \T(S_2,V_1,U\big) \Big\}\\
  E_6 &:= \Big\{\exists l' \neq 1 \mbox{ such that } b_u\big(u^n(l')\big) = 1 \mbox{ and } \big(S_2^n, v_1^n(1), u^n(l')\big) \in \T(S_2,V_1,U)  \Big\}.
\end{align*}
The probability of error $P_e^{(n)}$ is upper bounded by $P_e^{n} \leq P(E_1)+P(E_2|E_1^c)+P(E_3|E_1^c,E_2^c)+P(E_4|E_1^c,E_2^c,E_3^c)+P(E_5|E_1^c,\dots,E_4^c)+P(E_6|E_1^c\dots,E_5^c)$. Using standard arguments and assuming that $(X^n, S_1^n,S_2^n) \in \T(X,S_1,S_2)$ and that $n$ is large enough, we can state that
\begin{enumerate}
\item
  \begin{align}
    P(E_1) =& \Pr \big\{\bigcap_{v_1^n(k)\in{\cal C}_v} \big(S_1^n,v_1^n(k)\big) \notin \T(S_1,V_1) \big\} \nonumber \\
    \leq& \prod_{k=1}^{2^{n\big(I(V_1;S_1)+\epsilon\big)}} \Pr\{\big(S_1^n,V_1^n(k)\big) \notin \T(S_1,V_1)\} \nonumber \\
    \leq& e^{-2^{n\big(I(V_1;S_1)+2\epsilon\big)} 2^{-nI(S_1;V_1)-n\epsilon}} \nonumber\\
    =& e^{-n\epsilon}.
  \end{align}
  The probability that there is no $v_1^n(k)$ in $\mathcal{C}_v$ such that $\big(S_1^n, v_1^n(k) \big)$ is strongly jointly typical is exponentially small provided that $|\mathcal{C}_v| > 2^{n\big(I(S_1;V_1)+\epsilon\big)}$. This follows from the standard rate-distortion argument that $2^{nI(S_1;V_1)}$ $v_1^n(k)$s ``cover'' ${\cal S}_1^n$, therefore $P(E_1) \rightarrow 0$.

\item
  By the Markov lemma, since $(S_1^n,S_2^n)$ are strongly jointly typical and $\big(S_1^n, v_1^n(1)\big)$ are strongly jointly typical and the Markov chain $V_1 - S_1 - S_2$ holds, then $\big(S_1^n, S_2^n, v_1^n(1)\big)$ are also strongly jointly typical. Thus, $P(E_2|E_1^c) \to 0$.
\item
  \begin{align}
    P(E_3) =& \Pr\big\{\bigcup_{\substack{v_1^n(k'\neq1)\\ b_v\big(v_1(k')\big) = 1}} \big(S_2^n,v_1^n(k')\big) \in \T(S_1,V_1) \big\} \nonumber\\
    \leq& \sum_{\substack{v_1^n(k'\neq1)\\ b_v\big(v_1(k')\big) = 1}} \Pr\big\{(S_1^n,v_1^n(k')\big) \in \T(S_1,V_1)\} \nonumber\\
    \leq& 2^{n\big(I(V_1;S_1)+2\epsilon-R')} 2^{-n\big(I(S_2;V_1)-\epsilon\big)}.
  \end{align}
  The probability that there is another index $k',\ k'\neq 1$, such that $v_1^n(k')$ is in bin number $1$ and that it is strongly jointly typical with $S_2^n$  is bounded by the number of $v_1^n(k)$'s in the bin times the probability of joint typicality. Therefore, if $R' > I(V_1;S_1)-I(V_1;S_2) + 3\epsilon$ then $P(E_3|E_1^c,E_2^c) \rightarrow 0$. Furthermore, using the Markov chain $V_1 - S_1 - S_2$, we can see that the inequality can be presented as $R' > I(V_1;S_1|S_2)+3\epsilon$.

\item
  We  use here the same argument we used for $P(E_1)$. By the covering lemma we can state that the probability that there is no $u^n(l)$ in ${\cal C}_u$ that is strongly jointly typical with $\big(X^n,S_1^n, v_1^n(k)\big)$ tends to $0$ as $n \rightarrow \infty$ if $R'_u > I(U; X, S_1, V_1) + \epsilon$.  Hence, $P(E_4|E_1^c,E_2^c,E_3^c) \to 0$. 

\item
  Using the same argument we used for $P(E_2|E_1^c)$, we conclude that $P(E_4|E_1^c,E_2^c,E_3^c) \to 0$.

\item
  We use here the same argument we used for $P(E_2|E_1^c)$. Since $(U, X, S_1V_1)$ are strongly jointly typical,  $(X, S_1, S_2)$ are strongly jointly typical and the Markov chain $( U, V_1) - (X,S_1) - S_2$ holds, then $(U, X, S_1, S_2, V_1)$ are also strongly jointly typical.

\item
The probability that there is another index $l',\ l' \neq 1$ such that $u^n(l')$ is in bin number $1$ and that it is strongly jointly typical with $\big( S_2^n, v_1^n(1) \big)$ is exponentially small provided that $R \geq I(U; X, S_1, V_1) - I(U; S_2, V_1)+3\epsilon = I(U; X,S_1 |V_1) - I(U; S_2 | V_1) + 3\epsilon$. Notice that $2^{n(I(U;X,S_1,V_1)-R)}$ stands for the average number of sequences $u^n(l)$'s in each bin indexed $w$ for $w \in \{1,2,\dots, 2^{nR}\}$.
\end{enumerate}

This shows that for rates $R$ and $R'$ as described, and for large enough $n$, the error events are of arbitrarily small probability. This concludes the proof of the achievability for the source coding Case~1.
\\
{\bf Converse:} ({\it Rate-distortion Case 1}).  Fix a distortion measure $D$, the rates $R'$, $R \geq R(D) = \min I(U; X, S_1| V_1) - I(U; S_2| V_1) = \min I(U;X, S_1| S_2, V_1)$ and a sequence of codes $(2^{nR}, 2^{nR'}, n)$ such that $\mathbb E\Big[ \frac{1}{n} \sum_{i=1}^n d(X_i, \hat X_i) \Big] = D$.   Let $T_1=f_v(S_1^n)$, $T = f(X^n, S_1^n,T)$ and define $V_{1,i} = (T_1, S_{1,i+1}^n, S_2^{i-1}, S_{2,i+1}^n)$ and $\ U_i = T$. Notice that $\hat X_i = \hat X_i(T, T_1, S_2^n)$ and, therefore, $\hat X_i$ is a function of $(U_i, V_{1,i}, S_{2,i})$.
\begin{align}
  n R' \geq& H(T_1) \nonumber \\
  \geq& H(T_1|S_2^n) - H(T_1|S_1^n, S_2^n) \nonumber \\
  =& I(T_1;S_1^n|S_2^n) \nonumber \\
  =& H(S_1^n| S_2^n)- H(S_1^n| T_1, S_2^n) \nonumber \\
  =& \sum_{i=1}^{n} \Big[ H(S_{1,i}| S_{1,i+1}^n, S_2^n) - H(S_{1,i}|T_1, S_{1,i+1}^n , S_2^n) \Big] \nonumber \\
  \stackrel{(a)}{=}& \sum_{i=1}^{n} \Big[ H(S_{1,i}| S_{2,i}) - H(S_{1,i}|T_1, S_{1,i+1}^n , S_2^{i-1}, S_{2,i+1}^n, S_{2,i}) \Big] \nonumber \\
  =& \sum_{i=1}^{n} \Big[ H(S_{1,i}| S_{2,i}) - H(S_{1,i}| V_{1,i}, S_{2,i}) \Big] \nonumber \\
  =& \sum_{i=1}^{n} I(S_{1,i}; V_{1,i}| S_{2,i}),
\end{align}
where $(a)$ follows from the fact that $S_{1,i}$ is independent of $(S_{1,i+1}^n, S_{2}^{i-1}, S_{2,1+i}^n)$ given $S_{2,i}$.
\begin{align}
  nR \geq& H(T) \nonumber \\
  \geq& H(T|T_1,S_2^n) - H(T|T_1, X^n, S_1^n, S_2^n) \nonumber \\
  =& I(T;X^n,S_1^n|T_1,S_2^n) \nonumber \\
  =& H(X^n, S_1^n| T_1, S_2^n) - H(X^n, S_1^n|T, T_1, S_2^n) \nonumber \\
  =& \sum_{i=1}^n \Big[ H(X_i, S_{1,i}| T_1, S_2^n, X_{i+1}^n, S_{1,i+1}^n) - H(X_i, S_{1,i}| T,T_1, S_2^n, X_{i+1}^n, S_{1,i+1}^n) \Big] \nonumber \\
  \stackrel{(b)}{=}& \sum_{i=1}^n \Big[ H(X_i, S_{1,i}| T_1, S_{1,i+1}^n, S_2^n) - H(X_i, S_{1,i}| T,T_1, S_2^n, X_{i+1}^n, S_{1,i+1}^n) \Big] \nonumber  \\
  \stackrel{(c)}{\geq}& \sum_{i=1}^n \Big[ H(X_i, S_{1,i}| T_1, S_{1,i+1}^n, S_2^n) - H(X_i, S_{1,i}|T,T_1, S_{1,i+1}^n, S_2^n) \Big]\nonumber\\
  =& \sum_{i=1}^n  I(X_i, S_{1,i} ; T| T_1, S_{1,i+1}^n, S_2^n) \nonumber \\
  =& \sum_{i=1}^n  I(X_i, S_{1,i} ; U_i|V_{1,i}, S_{2,i}) \nonumber \\
  =& \sum_{i=1}^n R \Big( \E{ d\big(X_i, \hat X_i \big)} \Big) \nonumber \\
  \stackrel{(d)}{\geq}& nR \Big( \E{ \frac{1}{n} \sum_{i=1}^n d\big(X_i, \hat X_i \big)} \Big) \nonumber \\
  =& nR(D),
\end{align}
where $(b)$ follows from the fact that $(X_i,S_{1,i})$ is independent of $X_{i+1}^n$ given $(T_1, S_{1,i+1}^n, S_2^n)$; this is because $X_{i+1}^n$ is independent of $(T_1,X^i,S_1^i)$ given $(S_{1,i+1}^n,S_{2,i+1}^n)$, $(c)$ follows from the fact that conditioning reduces entropy and $(d)$ follows from the convexity of $R(D)$ and Jensen's inequality.

Using also the convexity of $R'$ and Jensen's inequality, we can conclude that
\begin{align}
  R' \geq& I(V_1;S_1|S_2),\\
  R \geq& I(U;X,S_1|V_1, S_2).
\end{align}
 It is easy to verify that $(T_1,S_{1,i+1}^n, S_2^{i-1}, S_{2,i+1}^n) - S_{1,i} - S_{2,i}$ forms a Markov chain, since $T_1(S_1^n)$ depends on $S_{2,i}$ only through $S_{1,i}$. The structure $T - (T_1,S_{1,i+1}^n, S_2^{i-1}, S_{2,i+1}^n, X_i, S_{1,i}) - S_{2,i}$ also forms a Markov chain since $S_{2,i}$ contains no information about $(S_1^{i-1}, X^{i-1}, X_{i+1}^n)$ given $(T_1,S_{1,i}^n, S_2^{i-1}, S_{2,i+1}^n, X_i)$ and, therefore, contains no information about  $T(X^n,S_1^n,T_1)$.

This concludes the converse, and the proof of Theorem \ref{theorem:SC} Case 1.

\subsection{Proof of Theorem \ref{theorem:SC}, Case 1$_C$}
\label{proof:SC:case1c}
For describing the ESI, $S_1$, with a rate $R'$ we use the standard rate-distortion coding scheme. Then, for the main source, $X$, we use a Weissman-El Gamal \cite{DBLP:journals/tit/WeissmanG06} coding scheme where the DSI, $S_2$, is the causal side information at the decoder, $S_1$ is a part of the source and the rate-limited description of $S_1$ is the side information at both the encoder and decoder.
\begin{figure}[h!]
  \centering
  \small
    \psfrag{A}{\ \ Encoder}
    \psfrag{B}{\ \ Decoder}
    \psfrag{X}{$X^n$}
    \psfrag{x}{$\hat{X}^n$}
    \psfrag{s}{$\hat{X}_i$}
    \psfrag{E}{$S_1^n$}
    \psfrag{e}{$S_{1,i}$}
    \psfrag{R}{$S_2^n$}
    \psfrag{r}{$S_{2,i}$}
    \psfrag{J}{$R'$}
    \psfrag{K}{$R'$}
    \includegraphics[scale = 0.5]{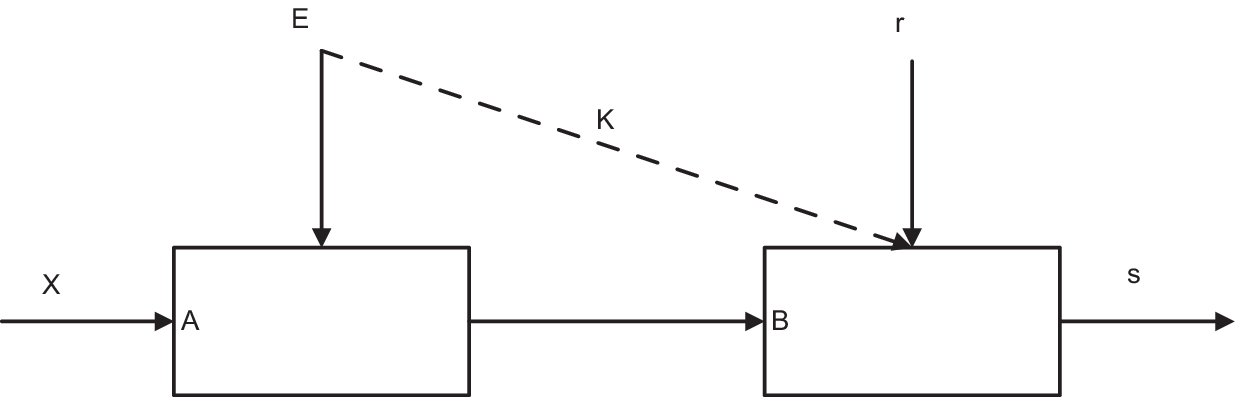}
  \caption{Rate-distortion: Case 1 with causal DSI. $R_{1C}(D) = \min I(U;X,S_1|V_1)$, where the minimization is over all PMFs $p(v_1|s_1) p(u|x,s_1,v_1)p(\hat x|u,s_2,v_1)$ such that $R' \geq I(V_1;S_1)$ and $\mathbb{E}\Big[d(X,\hat X)\Big] \leq D$.}
  \label{fig:RD1c}
\end{figure}

{\bf Achievability:} ({\it Rate-distortion Case 1$_C$}). Given $(X_i, S_{1,i}, S_{2,i}) \sim \iid \ p(x,s_1,s_2)$ where the DSI is known in a causal way ($S_2^i$ in time $i$) and the distortion measure is $D$, fix $p(x, s_1, s_2, v_1, u, \hat x) = p(x,s_1,s_2) p(v_1|s_1) p(u|x,s_1,v_1) p(\hat x|u,s_2,v_1)$ that satisfies $\mathbb E \big[ d(X, \hat X) \big] = D$ and that $\hat x = f(u, s_2, v_1)$.

{\it Codebook generation and random binning}
\begin{enumerate}
\item Generate a codebook $\mathcal{C}_v$ of $2^{n\big(I(V_1;S_1) + 2\epsilon\big)}$ sequences $V_1^n$ independently using $\iid \sim p(v_2)$. Label them $v_1^n(k)$ where $k \in \big\{1,2,\dots, 2^{n(I(V_1;S_1) + 2\epsilon)}\big\}$.
\item For each $v_1^n(k)$ generate a codebook $\mathcal{C}_u(k)$ of $2^{n\big(I(U;X,S_1|V_1)+2\epsilon\big)}$ sequences $U^n$ distributed independently according to $\iid \sim p(u|v_1)$. Label them $u^n(w,k)$, where $w\in \big\{1,2,\dots, 2^{n(I(U;X,S_1|V_1) + 2\epsilon)}\big\}$.
\end{enumerate}
Reveal the codebooks to all encoders and decoders.\\*
{\it Encoding}
\begin{enumerate}
\item {\it State Encoder}: Given the sequence $S_1^n$, search the codebook $\mathcal{C}_v$ and identify an index $k$ such that $\big(v_1^n(k), S_1^n\big) \in \T(V_1,S_1)$. If such a $k$ is found, stop searching and send it. Otherwise, if no such $k$ is found, declare an error.
\item {\it Encoder}: Given $X^n,S_1^n$ and the index $k$, search the codebook $\mathcal{C}_u(k)$ and identify an index $w$ such that $\big(u^n(w,k), X^n,S_1^n\big) \in \T(U,X,S_1|v_1^n(k))$. If such an index $w$ is found, stop searching and send it. Otherwise, declare an error.
\end{enumerate}
{\it Decoding}\\
Given the indices $w,k$ and the sequence $S_1^i$ at time $i$, declare $\hat x_i = f\big(u_i(w,k), S_{2,i}, v_{1,i}(k)\big)$.\\*
\\*
{\it Analysis of the probability of error}\\
Without loss of generality, let us assume that $v_1^n(1)$ corresponds to $S_1^n$ and that $u^n(1,1)$ corresponds to $(X^n,S_1^n,v_1^n(1))$.\\*
Define the following events:
\begin{align*}
  E_1 &:= \Big\{\forall v_1^n(k)\in {\cal C}_v,\ \big(v_1^n(k),S_1^n \big) \notin \T(S_1,V_1) \Big\}\\
  E_2 &:= \Big\{\forall u^n(w,1)\in {\cal C}_u(1),\ \big(X^n,S_1^n,u^n(w,1) \big) \notin \T(X,S_1,U) \Big\}
\end{align*}
The probability of error $P_e^{(n)}$ is upper bounded by $P_e^{n} \leq P(E_1)+P(E_2|E_1^c)$. Assuming that $(S_1^n,S_2^n) \in \T(S_1,S_2)$, we can state that by the standard rate-distortion argument, having more than $2^{n(I(V_1;S_1) + \epsilon)}$ sequences $v^n_1(k)$ in $\mathcal{C}_v$ and a large enough $n$ assures us with probability arbitrarily close to $1$ that we would find an index $k$ such that $\big(v_1^n(k), S_1^n\big) \in \T(V_1,S_1)$. Therefore, $P(E_1) \to 0$ as $n \to \infty$. Now, if $\big(v_1^n(1), S_1^n\big) \in \T(V_1,S_1)$, using the same argument, we can also state that having more than $2^{n(I(U;X,S_1|V_1) +\epsilon)}$ sequences $u^n(w,1)$ in $\mathcal{C}_u(1)$ assures us that $P(E_2|E_1^c) \to 0$ as $n \to \infty$. This concludes the proof of the achievability.

{\bf Converse:} ({\it Rate-distortion Case 1$_C$}). Fix a distortion measure $D$, the rates $R'$, $R \geq R(D) = \min I(U; X, S_1| V_1)$ and a sequence of codes $(2^{nR}, 2^{nR'}, n)$ such that $\mathbb E\Big[ \frac{1}{n} \sum_{i=1}^n d(X_i, \hat X_i) \Big] = D$.   Let $T_1=f_v(S_1^n)$, $T = f(X^n, S_1^n,T_1)$ and define $V_{1,i} = (T_1, S_{1,i+1}^n),\ U_i = T$. Notice that $\hat X_i = \hat X_i(T, T_1, S_2^i)$, and, therefore, $\hat X_i$ is a function of $(U_i,V_{1,i}, S_2^i)$.
\begin{align}
  n R' \geq& H(T_1) \nonumber \\
  \geq& H(V) - H(T_1|S_1^n) \nonumber \\
  =& I(T_1;S_1^n) \nonumber \\
  =& H(S_1^n)- H(S_1^n| T_1) \nonumber \\
  =& \sum_{i=1}^{n} \Big[ H(S_{1,i}| S_{1,i+1}^n) - H(S_{1,i}|T_1, S_{1,i+1}^n ) \Big] \nonumber \\
  \stackrel{(a)}{=}& \sum_{i=1}^{n} \Big[ H(S_{1,i}) - H(S_{1,i}|T_1, S_{1,i+1}^n) \Big] \nonumber \\
  =& \sum_{i=1}^{n} \Big[ H(S_{1,i}) - H(S_{1,i}| V_{1,i}) \Big] \nonumber \\
  =& \sum_{i=1}^{n} I(S_{1,i}; V_{1,i}),
\end{align}
where $(a)$ follows the fact that $S_{1,i}$ is independent of $S_{1,i+1}^n$.
\begin{align}
  nR \geq& H(T) \nonumber \\
  \geq& H(T|T_1) - H(T|T_1, X^n, S_1^n) \nonumber \\
  =& I(T;X^n,S_1^n|T_1) \nonumber \\
  =& H(X^n, S_1^n| T_1) - H(X^n, S_1^n| T, T_1) \nonumber \\
  =& \sum_{i=1}^n \Big[ H(X_i, S_{1,i}| T_1, X_{i+1}^n, S_{1,i+1}^n) - H(X_i, S_{1,i}| T,T_1, X_{i+1}^n, S_{1,i+1}^n) \Big] \nonumber \\
  \stackrel{(b)}{=}& \sum_{i=1}^n \Big[ H(X_i, S_{1,i}| T_1, S_{1,i+1}^n) - H(X_i, S_{1,i}| T,T_1, X_{i+1}^n, S_{1,i+1}^n) \Big] \nonumber  \\
  \stackrel{(c)}{\geq}& \sum_{i=1}^n \Big[ H(X_i, S_{1,i}| T_1, S_{1,i+1}^n) - H(X_i, S_{1,i}|T,T_1, S_{1,i+1}^n) \Big]\nonumber\\
  =& \sum_{i=1}^n  I(X_i, S_{1,i} ; T| T_1, S_{1,i+1}^n) \nonumber \\
  =& \sum_{i=1}^n  I(X_i, S_{1,i} ; U_i|V_{1,i}) \nonumber \\
  =& \sum_{i=1}^n R \Big( \E{ d\big(X_i, \hat X_i \big)} \Big) \nonumber \\
  \stackrel{(d)}{\geq}& nR \Big( \E{ \frac{1}{n} \sum_{i=1}^n d\big(X_i, \hat X_i \big)} \Big) \nonumber \\
  =& nR(D)
\end{align}
where $(b)$ follows from the fact that $(X_i,S_{1,i})$ is independent of $X_{i+1}^n$ given $(T_1, S_{1,i+1}^n)$, $(c)$ follows from the fact that conditioning reduces entropy and $(d)$ follows from the convexity of $R(D)$ and Jensen's inequality.

Using also the convexity of $R'$ and Jensen's inequality, we can conclude that
\begin{align}
  R' \geq& I(V_1;S_1),\\
  R \geq& I(U;X,S_1|V_1).
\end{align}
It is easy to verify that both Markov chains $V_{1,i} - S_{1,i} - (X_i,S_{2,i})$ and $U_i - (X_i, S_{1,i}, V_{1,i}) - S_{2,i}$ hold. This concludes the converse, and the proof of Theorem \ref{theorem:SC} Case 1$_C$.

\subsection{Proof of Theorem \ref{theorem:SC}, Case 2}
\label{proof:SC:case2}
\begin{figure}[h!]
  \centering
  \small
    \psfrag{A}{\ \ Encoder}
    \psfrag{B}{\ \ Decoder}
    \psfrag{X}{$X^n$}
    \psfrag{x}{$\hat{X}^n$}
    \psfrag{s}{$\hat{X}_i$}
    \psfrag{E}{$S_1^n$}
    \psfrag{e}{$S_{1,i}$}
    \psfrag{R}{$S_2^n$}
    \psfrag{r}{$S_{2,i}$}
    \psfrag{J}{$R'$}
    \psfrag{K}{$R'$}
    \includegraphics[scale = 0.5]{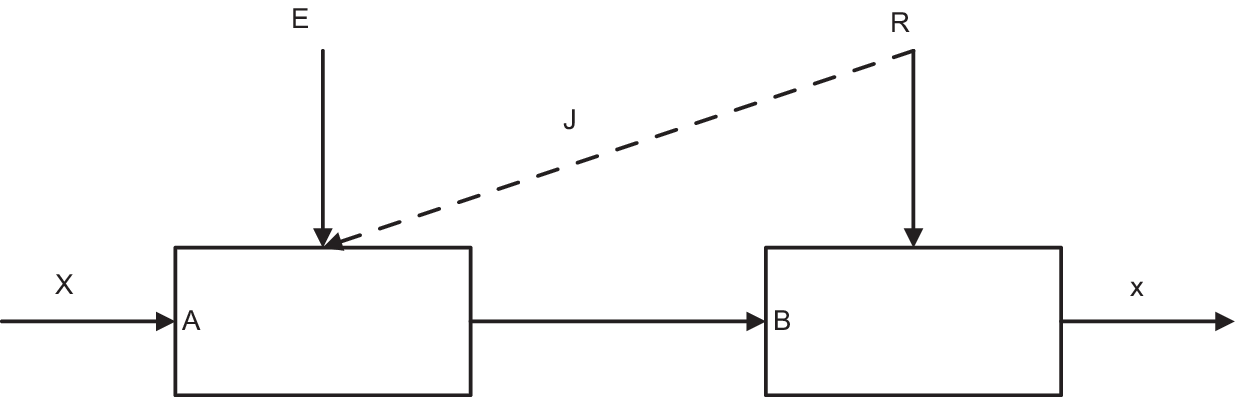}
  \caption{Rate distortion: Case 2. $R_2(D) = \min I(U;X,S_1|V_2) - I(U;S_2|V_2)$, where the minimization is over all PMFs $p(v_2|s_2)p(u|x,s_1,v_2) p(\hat x|u,s_2,v_2)$ such that $R' \geq I(V_2;S_2) - I(V_2;X,S_1)$ and $\mathbb{E}\Big[d(X,\hat X)\Big] \leq D$.}
  \label{fig:RD2}
\end{figure}
This problem is a special case of \cite{DBLP:journals/tit/Kaspi85} for $K=1$, and hence, the proof is omitted.

\section{Proof of Lemma 1}
\label{sec:proof-lemma-1}
We provide here a partial proof of Lemma 1. In the first part we prove the concavity of $C_2^{lb}(R')$ in $R'$ for Case 2, the second part contains the proof that it is enough to take $X$ to be a deterministic function of $(S_1, V_1, U)$ in order to achieve the capacity $C_1(R')$ for Case 1 and in the third part we prove the cardinality bound for Case 1. The proofs of these three parts for the rest of the cases can be derived using the same techniques and therefore are omitted. The proof of Lemma 2 can also be readily concluded using the techniques we use in this appendix and is omitted as well.

{\it Part 1:} We prove here that for Case 2 of the channel capacity problems, the lower bound on the capacity, $C_2^{lb}(R')$, is a concave function of the state information rate, $R'$. Recall that the expression for $C_2^{lb}$  is $C^{lb}_2(R') = \max I(U;Y,S_2|V_2) - I(U;S_1|V_2)$ where the maximization is over all probabilities $p(s_1,s_2) p(v_2|s_2) p(u|s_1,v_2) p(x|u,s_1,v_2) p(y|x,s_1,s_2)$ such that $R' \geq I(V_2;S_2|S_1)$. This means that we want to prove that for any two rates, $R'^{(1)}$ and $R'^{(2)}$, and for any $0 \leq \alpha \leq 1$ and $\bar{\alpha} = 1-\alpha$ the capacity maintains $C^{lb}_2\big(\alpha R'^{(1)} + \bar{\alpha} R'^{(2)} \big) \geq \alpha C^{lb}_2(R'^{(1)}) + \bar{\alpha} C^{lb}_2(R'^{(2)})$.
Let $(U^{(1)}, V_2^{(1)}, X^{(1)}, Y^{(1)})$ and $(U^{(2)}, V_2^{(2)}, X^{(2)}, Y^{(2)})$ be the random variables that meet the conditions on $R'^{(1)}$ and on $R'^{(2)}$ and also achieve $C^{lb}_2(R'^{(1)})$ and $C^{lb}_2(R'^{(2)})$, respectively. Let us introduce the auxiliary random variable $Q \in \{1,2\}$, independent of $S_1,S_2,V_2,U,X$ and $Y$, and distributed according to $\Pr\{Q=1\} = \alpha$ and $\Pr\{Q=2\} = \bar{\alpha}$. Then, consider
\begin{align}
  \label{eq:14}
  \alpha R'^{(1)}+\bar{\alpha} R'^{(2)} & = \alpha\big[I(V_2^{(1)}; S_2)-I(V_2^{(1)};S_1)\big] + \bar{\alpha} \big[I(V_2^{(2)}; S_2)-I(V_2^{(2)};S_1)\big] \nonumber\\
  & \stackrel{(a)}{=} \alpha\big[I(V_2^{(1)}; S_2| Q=1)-I(V_2^{(1)};S_1| Q=1)\big] + \bar{\alpha} \big(I(V_2^{(2)}; S_2| Q=2)-I(V_2^{(2)};S_1| Q=2)\big] \nonumber\\
  & \stackrel{(b)}{=} I(V_2^{(Q)}; S_2|Q)-I(V_2^{(Q)};S_1|Q) \nonumber\\
  & \stackrel{(c)}= I(V_2^{(Q)}, Q; S_2) - I(V_2^{(Q)},Q; S_1),
\end{align}
and
\begin{align}
  \label{eq:15}
  \alpha C^{lb}_2(R'^{(1)}) + \bar{\alpha} C^{lb}_2(R'^{(2)}) =& \alpha\big[I(U^{(1)}; Y^{(1)},S_2|V_2^{(1)})-I(U^{(1)};S_1|V_2^{(1)})\big] \nonumber\\
  &\quad + \bar{\alpha} \big[I(U^{(2)}; Y^{(2)},S_2|V_2^{(2)})-I(U^{(2)};S_1|V_2^{(2)})\big] \nonumber\\
  \stackrel{(d)}{=}& I(U^{(Q)}; Y^{(Q)},S_2|V_2^{(Q)},Q)-I(U^{(Q)};S_1|V_2^{(Q)},Q),
\end{align}
where $(a),(b),(c)$ and $(d)$ all follow from the fact that $Q$ is independent of $(S_1,S_2,V_2,U,X,Y)$ and from $Q$'s probability distribution. Now, let $V_2' = (V_2^{(Q)},Q), U' = U^{(Q)}, Y' = Y^{(Q)}$ and $X' = X^{(Q)}$. Then, following from the equalities above, for any two rates $R'^{(1)}$ and $R'^{(2)}$ and for any $0 \leq \alpha \leq 1$, there exists a set of random variables $(U', V_2', X', Y')$ that maintains
\begin{align}
  \label{eq:24}
  \alpha R'^{(1)} + \bar{\alpha} R'^{(2)} = I(V_2';S_2) - I(V_2';S_1),
\end{align}
and
\begin{align}
  \label{eq:16}
  C^{lb}_2\big(\alpha R'^{(1)} + \bar{\alpha} R'^{(2)} \big)  \geq& I(U';Y',S_2|V_2') - I(U';S_1|V_2') \nonumber \\
  =& \alpha C^{lb}_2(R'^{(1)}) + \bar{\alpha} C^{lb}_2(R'^{(2)}).
\end{align}
This completes the proof of the concavity of $C^{lb}_2(R')$ in $R'$. \qed

{\it Part 2:} We prove here that it is enough to take $X$ to be a deterministic function of $(U,S_1,V_1)$ in order to maximize $I(U;Y,S_2,V_1)-I(U;S_1,V_1)$.  Fix $p(u,v_1|s_1)$. Note that
\begin{align}
  p(y,s_2|u,v_1) &= \sum_{x,s_1} p(s_1|,u,v_1) p(s_2|s_1,v_1,u) p(x|s_1,s_2,v_1,u) p(y|x,s_1,s_2,v_1,u) \nonumber\\
  &= \sum_{x,s_1} p(s_1|u,v_1) p(s_2|s_1) p(x|s_1,v_1,u) p(y|x,s_1,s_2)
\end{align}
is linear in $p(x|u,v_1,s_1)$. This follows from the fact that  fixing $p(u,v_1|s_1)$ also defines $p(s_1|u,v_1)$ and from the following Markov chains  $S_2 - S_1 - (V_1,U)$, $X - (S_1,V_1,U) - S_2$ and $Y - (X,S_1,S_2) - (V_1,U)$. Hence, since $I(U;Y,S_2|V_1)$ is convex in $p(y,s_2|v_1)$ it is also convex in $p(x|u,v_1,s_1)$. Noting also that $I(U;S_1|V_1)$ is constant given a fixed $p(u,v_1|s_1)$, we can conclude that $I(U;Y,S_2|V_1)-I(U;S_1|V_1)$ is convex in $p(x|u,v_1,s_1)$ and, hence, it gets its maximum at the boundaries of $p(x|u,v_1,s_1)$, i.e., when the last is equal $0$ or $1$. This implies that $X$ can be expressed as a deterministic function of $(U,V_1,S_1)$. \qed

{\it Part 3:} We prove now the cardinality bound for Theorem~\ref{theorem:CC}. First, let us recall the support lemma~\cite[p.310]{nla:cat:vn1486008}. Let ${\cal P}({\cal Z})$ be the set of PMFs on the set ${\cal Z}$, and let the set ${\cal P}({\cal Z} | {\cal Q}) \subseteq {\cal P}({\cal Z})$ be a collection of PMFs $p(z|q)$ on ${\cal Z}$ indexed by $q \in \cal Q$. Let $g_j,\ j=1, \dots, k$, be continuous functions on ${\cal P}({\cal Z}|{\cal Q})$. Then, for any $Q \sim F_Q(q)$, there exists  a finite random variable $Q' \sim p(q')$ taking at most $k$ values in $\cal Q$ such that
\begin{eqnarray}
    \mathbb{E} \Big[g_j(p_{Z|Q}(z|Q))\Big] &=& \int_{\cal Q} g_j(p_{Z|Q}(z|q)){\rm d}F(q) \nonumber\\
    &=& \sum_{q'} g_j(p_{Z|q}(z|q'))p(q').
\end{eqnarray}
We first reduce the alphabet size of $V_1$ while considering the alphabet size of $U$ to be constant and then we calculate the cardinality of $U$. Consider the following continuous functions of $p(x,s_1,s_2,u|v_1)$
\begin{align}
  g_j = \left\{
    \begin{array}[c]{l l}
      P_{XS_1S_2|V}(j|v_1), & j \in \big\{1,2,\dots, |\mathcal{X}||\mathcal{S}_1||\mathcal{S}_2|-1\big\},\\
      I(V_1;S_1) - I(V_1;Y,S_2) & j = |\mathcal{X}||\mathcal{S}_1||\mathcal{S}_2|,\\
      I(U;Y,S_2|V_1=v_1) - I(U;S_1|V_1=v_1) & j = |\mathcal{X}||\mathcal{S}_1||\mathcal{S}_2| +1.
\end{array} \right.
\end{align}
Then, by the support lemma, there exists a random variable $V_1'$ with $|\mathcal{V}_1'| \leq |\mathcal{X}||\mathcal{S}_1||\mathcal{S}_2|+1$ such that $p(x,s_1,s_2),\ I(V_1;S_1)-I(V_1;Y,S_2)$ and $I(U;Y,S_2|V_1)-I(U;S_1|V_1)$ are preserved. Notice that the probability of $U$ might have changed due to changing $V_1$; we denote the corresponding $U$ as $U'$. Next, for $v_1' \in \mathcal{V}_1'$ and the corresponding probability $p(v_1')$ that we found in the previous step, we consider $|\mathcal{X}||\mathcal{S}_1||\mathcal{S}_2||\mathcal{V}'_1|$ continuous functions of $p(x,s_1,s_2,v_1'|u')$
\begin{align}
  f_j = \left\{
    \begin{array}[c]{l l}
      P_{XS_1S_2V_1'|U'}(j|u')  & j =  \big\{1,2,\dots,|\mathcal{X}||\mathcal{S}_1||\mathcal{S}_2||\mathcal{V}_1'|-1\big\},\\
      I(U';Y,S_2|V_1')-I(U';S_1|V_1') & j = |\mathcal{X}||\mathcal{S}_1||\mathcal{S}_2||\mathcal{V}_1'|.
    \end{array} \right.
\end{align}
Thus, there exists a random variable $U''$ with $|\mathcal{U}''| \leq |\mathcal{X}||\mathcal{S}_1||\mathcal{S}_2||\mathcal{V}_1'|$ such that the mutual information expressions above and all the desired Markov conditions are preserved. Notice that the expression $I(V_1;S_1)-I(V_1;Y,S_2)$ is being preserved since $p(x,s_1,s_2,v'_1)$ is being preserved.

To conclude, we can bound the cardinality of the auxiliary random variables of Theorem~\ref{theorem:CC} Case 1 by $|\mathcal{V}_1| \leq |\mathcal{X}||\mathcal{S}_1||\mathcal{S}_2|+1$ and $|\mathcal{U}| \leq |\mathcal{X}||\mathcal{S}_1||\mathcal{S}_2||\mathcal{V}_1| \leq |\mathcal{X}||\mathcal{S}_1||\mathcal{S}_2|\big(|\mathcal{X}||\mathcal{S}_1||\mathcal{S}_2|+1\big)$ without limiting the generality of the solution. \qed

\section{Proof of Theorem~\ref{theoremGP}}
\label{sec:proof-GP}

\begin{proof}
  First, let us formulate the Lagrangian for the primal optimization problem defined in~(\ref{eq:21}):
  \begin{align}
    \label{eq:23}
    L\big(\bq, \bmu, \gamma, \blambda\big) = & \sum_{x,s,t} p(x,s) q(t|x) \log \frac{q(t|x)}{Q(t|s)} \nonumber\\
    & + \sum_x \mu_x \Big(\sum_t q(t|x) - 1\Big)\nonumber\\
    & + \gamma\Big( \sum_{x,s,t} p(x,s) q(t|x) d\big(x,t(s)\big) - D\Big) \nonumber\\
    & - \sum_{x,t} \lambda_{x,t} q(t|x),
  \end{align}
with Lagrange multipliers $\bmu, \gamma \geq 0$ and $\blambda \succeq 0$. Recall that $Q(t|s)$ is a marginal distribution that corresponds with $q(t|x)$. i.e.,
\begin{align}
  \label{eq:40}
  Q(t|s) &= \frac{\sum_x p(x,s) q(t|x) }{\sum_s p(x,s) }.
\end{align}

In addition, recall the definition of the Lagrange dual function,
\begin{align}
  g\big(\bmu, \gamma, \blambda \big) = \inf_{\bq} L\big(\bq, \bmu, \gamma, \blambda \big).
\end{align}

In the following proof, we use $\bq_{\bmu,\gamma,\blambda}^*$ to denote the optimal minimizer of the Lagrangian, $L\big(\bq, \bmu, \gamma, \blambda \big)$, for any fixed $\bmu, \gamma,$ and $\blambda$. We also use the notation $g\big(\bmu, \gamma, \blambda \big| \bq_{\mu,\gamma,\lambda}^* \big)$ to denote the Lagrange dual function with $\bq_{\bmu, \gamma, \blambda}^*$ as a constant parameter.

The outline of the proof is as follows:
we first find the PMF $\bq_{\bmu,\gamma,\blambda}^*$, which is the minimizer of the Lagrangian, $L\big(\bq, \bmu, \gamma, \blambda\big)$. We then formulate the Lagrange dual function, $g\big(\bmu, \gamma, \blambda \big| \bq_{\bmu,\gamma,\blambda}^* \big)$, and the Lagrange dual problem, which is to maximize $g$ over $\bmu, \gamma \geq 0$ and $\blambda \succeq 0$. Next, we argue that we can maximize $g$ over  $\bmu, \gamma \geq 0, \blambda \succeq 0$ and, in addition, over any $\bq$ that nullifies the derivative of the Lagrangian (i.e., maintains equation (\ref{eq:25})) without increasing the solution of the Lagrange dual problem. We then note that it is possible to write the Lagrange dual problem with the variable $p(x|s,t)$ instead of $q(t|x)$, where $p(x|s,t)$ is a marginal distribution associated with $q(t|x)$. i.e., $p(x|s,t) = \frac{ p(x,s) q(t|x)}{ \sum_{s,t} p(x,s) q(t|x)}$ is constrained to maintains the Markov chain $T - X - S$. Our next key step is to prove that we can omit the Markov chain constraint without increasing the maximal value of the Lagrange dual problem.
We then conclude our proof by formulating the Lagrange dual problem that we obtained in a geometric programming convex form.

In order to formulate $g\big(\bmu, \gamma, \blambda \big)$, we first find the PMF $\bq_{\mu,\gamma,\lambda}^*$ that minimizes the Lagrangian, $L\big(\bq, \bmu, \gamma, \blambda\big)$, which is a convex function of $\bq$. First, notice that
\begin{align}
  \label{eq:41}
  \frac{\partial}{\partial q(t|x)} &\sum_{x',s',t'} p(x',s') q(t'|x') \log \frac{q(t'|x')}{Q(t'|s')} \nonumber\\
 & \stackrel{(a)}{=} \sum_{s'} p(x,s') \log \frac{q(t|x)}{Q(t|s')} + \sum_{s'} p(x,s') - \sum_{x', s'} p(x',s')q(t|x') \frac{p(x,s')}{p(s')}\frac{1}{Q(t|s')} \nonumber\\
 & = \sum_{s'} p(x,s') \log \frac{q(t|x)}{Q(t|s')} + p(x) - \sum_{s'} p(x,s') \sum_{x'} p(x',s') q(t|x')\frac{1}{p(s')} \frac{1}{Q(t|s')} \nonumber\\
 & \stackrel{(b)}{=} \sum_{s'} p(x,s') \log \frac{q(t|x)}{Q(t|s')} + p(x) - \sum_{s'} p(x,s') \nonumber\\
 & = \sum_{s'} p(x,s') \log \frac{q(t|x)}{Q(t|s')},
\end{align}
where $(a)$ follows from the fact that
\begin{align}
  \label{eq:42}
   \frac{\partial Q(t'|s')}{\partial q(t|x)} &= \frac{\partial}{\partial q(t|x)} \frac{\sum_{x''} p(x'',s') q(t'|x'')}{p(s')} \nonumber\\
   & = \left\{
     \begin{array}[l]{l l}
       \frac{p(x,s')}{p(s')}, & t' = t\\
       0, & t' \neq t
     \end{array} \right.,
\end{align}
and $(b)$ follows from the fact that $p(x,s')$ is independent of $x'$ and the fact that $\sum_{x'} p(x',s') q(t|x')\frac{1}{p(s')} = Q(t|s')$.

Next, we formulate the derivative of the Lagrangian with respect to $q(t|x)$ and we constrain it to be equal to 0.
\begin{align}
  \label{eq:25}
  \frac{\partial L}{\partial q(t|x)} = \sum_s p(x,s) \log \frac{q(t|x)}{Q(t|s)} + \mu_x + \gamma \sum_s p(x,s) d\big(x, t(s)\big) - \lambda_{x,t} = 0.
\end{align}
Using elementary mathematical manipulations we get
\begin{align}
  \label{eq:43}
  \log q(t|x) = \sum_s p(s|x) \Big[ \log Q(t|s) - \frac{\mu_x}{p(x)} - \gamma d\big(x, t(x)\big) - \frac{\lambda_{x,t}}{p(x)} \Big].
\end{align}
Hence,
\begin{align}
  \label{eq:27}
  q_{\bmu,\gamma,\blambda}^*(t|x) = \prod_s \bigg[Q_{\bmu,\gamma,\blambda}^*(t|s) \exp \Big\{-\frac{\mu_x}{p(x)} - \gamma d\big( x,t(s) \big) + \frac{\lambda_{x,t}}{p(x)} \Big\} \bigg]^{p(s|x)}
\end{align}
is an optimal minimizer of the Lagrangian. We get the Lagrange dual function by substituting $\bq$ in the Lagrangian with $\bq_{\bmu,\gamma,\blambda}^*$ that we got in (\ref{eq:27}) and by using constraint (\ref{eq:25}).
\begin{align}
  \label{eq:45}
  g\big(\bmu, \gamma, \blambda \big| \bq_{\bmu,\gamma,\blambda}^*\big) &= \inf_{\bq} L\big(\bq, \bmu, \gamma, \blambda \big) \nonumber\\
  &=L\big(\bq_{\bmu,\gamma,\blambda}^*, \bmu, \gamma, \blambda \big) \nonumber\\
  &= \left\{
    \begin{array}[l]{l l}
-\sum_x \mu_x - \gamma D, & \sum_s p(x,s) \log \frac{q_{\bmu,\gamma,\blambda}^*(t|x)}{Q_{\bmu,\gamma,\blambda}^*(t|s)} + \mu_x + \gamma \sum_s p(x,s) d\big(x, t(s)\big) - \lambda_{x,t} = 0\\
& \qquad \forall x,t\\
-\infty, & {\rm otherwhise}
\end{array} \right.
\end{align}
We get the Lagrange dual problem by making the constraints explicit:
\begin{align}
  \label{eq:46}
  \begin{array}[l]{l l}
    {\rm maximize} & -\sum_x \mu_x - \gamma D\\
    \mbox{subject to} & \sum_s p(x,s) \log \frac{q_{\bmu,\gamma,\blambda}^*(t|x)}{Q_{\bmu,\gamma,\blambda}^*(t|s)} + \mu_x + \gamma \sum_s p(x,s) d\big(x, t(s)\big) - \lambda_{x,t} = 0,\ \forall x,t,\\
                                 & \gamma \geq 0,\\
                                 & \lambda_{x,t} \geq 0,\ \forall x,t,
  \end{array}
\end{align}
where the maximization variables are $\bmu, \gamma$ and $\blambda$ and the constant parameters are the PMFs $\bq_{\bmu,\gamma,\blambda}^*$ and $p(x,s)$, the distortion measure $d\big(x,t(s)\big)$ and the distortion constraint $D$. Notice that since the primal problem, (\ref{eq:21}), is a convex problem with an optimal value of $R(D)$, then the solution of (\ref{eq:46}) is a lower bound on $R(D)$ \cite[Chapter 5.2.2]{Boyd:2004:CO:993483}, and, if Slater's condition holds, then strong duality holds and the optimal value of (\ref{eq:46}) is $R(D)$.

Now, notice that any $\bq$ that maintains the first inequality constraint in~(\ref{eq:46}) nullifies the derivative of the Lagrangian and, hence, results in the same value when placed in the Lagrangian; this value is exactly the Lagrange dual function. Therefore, since $g$ gets the same value for any $\bq$ that maintains the constraint (\ref{eq:25}), we can maximize $g$ over all PMFs $\bq$ that maintain constraint (\ref{eq:25}) without changing $g$'s value. Consequently, the Lagrange dual problem in (\ref{eq:46}) becomes:
\begin{align}
  \label{eq:47}
  \begin{array}[l]{l l}
    {\rm maximize} & -\sum_x \mu_x - \gamma D\\
    \mbox{subject to} & \sum_s p(x,s) \log \frac{q(t|x)}{Q(t|s)} + \mu_x + \gamma \sum_s p(x,s) d\big(x, t(s)\big) - \lambda_{x,t} = 0,\ \forall x,t,\\
    & \gamma \geq 0,\\
    & \lambda_{x,t} \geq 0,\ \forall x,t,\\
    & \sum_t q(t|x) = 1,\ \forall x,
  \end{array}
\end{align}
where the maximization variables are $\bmu, \gamma, \blambda$ and $\bq$ and the constant parameters are $p(x,s)$, $d\big(x,t(s)\big)$ and $D$.

Next, combining (\ref{eq:27}) and the fact that $Q(t|s) \geq 0$, we get that we can replace the first constraint in (\ref{eq:47}) with
\begin{align}
  \label{eq:26}
  q(t|x)  = \prod_s \bigg[Q(t|s) \exp \Big\{-\frac{\mu_x}{p(x)} - \gamma d\big( x,t(s) \big) + \frac{\lambda_{x,t}}{p(x)}\Big\} \bigg]^{p(s|x)},\ \forall x,t.
\end{align}

Since $q(t|x)$ is independent of $s$, we can state that
\begin{align}
  \label{eq:128}
  1  = \prod_s \bigg[\frac{Q(t|s)}{q(t|x)} \exp \Big\{-\frac{\mu_x}{p(x)} - \gamma d\big( x,t(s) \big) + \frac{\lambda_{x,t}}{p(x)} \Big\} \bigg]^{p(s|x)}.
\end{align}
Let us denote $\alpha_x = -\frac{\mu_x}{p(x)}$ and note that $\frac{Q(t|s)}{q(t|x)} = \frac{p(x|s)Q(t|s)}{p(t,x|s)} = \frac{p(x|s)}{p(x|s,t)}$, where $p(x|s,t)$ maintains the Markov chain $T - X - S$. Therefore, equation (\ref{eq:128}) becomes
\begin{align}
  \label{eq:28}
  1  = \prod_s \bigg[p(x|s) \exp \Big\{\alpha_x - \gamma d\big( x,t(s) \big) + \frac{\lambda_{x,t}}{p(x)} - \log p(x|s,t)\Big\} \bigg]^{p(s|x)},
\end{align}
for all $x,t$, and the Lagrange dual problem can be reformulated as
\begin{align}
  \label{eq:39}
  \begin{array}[l]{l l}
    {\rm maximize} & \sum_x \alpha_x p(x) - \gamma D\\
    \mbox{subject to} & 1  = \prod_s \bigg[p(x|s) \exp \Big\{\alpha_x - \gamma d\big( x,t(s) \big) + \frac{\lambda_{x,t}}{p(x)} - \log p(x|s,t)\Big\} \bigg]^{p(s|x)},\ \forall x,t\\
                                 & \gamma \geq 0,\\
                                 & \sum_t p(x|s,t) = 1,\ \forall x,\\
                                 & p(x|s,t) \mbox{ maintain the Markov chain } T-X-S,
  \end{array}
\end{align}
where the variables of the maximization are $\balpha, \gamma, \blambda$ and $\bp \in \mathbb{R}^{|\mathcal{X}| |\mathcal{S}| |\mathcal{T}|}$, which is the set of all $p(x|s,t)$ for all $x \in \mathcal{X}, s \in \mathcal{S}$ and $t \in \mathcal{T}$, and the constant variables are $p(x,s)$, $d\big(x,t(s)\big)$ and $D$. Notice that (\ref{eq:39}) is not a convex problem anymore, since the constraint functions are not convex. We deal with this problem in the following steps by using geometric programming principles.

Next, we want to prove that it is possible to maximize (\ref{eq:39}) over any PMF, $\bp$. i.e., we want to prove that dropping the last constraint in (\ref{eq:39}) does not change the validity of the solution.

First, since (\ref{eq:39}) is an equivalent Lagrange dual problem, then, according to~\cite[Chapter 5.2.2]{Boyd:2004:CO:993483}, we can state that for any choice of $\balpha, \gamma$ and $\blambda$ it yields a lower bound on $R(D)$. Furthermore, according to~\cite[Chapter 5.2.3]{Boyd:2004:CO:993483}, if Slater's condition holds, then the solution of (\ref{eq:39}) coincides with $R(D)$, which is the optimal solution of the primal problem. Now, dropping the constraint that the Markov chain $T - X - S$ must hold, necessarily allows the optimal solution of (\ref{eq:39}) to be greater than or equal to the solution where $T - X - S$ holds. We are left to prove that maximizing over any PMF, $\bp$, cannot exceed $R(D)$. Let us place $p(x|s,t) = \frac{p(t|x,s)p(x|s)}{p(t|s)}$ in (\ref{eq:28}) and look at the following inequalities:
\begin{align}
  1  =& \prod_s \bigg[p(x|s) \exp \Big\{\alpha_x - \gamma d\big( x,t(s) \big) +\frac{\lambda_{x,t}}{p(x)} - \log \frac{p(t|x,s)p(x|s)}{p(t|s)}\Big\} \bigg]^{p(s|x)} \nonumber\\
  =& \prod_s \bigg[\exp \Big\{\log p(x|s) + \alpha_x - \gamma d\big( x,t(s) \big) +\frac{\lambda_{x,t}}{p(x)} - \log \frac{p(t|x,s)p(x|s)}{p(t|s)}\Big\} \bigg]^{p(s|x)} \nonumber\\
  =& \exp\Big\{ \alpha_x - \gamma \sum_s p(s|x) d\big(x, t(s)\big) +\frac{\lambda_{x,t}}{p(x)} - \sum_s p(s|x) \log p(t|x,s) + \sum_s p(s|x) \log p(t|s) \Big\} \nonumber\\
  \stackrel{(a)}{\geq}&  \exp\Big\{ \alpha_x - \gamma \sum_s p(s|x) d\big(x, t(s)\big) +\frac{\lambda_{x,t}}{p(x)} - \log \Big(\sum_s p(s|x)p(t|x,s)\Big) + \sum_s p(s|x) \log p(t|s) \Big\} \nonumber\\
  =& \exp\Big\{ \alpha_x - \gamma \sum_s p(s|x) d\big(x, t(s)\big) +\frac{\lambda_{x,t}}{p(x)} - \log p(t|x) + \sum_s p(s|x) \log p(t|s) \Big\} \nonumber\\
  \stackrel{(b)}{=}& \exp\Big\{ \alpha_x - \gamma \sum_s p(s|x) d\big(x, t(s)\big) + \frac{\lambda_{x,t}}{p(x)} - \sum_s p(s|x) \log \frac{p(t|x)p(x|s)}{p(t|s)} + \sum_s p(s|x) \log p(x|s) \Big\} \nonumber\\
 =& \prod_s \bigg[p(x|s) \exp \Big\{\alpha_x - \gamma d\big( x,t(s) \big) +\frac{\lambda_{x,t}}{p(x)} - \log \frac{p(t|x)p(x|s)}{p(t|s)}\Big\} \bigg]^{p(s|x)},
\end{align}
where $(a)$ follows from Jensen's inequality and $(b)$ follows from the fact that $p(t|x)$ is independent of $s$. Notice that by reducing the value of $\sum_s p(s|x) \log p(t|x,s)$, we allow $\alpha_x - \gamma \sum_s p(s|x) d\big(x,t(s)\big)$ to be greater and, hence, we improve our maximum. Therefore, for any $p(x|s,t) = \frac{p(t|s)}{p(t|x,s)p(x|s)}$, we can take $p'(x|s,t) = \frac{p(t|s)}{p(x|s)\sum_{s'} p(s'|x)p(t|x,s')}$, which satisfies the Markov chain $T -X -S$, and that the maximum over $p(t|x) = \sum_s p(s|x)p(t|x,s)$ would be equal to or greater than the maximum over $p(x|s,t)$. This, and the fact that maximizing over $p(t|x)$ cannot exceed $R(D)$ and that $R(D)$ can be achieved by using $p^*(x|s,t)$ that corresponds to $q^*(t|x)$, prove that, indeed, we can maximize over $p(x|s,t)$ without changing the result of the maximization. Therefore, our dual problem now becomes
\begin{align}
  \label{eq:31}
  \begin{array}[l]{l l}
    \mbox{maximize} &  \sum_x \alpha_x p(x) - \gamma D\\
    \mbox{subject to} & \prod_s \bigg[p(x|s) \exp \Big\{\alpha_x - \gamma d\big( x,t(s) \big) + \frac{\lambda_{x,t}}{p(x)} - \log p(x|s,t)\Big\} \bigg]^{p(s|x)} = 1 \quad \forall x,t,\\
    & \sum_x p(x|s,t) = 1 \quad \forall s,t\\
    & \gamma \geq 0.
  \end{array}
\end{align}

In order to make the problem convex, we need to convert the equality constraints that are not affine into inequality constraints. Let us go back to (\ref{eq:28}); since $\lambda_{x,t} \geq 0$ for all $x$ and $t$ and since $p(x,s) \geq 0$, the constraint (\ref{eq:28}) can be replaced by
\begin{align}
  \label{eq:57}
  1  \geq \prod_s \bigg[p(x|s) \exp \Big\{\alpha_x - \gamma d\big( x,t(s) \big) - \log p(x|s,t)\Big\} \bigg]^{p(s|x)}
\end{align}
without changing the solution of~(\ref{eq:39}). Next, notice that there is a tradeoff between $-\log p(x|s,t)$ and $\alpha_x - \gamma d\big( x,t(s) \big)$. Therefore, we expect $-\log p(x|s,t)$ to be as small as possible to allow $\alpha_x - \gamma d\big( x,t(s) \big)$ to be as large as possible. Hence, we can replace the constraint
\begin{align}
  \label{eq:58}
  \sum_x p(x|s,t) &= 1 \quad \forall s,t,
\end{align}
which is equivalent to
\begin{align}
  \label{eq:59}
  \sum_x \exp\big\{\log p(x|s,t)\big\} &= 1 \quad \forall s,t,
\end{align}
with the weaker constraint
\begin{align}
  \label{eq:60}
  \sum_x \exp\big\{\log p(x|s,t)\big\} &\leq 1 \quad \forall s,t,
\end{align}
without changing the result of the maximization. We denote $y_{x,t,s} = \log p(x|s,t)$ and rewrite the dual problem as
\begin{align}
  \label{eq:32}
  \begin{array}[l]{l l}
    \mbox{maximize} & \sum_x \alpha_x p(x) - \gamma D\\
    \mbox{subject to} & \prod_s \bigg[p(x|s) \exp \Big\{\alpha_x - \gamma d\big( x,t(s) \big) - y_{x,s,t}\Big\} \bigg]^{p(s|x)} \leq 1\quad \forall x,t,\\
    & \sum_x \exp\big\{y_{x,s,t}\big\} \leq 1 \quad \forall s,t,\\
    & \gamma \geq 0,
  \end{array}
\end{align}
where the variables of the maximization are ${\bm \alpha}, \gamma$ and ${\bf y}$ and the constant parameters are the PMF, $p(x,s)$, the distortion measure, $d\big(x, t(s)\big)$, and the distortion constraint, $D$.

Lastly, we present the dual problem in a geometric programming convex form by taking $\log(\cdot)$ on the first two inequality constraints:
\begin{align}
  \label{eq:49}
  \begin{array}[l]{l l}
    \mbox{maximize} &  \sum_x \alpha_x p(x) - \gamma D\\
    \mbox{subject to} &  \alpha_x + \sum_s p(s|x) \bigg[ \log p(x|s) - \gamma d\big( x,t(s) \big) - y_{x,s,t}\bigg] \leq 0\quad \forall x,t,\\
    & \log \left( \sum_x \exp\big\{y_{x,s,t}\big\} \right) \leq 0 \quad \forall s,t,\\
    & \gamma \geq 0,
  \end{array}
\end{align}
where the variables of the maximization are $\balpha, \gamma$ and ${\bf y}$ and the constant parameters are $p(x,s), d\big(x,t(s)\big)$ and $D$.
\end{proof}

\section{Proofs for Section~\ref{sec:algorithm}}
\label{sec:proofs-algorithm}

\subsection{Proof of Lemma~\ref{concavityOfJ}}
\label{sec:proof-concavityOfJ}
\begin{proof}
  For $0 \leq \alpha \leq 1$ and $\bar \alpha = 1-\alpha$
  \begin{align}
    \label{eq:108}
    J_w(\alpha q_1 + \bar \alpha q_2,& \alpha Q_1 + \bar \alpha Q_2) = \sum_{s_1,s_2,v_2,t,y} p(s_1,s_2) w(v_2|s_2) p(y|t,s_1,s_2,v_2)  \Big(\alpha q_1 + \bar \alpha q_2\Big)  \log \frac{\alpha Q_1+\bar \alpha Q_2}{\alpha q_1+ \bar \alpha q_2} \nonumber\\
    &\stackrel{(a)}{\leq} \sum_{s_1,s_2,v_2,t,y} p(s_1,s_2) w(v_2|s_2)p(y|t,s_1,s_2,v_2)  \Big( \alpha q_1  \log \frac{Q_1}{q_1} + \bar \alpha q_2 \log \frac{Q_2}{q_2} \Big) \nonumber\\
    &= \alpha J_w(q_1,Q_1) + \bar \alpha J_w(q_2,Q_2),
  \end{align}
  where $(a)$ follows from the log-sum inequality:
  \begin{align}
    \label{eq:109}
    \sum_i a_i \log \frac{a_i}{b_i} &\geq a \log \frac{a}{b},
  \end{align}
  for $\sum_i a_i = a$ and  $\sum_i b_i = b$.
\end{proof}

\subsection{Proof of Lemma~\ref{maximizing_q}}
\label{sec:proof-maximizing_q}

 \begin{proof}
   Let us calculate $q^*$ using the KKT conditions. We want to maximize $J_w(q^*,Q)$ over $q^*$, where for all $t,s_1$ and $v_2$, $0 \leq q^*(t|s_1,v_2) \leq 1$ and $\sum_{t'} q^*(t'|s_1,v_2) = 1$.

   For fixed $s_1$ and $v_2$,
   \begin{align}
     \label{eq:101}
     0 &= \frac{\partial}{\partial q^*} \Big(  J_w(q^*,Q) + \big( 1-\sum_t q^*(t|s_1,v_2) \big) \nu_{s_1,v_2} \Big)\\
     &= \sum_{s_2,y} p(s_1,s_2) w(v_2|s_2) p(y|t,s_1,s_2,v_2) \Big( \log \frac{Q(t|y,s_2,v_2)}{q^*(t|s_1,v_2)} - 1\Big) - \nu_{s_1,v_2} \label{eq:nu},
   \end{align}
   divide by $p(s_1,v_2)$,
   \begin{align}
   0 &= -\log q^*(t|s_1,v_2) + \frac{\sum_{s_2,y} p(s_1,s_2) w(v_2|s_2) p(y|t,s_1,s_2,v_2)}{p(s_1,v_2)} \log Q(t|y,s_2,v_2) - 1+\frac{\nu_{s_1v_2}}{p(s_1,v_2)},
   \end{align}
   define $-1+\frac{\nu_{s_1v_2}}{p(s_1,v_2)} = \log \nu'_{s_1,v_2}$, hence
   \begin{align}
     q^*(t|s_1,v_2) &= \nu'_{s_1,v_2}\prod_{s_2,y} Q(t|y,s_2,v_2)^{p(s_2|s_1,v_2) p(y|t,s_1,s_2,v_2)},
   \end{align}
   and from the constraint $\sum_{t'} q^*(t'|s_1,v_2) = 1$ we get that
   \begin{align}
     \label{eq:110}
     q^*(t|s_1,v_2) = \frac{\prod_{s_2,y} Q(t|y,s_2,v_2)^{p(s_2|s_1,v_2) p(y|t,s_1,s_2,v_2)}}{ \sum_{t'} \prod_{s_2,y} Q(t'|y,s_2,v_2)^{p(s_2|s_1,v_2) p(y|t',s_1,s_2,v_2)}}.
   \end{align}
 \end{proof}

\subsection{Proof of Lemma~\ref{U_w}}
\label{sec:proof-U_w}

The proof for this lemma is done in three steps: first, we prove that $U_w(q_1)$ is greater than or equal to $J_w(q_0,Q_0^*)$ for any two PMFs $q_0(t|s_1,v_2)$ and $q_1(t|s_1,v_2)$, then, we use Lemma~3 and Lemma~5 to state that for the optimal PMF, $q_c(t|s_1,v_2)$, $C^{lb}_{2,w} = J_w(q_c,Q_c^*)$, and, therefore, $U_w(q)$ is an upper bound of $C^{lb}_{2,w}$ for every $q(t|s_1,v_2)$. Thirdly, we prove that $U_w(q)$ converges to $C_{2,w}^{lb}$.
\begin{proof}
  Consider any two PMFs, $q_0(t|s_1,v_2)$ and $q_1(t|s_1,v_2)$, their corresponding $\{p_0(s_1,s_2,v_2,t,y), Q_0^*(t|y,s_2,v_2)\}$ and $\{p_1(s_1,s_2,v_2,t,y), Q_1^*(t|y,s_2,v_2)\}$, respectively, according to (\ref{eq:p}) and (\ref{eq:Q}) and consider also the following inequalities:
  \begin{align}
    \label{eq:113}
     & \sum_{s_1,s_2,v_2,t,y} p_0(s_1,s_2,v_2,t,y) \log \frac{Q_1^*(t|y,s_2,v_2)}{q_1(t|s_1,v_2)} - J_w(q_0,Q_0^*) \nonumber\\
    = & \sum_{s_1,s_2,v_2,t,y} p_0(s_1,s_2,v_2,t,y) \Big(\log \frac{Q_1^*(t|y,s_2,v_2)}{q_1(t|s_1,v_2)} - \log \frac{Q_0^*(t|y,s_2,v_2)}{q_0(t|s_1,v_2)} \Big) \nonumber\\
    = & \sum_{s_1,s_2,v_2,t,y} p_0(s_1,s_2,v_2,t,y) \log \Big( \frac{Q_1^*(t|y,s_2,v_2)}{Q_0^*(t|y,s_2,v_2)} \frac{q_0(t|s_1,v_2)}{q_1(t|s_1,v_2)} \Big) \nonumber\\
    = & \D{q_0(t|s_1,v_2) \big\| q_1(t|s_1,v_2)} - \D{Q_0^*(t|y,s_2,v_2) \big\| Q_1^*(t|y,s_2,v_2)} \nonumber\\
    \stackrel{(a)}{=} & \D{q_0(t|s_1,s_2,v_2) p(y|t,s_1,s_2,v_2) p(s_1,s_2) w(v_2|s_2) \big\| q_1(t|s_1,s_2,v_2) p(y|t,s_1,s_2,v_2) p(s_1,s_2) w(v_2|s_2)} \nonumber \\
    & \quad -  \D{Q_0^*(t|y,s_2,v_2) \big\| Q_1^*(t|y,s_2,v_2)} \nonumber\\
    = & \D{p_0(s_1,s_2,v_2,t,y) \big\| p_1(s_1,s_2,v_2,t,y)} - \D{Q_0^*(t|y,s_2,v_2) \big\| Q_1^*(t|y,s_2,v_2)} \nonumber\\
    \stackrel{(b)}{=} & \D{p_0(s_2,v_2,y) Q_0^*(t|y,s_2,v_2) p_0(s_1|s_2,v_2,t,y) \big\| p_1(s_2,v_2,y) Q_1^*(t|y,s_2,v_2) p_1(s_1|s_2,v_2,t,y)} \nonumber\\ &\quad - \D{Q_0^*(t|y,s_2,v_2) \big\| Q_1^*(t|y,s_2,v_2)} \nonumber\\
    = & \D{p_0(s_2,v_2,y) \big\| p_1(s_2,v_2,y)} + \D{p_0(s_1|s_2,v_2,t,y) \big\| p_1(s_1|s_2,v_2,t,y)} \nonumber\\
    \stackrel{(c)}{=} & \geq 0,
  \end{align}
where $\D{ \cdot \big\| \cdot}$ is the K-L divergence, $p_j(s_2,v_2,y)$ and $p_j(s_1|s_2,v_2,t,y)$ are marginal distributions of $p_j(s_1,s_2,v_2,t,y)$ for $j=0,1$, $(a)$ follows from the fact that $T$ is independent of $S_2$ given $(S_1,V_2)$ and from the K-L divergence properties, $(b)$ follows from the fact that $Q_j^*(t|y,s_2,v_2)$ is a marginal distribution of $p_j(s_1,s_2,v_2,t,y)$ for $j=0,1$ and $(c)$ follows from the fact that $\D{\cdot \big\| \cdot } \geq 0$ always.

Thus,
\begin{align}
  \label{eq:911}
  J(q_0,Q_0^*) \leq & \sum_{s_1,s_2,v_2,t,y} p_0(s_1,s_2,v_2,t,y) \log \frac{Q_1^*(t|y,s_2,v_2)}{q_1(t|s_1,v_2)} \nonumber\\
  = & \sum_{s_1,s_2,v_2,t,y} p(s_1,s_2)w(v_2|s_2)q_0(t|s_1,v_2)p(y|t, s_1,s_2,v_2) \log \frac{Q_1^*(t|y,s_2,v_2)}{q_1(t|s_1,v_2)} \nonumber\\
  = & \sum_{s_1,v_2} p(s_1,v_2) \sum_t q_0(t|s_1,v_2) \sum_{s_2} p(s_2|s_1,v_2) \sum_y p(y|t,s_1,s_2,v_2) \log \frac{Q_1^*(t|y,s_2,v_2)}{q_1(t|s_1,v_2)} \nonumber\\
  \leq & \sum_{s_1,v_2} p(s_1,v_2) \max_{t'} \sum_{s_2} p(s_2|s_1,v_2) \sum_y p(y|t',s_1,s_2,v_2) \log \frac{Q_1^*(t'|y,s_2,v_2)}{q_1(t'|s_1,v_2)} \nonumber\\
  = & U_w(q_1).
\end{align}
We proved that $U_w(q_1)$ is greater than or equal to $J_w(q_0,Q_0^*)$ for any choice of $q_0(t|s_2,v_2)$ and $q_1(t|s_1,v_2)$. Therefore, by taking $q_0(t|s_1,v_2)$ to be the distribution that achieves $C_{2,w}^{lb}$ and by considering Lemma~3 and Lemma~5, we conclude that $U_w(q) \geq C_{w,2}$ for any choice of $q(t|s_1,v_2)$.

In order to prove that $U_w(q)$ converges to $C_{2,w}^{lb}$ let us rewrite equation (\ref{eq:nu}) as
  \begin{align}
    \label{eq:112}
    \sum_{s_2,y} p(s_2|s_1,v_2) p(y|t,s_1,s_2,v_2) \log \frac{Q(t|y,s_2,v_2)}{q^*(t|s_1,v_2)} = \nu'_{s_1,v_2}.
  \end{align}
  We can see that for a fixed $Q$, the right hand side of the equation is independent of $t$. Considering also
  \begin{align}
    \label{eq:8}
    J_w(q,Q) =& \sum_{s_1,s_2,v_2,t,y} p(s_1,s_2)w(v_2|s_2)q(t|s_1,v_2)p(y|t,s_1,s_2,v_2) \log \frac{Q^(t|y,s_2,v_2)}{q(t|s_1,v_2)} \nonumber\\
    \leq& \sum_{s_1,v_2}p(s_1,v_2) \max_{t'} \sum_{s_2} p(s_2|s_1,v_2) \sum_y p(y|t',s_1,s_2,v_2) \log \frac{Q^*(t'|y,s_2,v_2)}{q(t'|s_1,v_2)},
  \end{align}
   we can conclude that the equation holds when the PMF $q$ is the PMF that achieves $C_{2,w}^{lb}$.
\end{proof}

 \bibliographystyle{IEEEtran}
 \bibliography{Article-AvihayShirazi}

\begin{thebibliography}{10}
\providecommand{\url}[1]{#1}
\csname url@samestyle\endcsname
\providecommand{\newblock}{\relax}
\providecommand{\bibinfo}[2]{#2}
\providecommand{\BIBentrySTDinterwordspacing}{\spaceskip=0pt\relax}
\providecommand{\BIBentryALTinterwordstretchfactor}{4}
\providecommand{\BIBentryALTinterwordspacing}{\spaceskip=\fontdimen2\font plus
\BIBentryALTinterwordstretchfactor\fontdimen3\font minus
  \fontdimen4\font\relax}
\providecommand{\BIBforeignlanguage}[2]{{%
\expandafter\ifx\csname l@#1\endcsname\relax
\typeout{** WARNING: IEEEtran.bst: No hyphenation pattern has been}%
\typeout{** loaded for the language `#1'. Using the pattern for}%
\typeout{** the default language instead.}%
\else
\language=\csname l@#1\endcsname
\fi
#2}}
\providecommand{\BIBdecl}{\relax}
\BIBdecl

\bibitem{1055508}
A.~Wyner and J.~Ziv, ``The rate-distortion function for source coding with side
  information at the decoder,'' \emph{Information Theory, IEEE Transactions
  on}, vol.~22, no.~1, pp. 1 -- 10, jan 1976.

\bibitem{4608994}
Y.~Steinberg, ``Coding for channels with rate-limited side information at the
  decoder, with applications,'' \emph{Information Theory, IEEE Transactions
  on}, vol.~54, no.~9, pp. 4283 --4295, sept. 2008.

\bibitem{citeulike:437050}
S.~I. Gel'fand and M.~S. Pinsker, ``Coding for channel with random
  parameters,'' \emph{Problems of Control Theory}, vol.~9, no.~1, pp. 19--31,
  1980.

\bibitem{1662378}
C.~E. Shannon, ``Channels with side information at the transmitter,'' \emph{IBM
  J. Res. Dev.}, vol.~2, no.~4, pp. 289--293, 1958.

\bibitem{DBLP:journals/tit/HeegardG83}
C.~Heegard and A.~A.~E. Gamal, ``On the capacity of computer memory with
  defects,'' \emph{IEEE Transactions on Information Theory}, vol.~29, no.~5,
  pp. 731--739, 1983.

\bibitem{Cover:2006:DCC:2263239.2266915}
\BIBentryALTinterwordspacing
T.~M. Cover and M.~Chiang, ``Duality between channel capacity and rate
  distortion with two-sided state information,'' \emph{IEEE Trans. Inf.
  Theor.}, vol.~48, no.~6, pp. 1629--1638, Sep. 2006. [Online]. Available:
  \url{http://dx.doi.org/10.1109/TIT.2002.1003843}
\BIBentrySTDinterwordspacing

\bibitem{1424319}
A.~Rosenzweig, Y.~Steinberg, and S.~Shamai, ``On channels with partial channel
  state information at the transmitter,'' \emph{Information Theory, IEEE
  Transactions on}, vol.~51, no.~5, pp. 1817 -- 1830, may 2005.

\bibitem{4385766}
Y.~Cemal and Y.~Steinberg, ``Coding problems for channels with partial state
  information at the transmitter,'' \emph{Information Theory, IEEE Transactions
  on}, vol.~53, no.~12, pp. 4521 --4536, dec. 2007.

\bibitem{1454716}
G.~Keshet, Y.~Steinberg, and N.~Merhav, ``Channel coding in the presence of
  side information,'' \emph{Found. Trends Commun. Inf. Theory}, vol.~4, no.~6,
  pp. 445--586, 2007.

\bibitem{DBLP:journals/tit/Kaspi85}
A.~H. Kaspi, ``Two-way source coding with a fidelity criterion,'' \emph{IEEE
  Transactions on Information Theory}, vol.~31, no.~6, pp. 735--740, 1985.

\bibitem{5466529}
H.~Permuter, Y.~Steinberg, and T.~Weissman, ``Two-way source coding with a
  helper,'' \emph{Information Theory, IEEE Transactions on}, vol.~56, no.~6,
  pp. 2905 --2919, june 2010.

\bibitem{DBLP:journals/tit/WeissmanG06}
T.~Weissman and A.~E. Gamal, ``Source coding with limited-look-ahead side
  information at the decoder,'' \emph{IEEE Transactions on Information Theory},
  vol.~52, no.~12, pp. 5218--5239, 2006.

\bibitem{DBLP:journals/tit/WeissmanM05}
T.~Weissman and N.~Merhav, ``On causal source codes with side information,''
  \emph{IEEE Transactions on Information Theory}, vol.~51, no.~11, pp.
  4003--4013, 2005.

\bibitem{Shannon59}
C.~E. Shannon, ``Coding theorems for a discrete source with a fidelity
  criterion,'' vol. 7, part 4, pp. 142--163, Mar. 1959.

\bibitem{DBLP:journals/tit/PradhanCR03}
S.~S. Pradhan, J.~Chou, and K.~Ramchandran, ``Duality between source coding and
  channel coding and its extension to the side information case,'' \emph{IEEE
  Transactions on Information Theory}, vol.~49, no.~5, pp. 1181--1203, 2003.

\bibitem{Zamir:2006:NLC:2263239.2266893}
\BIBentryALTinterwordspacing
R.~Zamir, S.~Shamai, and U.~Erez, ``Nested linear/lattice codes for structured
  multiterminal binning,'' \emph{IEEE Trans. Inf. Theor.}, vol.~48, no.~6, pp.
  1250--1276, Sep. 2006. [Online]. Available:
  \url{http://dx.doi.org/10.1109/TIT.2002.1003821}
\BIBentrySTDinterwordspacing

\bibitem{911306}
J.~Su, J.~Eggers, and B.~Girod, ``Illustration of the duality between channel
  coding and rate distortion with side information,'' in \emph{Signals, Systems
  and Computers, 2000. Conference Record of the Thirty-Fourth Asilomar
  Conference on}, vol.~2, 29 2000-nov. 1 2000, pp. 1841 --1845 vol.2.

\bibitem{Blahut72computationof}
R.~E. Blahut, ``Computation of channel capacity and rate-distortion
  functions,'' \emph{IEEE Trans. Inform. Theory}, vol.~18, pp. 460--473, 1972.

\bibitem{Arimoto1972a}
S.~Arimoto, ``An algorithm for computing the capacity of arbitrary discrete
  memorylesschannels,'' \emph{IEEE Trans. Inform. Theory}, vol.~18, pp. 14--20,
  1972.

\bibitem{Willems1983}
F.~M.~J. Willems, ``Computation of the wyner-ziv rate-distortion function,''
  Research Report, July 1983.

\bibitem{1365218}
F.~Dupuis, W.~Yu, and F.~Willems, ``Blahut-arimoto algorithms for computing
  channel capacity and rate-distortion with side information,'' in
  \emph{Information Theory, 2004. ISIT 2004. Proceedings. International
  Symposium on}, june-2 july 2004, p. 179.

\bibitem{DBLP:journals/tit/ChengSX05}
S.~Cheng, V.~Stankovic, and Z.~Xiong, ``Computing the channel capacity and
  rate-distortion function with two-sided state information,'' \emph{IEEE
  Transactions on Information Theory}, vol.~51, no.~12, pp. 4418--4425, 2005.

\bibitem{5205855}
O.~Sumszyk and Y.~Steinberg, ``Information embedding with reversible
  stegotext,'' in \emph{Information Theory, 2009. ISIT 2009. IEEE International
  Symposium on}, 28 2009-july 3 2009, pp. 2728 --2732.

\bibitem{DBLP:journals/tit/NaissP13}
I.~Naiss and H.~H. Permuter, ``Extension of the blahut-arimoto algorithm for
  maximizing directed information,'' \emph{IEEE Transactions on Information
  Theory}, vol.~59, no.~1, pp. 204--222, 2013.

\bibitem{Chiang04geometricprogramming}
M.~Chiang, S.~Boyd, and A.~Overview, ``Geometric programming duals of channel
  capacity and rate distortion,'' \emph{IEEE Trans. Inform. Theory}, vol.~50,
  pp. 245--258, 2004.

\bibitem{DBLP:journals/tit/NaissP13a}
I.~Naiss and H.~H. Permuter, ``Computable bounds for rate distortion with feed
  forward for stationary and ergodic sources,'' \emph{IEEE Transactions on
  Information Theory}, vol.~59, no.~2, pp. 760--781, 2013.

\bibitem{cove_thom_91}
T.~M. Cover and J.~A. Thomas, \emph{Elements of Information Theory}.\hskip 1em
  plus 0.5em minus 0.4em\relax John Wiley \& sons, 1991.

\bibitem{Boyd:2004:CO:993483}
S.~Boyd and L.~Vandenberghe, \emph{Convex Optimization}.\hskip 1em plus 0.5em
  minus 0.4em\relax New York, NY, USA: Cambridge University Press, 2004.

\bibitem{Yeung:2008:ITN:1457455}
R.~W. Yeung, \emph{Information Theory and Network Coding}, 1st~ed.\hskip 1em
  plus 0.5em minus 0.4em\relax Springer Publishing Company, Incorporated, 2008.

\bibitem{Berger}
T.~Berger, ``Multiterminal source coding,'' in \emph{Information Theory
  Approach to Communications}, G.~Longo, Ed.\hskip 1em plus 0.5em minus
  0.4em\relax CSIM Course and Lectures, 1978, pp. 171--231.

\bibitem{nla:cat:vn1486008}
I.~Csiszar and J.~Korner, \emph{Information theory : Coding theorems for
  discrete memoryless systems}.\hskip 1em plus 0.5em minus 0.4em\relax Academic
  Press ; Akademiai Kiado, New York : Budapest, 1981.

\end{thebibliography}

\end{document}